\documentclass[11pt]{article}

\newif\ifauthornote
\authornotefalse

\usepackage[margin=1in]{geometry}
\usepackage{amsmath,amssymb,amsthm,mathtools}
\usepackage{enumitem}
\usepackage{microtype}
\usepackage{xspace}
\usepackage{xcolor}
\definecolor{darkgreen}{RGB}{0,100,0}
\usepackage[bookmarks=true,pdfstartview=FitH,colorlinks,linkcolor=darkgreen,filecolor=darkgreen,citecolor=darkgreen,urlcolor=darkgreen]{hyperref}

\usepackage[capitalize,noabbrev]{cleveref}

\newtheorem{theorem}{Theorem}
\newtheorem{lemma}[theorem]{Lemma}

\newtheorem{fact}[theorem]{Fact}
\theoremstyle{definition}

\theoremstyle{remark}
\newtheorem{remark}[theorem]{Remark}

\newcommand{\name}{{\sf Otter}\xspace}
\newcommand{\R}{\mathbb{R}}

\newcommand{\Yside}{\mathcal{Y}}
\newcommand{\Xside}{\mathcal{X}}
\newcommand{\Util}{\mathsf{Util}}

\newcommand{\netYSale}{-\Delta_Y}
\newcommand{\netXSale}{-\Delta_X}

\newcommand{\playerids}{\mathcal{I}}
\newcommand{\reportv}[1]{\ensuremath{\widetilde{v}_{#1}}\xspace}
\newcommand{\reportq}[1]{\ensuremath{\widetilde{q}_{#1}}\xspace}
\newcommand{\reportu}[1]{\ensuremath{\widetilde{u}_{#1}}\xspace}
\newcommand{\reportr}[1]{\ensuremath{\widetilde{r}_{#1}}\xspace}

\ifauthornote
\newcommand{\elaine}[1]{{\color{red}[elaine: #1]}\xspace}
\else
\newcommand{\elaine}[1]{}
\fi

\title{Tight Characterizations of AMM Mechanism Design \\under Weak Efficiency\thanks{The order of authorship is randomized. The random order was generated using the
American Economic Association's Author Randomization Tool, and can be verified with
confirmation code \texttt{kmt-XxqFQO\_w} at
\url{https://www.aeaweb.org/journals/policies/random-author-order/search?RandomAuthorsSearch\%5Bsearch\%5D=kmt-XxqFQO_w}.}}
\newcommand{\authorbreak}{%
  \end{tabular}\par\vspace{0.5em}%
  \begin{tabular}[t]{c}%
}
\author{
Chengkai Zhang\thanks{Rutgers University. \texttt{zhangchengkai2002@gmail.com}.}
\and Lirong Xia\thanks{Rutgers University. \texttt{lirong.xia@rutgers.edu}.}
\and Elaine Shi\thanks{Carnegie Mellon University. \texttt{elaineshi@cmu.edu}.}
\and Mengqian Zhang\thanks{Carnegie Mellon University. \texttt{mengqianzhang@cmu.edu}.}
\authorbreak Qishen Han\thanks{Rutgers University. \texttt{Hnickc2017@gmail.com}.}
\and Yuhao Li\thanks{Columbia University and Carnegie Mellon University. \texttt{yuhaoli@cs.columbia.edu}.}
}
\date{}

\begin{document}
\maketitle

\begin{abstract}
Miner extractable value (MEV) poses a persistent challenge for blockchain ecosystems.
It arises from the fact that a block builder can exploit its unilateral
control over transaction inclusion and ordering to extract profit,
potentially at users' expense. Recent work has explored 
\emph{consensus-aided mechanism design} to mitigate MEV.
This approach combines generic consensus-layer guarantees
with application-layer mechanisms that use these guarantees to address
application-specific MEV problems. 

Prior work~\cite{ammmechdesign} exemplifies this approach for automated market makers (AMMs).
When the consensus layer offers fair sequencing, which orders transactions according to their
arrival times, their proposed mechanism achieves incentive compatibility
(IC) and weak local efficiency (wLE)~\cite{ammtrilemma}. Informally, IC rules out profitable strategic deviations by users and builders,
including builders with intrinsic trading demand. Meanwhile, wLE ensures that after execution, 
any order willing to trade at the initial price cannot benefit from further trade with the pool at the ending marginal price.

In this work, we resolve the main open question left by prior works~\cite{ammmechdesign,ammtrilemma}: whether IC and wLE can
be achieved under plain consensus (e.g., current Ethereum), without fair sequencing. 
We prove that no deterministic, anonymous two-asset batch AMM can achieve both properties. 
The impossibility holds even when the consensus layer guarantees censorship resilience and block space is
unlimited, establishing the first separation between censorship resilience
and fair sequencing in the context of decentralized mechanism design.

Our impossibility proof combines three strategic capabilities: creating Sybil
identities, over-reporting budgets, and misreporting values. 
We further show the tightness of our impossibility result: removing any one from the strategy space restores feasibility under censorship resilience. 
Specifically, we construct a mechanism achieving IC and wLE for each restricted setting. 

\end{abstract}

\section{Introduction}
\label{sec:intro}

Blockchains have made it possible to build financial applications whose
rules are enforced by a smart contract rather than a trusted intermediary.
A prominent application is the Automated Market Maker (AMM): a contract 
that holds a liquidity pool of two assets, $X$ and $Y$, and allows users
to exchange one for the other according to a prescribed pricing curve.
For example, the widely adopted constant-product rule keeps the product of
the pool's reserves fixed. If the pool initially holds $x_0$ units of
$X$ and $y_0$ units of $Y$, a user who pays $y>0$ units of $Y$ receives
$x$ units of $X$ such that
\[
    (x_0-x)(y_0+y)=x_0y_0.
\]
This rule automates liquidity provision and pricing, allowing
users to trade against the pool without matching buyers and
sellers. 
It does not, however, make the
resulting market immune to strategic behaviors, some of which are particularly salient in decentralized systems.

A central concern is the phenomenon of Miner Extractable Value
(MEV), whereby a block builder exploits its
unilateral control over the inclusion and ordering of transactions
within a block to extract profit. 
Numerous prior studies~\cite{highfreq-dex, just-in-time-defi, torres2021frontrunner, darkforest, li2023demystifying} 
have documented pervasive MEV extraction in practice. 
In a sandwich attack, for example,
a builder buys an asset immediately before a user's purchase and sells it
immediately afterward, profiting from the price increase caused by the
user's trade. MEV not only harms ordinary users but also encourages
builders to offer favorable transaction ordering through private agreements
with traders, which has reinforced centralization in the builder
market~\cite{buildercentral}.

To mitigate MEV, recent works~\cite{ammmechdesign,ammtrilemma,otter} 
have explored the noval approach of \emph{consensus-aided mechanism design},
which combines consensus-layer guarantees with application-layer
mechanism design. Rather than
asking the consensus layer to understand and eliminate every
application-specific MEV opportunity, one can ask it to provide
generic guarantees that an application uses in its mechanism design. 
Chan, Wu, and Shi~\cite{ammmechdesign} exemplify this approach for AMMs.
Under a \emph{fair-sequencing} consensus guarantee, their mechanism achieves incentive compatibility (IC) 
as understood in decentralized mechanism design, a substantially stronger requirement than standard game-theoretic IC: 
no user, builder, or builder with an intrinsic demand to trade acting as a user (which we henceforth refer to as a \emph{builder-as-user}) 
can improve its utility by misreporting its trading demand or valuation, 
or by submitting additional orders through pseudonymous identities (i.e., a Sybil attack).
In particular, this IC guarantee effectively prevents sandwich attacks: an attacker with no
intrinsic trading demand cannot earn risk-free profit by inserting front-running
and back-running orders. 

Their mechanism~\cite{ammmechdesign} also achieves a natural notion of efficiency called \emph{weak local efficiency} (wLE)~\cite{ammtrilemma}:
an \emph{eligible order}, one willing to trade at the initial price, 
may remain partially or entirely unfilled only
if further trade with the pool at the ending marginal price offers no 
strictly positive surplus. Indeed, Li, Shi, and Zhang~\cite{ammtrilemma}
show that stronger efficiency notions, including local efficiency for
all orders and Pareto optimality, are incompatible with IC even under fair sequencing.

The fair-sequencing model abstracts a consensus layer that orders
transactions according to their arrival times at the sequencer.
It is motivated by the community's recent efforts to develop decentralized sequencers~\cite{decentral-seq,espresso-seq}
and practical fair-ordering protocols~\cite{orderfair00,orderfair01,orderfair02,orderfair03,orderfair04}.
From the builder's perspective, fair sequencing prevents it from
censoring transactions or freely choosing their ordering.\footnote{Under fair
sequencing, a builder cannot alter the arrival times of other users'
orders. It may delay its own orders but cannot assign them earlier arrival
times. A strategic participant with a faster network may nevertheless 
submit an order after observing another user's transaction and have its order arrive first. 
Thus, fair sequencing alone does not eliminate front-running.}
Without these additional restrictions, a builder can choose both the transaction inclusion 
and ordering of its block. We call the latter \emph{plain consensus model}, which more closely reflects the single-builder block
construction used in mainstream consensus protocols.

An important open question left by prior works~\cite{ammmechdesign,ammtrilemma} is whether IC and wLE can be achieved in the plain consensus, 
without relying on fair sequencing. 
This question probes the necessity of consensus for the consensus-aided mechanism
design approach: \emph{is consensus-layer support essential, or can application-layer
mechanism design alone provide the same guarantees?}

\subsection{Our Results and Contributions}
Following prior work, we study a two-asset \emph{batch AMM} that processes
the orders in a block jointly. We require \emph{pricing curve conformity}:
the pool's state after the batch must lie on its prescribed pricing curve,
and all tokens paid out by the pool must be distributed among the orders
in the batch. This requirement applies to the batch as a whole; individual
trades need not follow the curve in isolation. The mechanism by \cite{ammmechdesign} establishes on this model, 
and their mechanism satisfies the pricing curve conformity condition.\footnote{Instead, in the familiar model of AMMs such as Uniswap, swaps are executed one at a time: each swap moves the pool's reserves from one point on the pricing curve to another. The reserves must therefore lie on the curve after every individual swap. See more discussion on this relaxition in \Cref{sec:relatedwork}.} 

Our main contribution is to resolve the main open question left in~\cite{ammmechdesign,ammtrilemma}: we establish a strong impossibility of IC + wLE AMM mechanisms under plain consensus. Indeed, the
impossibility persists even if the consensus layer
guarantees censorship resilience and block space is unlimited.
\begin{theorem}
No deterministic AMM satisfying individual rationality, pricing
curve conformity, and anonymity can simultaneously satisfy incentive
compatibility and weak local efficiency in the plain consensus. The impossibility continues to hold even if the consensus layer
guarantees censorship resilience and block space is unlimited.
\end{theorem}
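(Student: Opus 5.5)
We argue by contradiction. Fix a deterministic, anonymous, individually rational AMM that satisfies pricing curve conformity, IC, and wLE under censorship-resilient plain consensus with unlimited block space. Censorship resilience forces the builder to include every honest order. It still controls the ordering and may inject its own (possibly Sybil) orders. Because the mechanism is anonymous and batch-based, ordering is irrelevant. So the only remaining levers are the three strategic capabilities named in the abstract: injecting Sybil orders, over-reporting budgets, and misreporting values. The plan is to build a small family of order profiles, with one honest buy order on the $Y$ side and at most a few injected orders, and chain IC constraints across them until they contradict wLE.

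\textbf{Step 1 (what wLE forces on a single order).} Take an honest order that is eligible at the initial price $p_0$, say a buyer of $X$ with value $v>p_0$ and budget $q$. wLE together with curve conformity pins down the outcome. Either the order is fully filled, or the pool moves to an ending marginal price $p_1$ with $p_1\ge v$, i.e.\ the pool price is pushed up to the order's value. Combined with individual rationality, this gives a lower bound on the amount traded as a function of the reported $(v,q)$.

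\textbf{Step 2 (IC pins down payments).} We use the fact that a builder-as-user can shade its reported value or inflate its reported budget. Applying a Myerson-style monotonicity and payment-identity argument along the value dimension, we derive that the payment of a single eligible order must equal the integral of the allocation rule. Budget over-reporting is then used to show that the allocation cannot depend on the budget in a way that gives over-reporters extra surplus. Together these determine the single-order outcome, essentially the curve-conforming swap up to price $v$.

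\textbf{Step 3 (the Sybil attack).} Now consider a builder with no intrinsic demand. It injects a Sybil order on the opposite side, or an eligible order on the same side, around the honest order. By anonymity, the mechanism cannot distinguish these from real users, so wLE obliges it to treat the batch $\{$honest, Sybil$\}$ as a genuine two-order profile. Using Step 2 for each order and curve conformity for the batch total, we compute the builder's net gain, i.e.\ the Sybil's payoff plus any fee revenue. We then show it is strictly positive for suitable parameters, such as a Sybil with a huge reported budget and value just above $p_0$ that trades back after the honest order has moved the price. This contradicts builder IC. Equivalently, we can present it as a user who benefits by splitting into two identities.

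\textbf{Main obstacle.} The hard part is Step 3: choosing concrete parameters so that wLE leaves no slack. The mechanism might otherwise ration the Sybil to zero, or charge it its full value, neutralizing the attack. We must also handle the fact that only the batch, not each trade, needs to lie on the curve. My first attempt is to take the honest order with budget exactly large enough to move the price to $v$. I would then add a Sybil eligible order with value $v'$ slightly above $p_0$ and an unbounded reported budget. wLE then forces the Sybil to be at least partially filled at prices below $v'$, while the honest order's trade pushes the ending price toward $v$. Finally, I would use the convexity of the curve to show that the aggregate payment extracted from the pair is strictly less than the Sybil's resale value.
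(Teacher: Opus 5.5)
\textbf{There is a genuine gap.} Your Step 3, a risk-free gain for a builder with no intrinsic demand, cannot work as planned, and the problem is not parameter choice. Look at what you intend to feed into the gain computation: wLE, individual rationality, pricing curve conformity, anonymity, and the single-identity IC consequences of Steps 1--2 (value shading and budget inflation by one order). All of these hold simultaneously for the paper's equal-share hand-in-hand mechanism (Theorem~\ref{thm:tsf-no-sybil}), which is IC against every single-identity deviation. That mechanism also satisfies the portfolio price bound $\Delta_X(\playerids)+\sigma_0\Delta_Y(\playerids)\le 0$ for every finite collection of identities (Lemma~\ref{lem:tsf-settlement}). So no demand-less builder ever obtains a risk-free gain there, whatever orders it injects, and no argument built from those ingredients can exhibit one.

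Several related problems point the same way. Step 2 only determines a lone order's trade against the pool. It says nothing about how a shared pool output is split among several orders, and that split is exactly what your gain computation needs and what differs across valid mechanisms (equal-share vs.\ proportional). The sandwich picture of trading back after the honest order has moved the price has no counterpart in a batch settlement with no fees and no resale; batch AMMs are known to achieve arbitrage resilience even under plain consensus. Finally, anonymity as defined here lets outcomes depend on the relative order of bids, so ordering is not irrelevant. The paper uses the builder's reordering power plus IC to derive economic symmetry (Fact~\ref{fact:ic-economic-symmetry}), a tool your plan discards but the proof needs.

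\textbf{The missing idea.} The contradiction must come from users with intrinsic demand who use all three capabilities together, and it is a surplus-counting contradiction, not a risk-free profit. The paper first proves, with considerable work, that when all asks are strictly eligible the smaller-budget side trades its full budget at the initial price (Lemma~\ref{lem:nc-short-side}). This lets a sell-$Y$ user over-report its budget, split it into two identical zero-value sell-$Y$ orders, and buy back the excess with an opposite-direction order at the initial price. Economic symmetry among the three identical orders then caps the user's share of the curve loss $C(a)=a-F(a)$ at $\tfrac23$ (Lemma~\ref{lem:nc-buyback}). Now take the profile $Y(\tfrac56,20K),Y(\tfrac56,20K),X(0,30K)$, with $K$ the normalized $X$ reserve. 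Reporting value zero, combined with this loss cap and a budget-reduction comparison that forces $a<2K$, gives each sell-$Y$ user a deviation worth more than $8K/3$. IC then forces their truthful utilities to sum to more than $16K/3$. But conformity and $F<K$ cap that sum at $31K/6$, which is the contradiction.
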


Censorship resilience, compared to plain consensus, removes the builder's powerful ability to exclude transactions. 
Concerns about centralization in block building have motivated efforts to actively 
pursue censorship resilience at the consensus layer, including Ethereum's planned adoption 
of FOCIL~\cite{focil} and exploration into other approaches~\cite{mcp,latencyCR}.
Our result nevertheless shows that, within our model,  merely protecting inclusion alone is insufficient 
to reconcile IC with even weak local efficiency.

Recently, Shi, Zhang, Chung, and Li~\cite{otter} studies AMM mechanism design in a new paradigm 
that allows \emph{surplus redistribution}: some tokens output
by the pool may be directed to the broader community rather
than to orders in the current batch. In our terminology, this
relaxes the exact-settlement requirement in pricing curve
conformity, while still requiring the pool's final reserves
to lie on the prescribed curve. They propose a mechanism called~\name 
that achieves IC and maximizes social welfare under censorship-resilient consensus and uncongested block space. They also show that such a mechanim is impossible when block space is finite and the builder can censor orders.

Conceptually, our result establishes the necessity of relaxing
the exact-settlement requirement: even under censorship-resilient
consensus, retaining this requirement makes IC incompatible with even wLE.
This constitutes a fundamental insight for AMM mechanism
design, extending beyond the particular construction
of~\name: any mechanism seeking MEV-resilience and reasonable efficiency achievement in the
same setting must depart from exact settlement.

Technically, our result establishes the first separation between
censorship resilience and fair sequencing: within our model, IC and wLE are incompatible under the former,
whereas prior work achieves both under the latter~\cite{ammmechdesign}.
To the best of our knowledge, this is the first such separation
in decentralized mechanism design, showing that fair sequencing is strictly more powerful than censorship resilience in a precise sense.\footnote{In the lieterature, a number of impossibility results have been established
in decentralized mechanism design, including for transaction fee
mechanisms~\cite{foundation-tfm,collusion-tfm} and AMM mechanisms~\cite{ammtrilemma,otter,ammclinch}. These impossibility results either hold only under plain consensus: censorship
resilience together with uncongested block space restores
feasibility~\cite{foundation-tfm,collusion-tfm,otter}, or hold under fair
sequencing or even in permissioned settings~\cite{ammtrilemma,ammclinch}.
In contrast, our impossibility persists under censorship resilience
and uncongested block space but disappears under fair sequencing,
thereby separating these two consensus guarantees for the first time.}

\paragraph{Tightness with respect to strategic capabilities.}
Looking ahead, our impossibility proof combines three devitation capabilities: creating Sybil
identities, over-reporting budgets, and misreporting values.
We show that all of them are necessary for the impossibility, i.e., we have to use all of them simultaneously. 
Imposing any of the following restrictions on strategy space restores feasibility of IC and wLE, together with other natural properties, under censorship resilience.
\begin{enumerate}[leftmargin=6mm,itemsep=3pt]
    \item \emph{No Sybil identities.}
    Each user submits at most one order, but may still misreport its
    value and budget, including over-reporting its budget, and choose
    either trading direction. An equal-share \emph{hand-in-hand}
    mechanism achieves all desired properties in this setting
    (Theorem~\ref{thm:tsf-no-sybil}).

    \item \emph{No aggregate budget over-reporting.}
    Users may create Sybil identities and misreport their values, but the
    aggregate declared budget in their true trading direction cannot
    exceed their true budget. A \emph{proportional hand-in-hand}
    mechanism achieves all desired properties
    (Theorem~\ref{thm:tsf-no-overreport}).

    \item \emph{Public values.}
    A user's identities in its true trading direction must share
    its public value. Budgets remain private and may be
    overreported. A \emph{layered mechanism}, processing orders in
    increasing order of their asks, achieves all desired properties
    (Theorem~\ref{thm:tsf-public-values}).
\end{enumerate}
The latter two settings still permit submitting fake bids in both directions,
with unrestricted asks and declared budgets for opposite-direction
orders. In particular, the public-value result does not assume that
budgets are public or that opposite-direction asks are fixed by the
user's public value. 

We note that when aggregate budget over-reporting across the Sybil identities 
in the true trading direction is prohibited (setting for the item 2 above), a recent work \cite{ammclinch} obtains an IC mechanism with even stronger efficiency guarantee. 
Their AMM mechanism, inspired by the classic work of adaptive clinching~\cite{adaptiveclinch}, satisfies both IC and weak Pareto optimality for eligible bids, 
which they also prove to be essentially the best possible efficiency guarantee in this setting. Our mechanism, by contrast, is substantially simpler and is intended primarily to establish feasibility rather than to offer an economically compelling design.

Technically, all three of our feasibility constructions share a two-sided matching component. 
We compare the aggregate reported budgets of eligible orders in common units 
at the initial exchange rate. Eligible orders on the smaller-budget side 
trade their full budgets at that price, fixing their outcomes. 
The remaining orders are handled by a one-sided mechanism 
using an augmented compensation curve that incorporates these matched trades 
before following the pool's pricing curve. 
The idea of initial-price matching already appears in~\cite{ammmechdesign}, and
the augmented-curve formulation is explicit in~\cite{otter}.
Our three constructions make explicit a useful division of the design
problem into two components: initial-price matching across the two sides,
and a one-sided allocation rule tailored to the setting.
Lemma~\ref{lem:tsf-settlement} establishes the corresponding settlement
guarantees for the properties studied here, although incentives still require
analyzing the complete two-sided mechanism.  More broadly, we expect initial-price matching to become a standard
starting point for batch-AMM mechanism design, which allows concentration on the one-sided problem.

\subsection{Additional Related Work}\label{sec:relatedwork}
\paragraph{Sequential execution vs. batch settlement.}
Ferreira and Parkes~\cite{credible-ex} study verifiable sequencing rules for AMMs that execute trades sequentially, 
requiring the pool to remain on its prescribed pricing curve after every trade.
They show that, even for constant-product AMMs, every sequencing rule admits an initial pool state 
and a transaction profile under which the builder can obtain risk-free profit. 
Thus, in their model, it is impossible to have an AMM satisfying arbitrage resilience, a property weaker than IC. 
Later work by Chan, Wu, and Shi~\cite{ammmechdesign} relaxes the constraint by considering batch AMMs, 
where the pricing curve need only be preserved after the entire batch. 
This flexibility allows their mechanism to achieve arbitrage resilience under plain consensus, circumventing the earlier impossibility.
From this perspective, our work asks whether batch settlement also permits the stronger guarantee of IC. 
Since a mechanism that rejects every order is trivially incentive compatible, we additionally require weak local efficiency.
Our impossibility shows that, within our model, even batch settlement cannot reconcile IC with wLE.

Several routes to feasibility nevertheless remain. Our three constructions recover IC and wLE by restricting the strategy space. 
\name~\cite{otter} demonstrates the potential of relaxing exact settlement through surplus redistribution 
under censorship resilience. 
Alternatively, fair sequencing provides stronger consensus-layer support under which IC and wLE can coexist~\cite{ammmechdesign}. 

\paragraph{Batch clearing at a uniform price.}
A line of work has explored batch trading, particularly batch clearing 
at a uniform price~\cite{batchamm00,batchamm01,cow,ramseyer2023augmenting}. 
By eliminating price differences within a batch, uniform pricing prevents internal arbitrage 
and standard sandwich attacks within the batch. 
However, uniform pricing alone does not guarantee incentive compatibility: 
when the clearing price depends on orders within the batch, participants may benefit from misreporting 
or manipulating which orders are included. 
For example, Zhang et al.~\cite{batchnotic} show that builders can extract MEV 
by selectively excluding orders from a batch and characterize their profit-maximizing strategies. 
Li, Shi, and Zhang~\cite{ammtrilemma} establish a further limitation: 
IC, wLE, and uniform pricing are incompatible even under weak fair sequencing.

\paragraph{Other MEV mitigations.}
Researchers have also explored other MEV mitigations. 
Encrypted-mempool designs~\cite{encryptmempool00,encryptmempool01,batchdec00,trx,batchdec01,batchdec02,batchdec03,batchdec04} 
keep transaction contents private until
finalization, limiting front-running based on pending transactions.
Private-submission services such as Flashbots Protect~\cite{flashbotsProtect} 
keep transactions out of the public mempool to reduce exposure to front-running and sandwich attacks.
Order-flow auctions such as MEV-Share~\cite{flashbotsMEVShare} selectively
disclose transaction information to searchers and return part of the
resulting MEV to users. Proposer-builder separation, as implemented through
MEV-Boost~\cite{flashbotsMEVBoost}, allows proposers to obtain blocks and
payments from competing builders. These approaches address different
sources or consequences of MEV under their respective information and
trust assumptions. Our work studies whether an AMM mechanism can
jointly guarantee IC and wLE under specified consensus guarantees.

\section{Model}
\elaine{state we consider deterministic amm mech earlier? maybe some LB can be for randomized}

Imagine a pool that holds two crypto-assets (also called tokens) named $X$ and $Y$, respectively.
Users can trade with the pool through two types
of orders, called sell-$Y$ and sell-$X$ orders, respectively.

\paragraph{\texorpdfstring{Sell-$Y$ orders.}{Sell-Y orders.}}
A real user $i$ of sell-$Y$ profile has type $(v_i,q_i)\in\R_{\ge0}^2$, indicating
that the user 
has a {\it budget} of $q_i \geq 0$ units of $Y$ to spend on purchasing $X$, 
and its reservation value of each unit of $Y$ is $v_i \geq 0$, measured in units of $X$.   
A sell-$Y$ identity reports a per-unit ask and a declared budget
$(\reportv{i},\reportq{i})\in\R_{\ge0}^2$.  Truthful reporting means
$(\reportv{i},\reportq{i})=(v_i,q_i)$.

\paragraph{\texorpdfstring{Sell-$X$ orders.}{Sell-X orders.}}
Symmetrically, a real user $j$ of sell-$X$ profile has type $(u_j,r_j)\in\R_{\ge0}^2$, where $r_j$ is
its budget in units of $X$, and $u_j$ is its
reservation value for one unit of $X$, measured in units of $Y$.  
A sell-$X$ identity reports
$(\reportu{j},\reportr{j})\in\R_{\ge0}^2$, and truthful reporting means
$(\reportu{j},\reportr{j})=(u_j,r_j)$.

\subsection{A Concave AMM Pricing Curve}
\label{sec:curve}

Suppose the pool initially holds finite reserves $x_0>0$ and $y_0>0$ of
tokens $X$ and $Y$, respectively.  Let $\widehat y$ denote the curve's
effective $Y$-payout capacity, where $0<\widehat y\le y_0$.
Trades are governed by a pricing curve
\[
    F:(-\widehat y,+\infty)\longrightarrow(-\infty,x_0).
\]
A signed amount $y\in(-\widehat y,+\infty)$ denotes net
$Y$ sent into the AMM, and $F(y)$ denotes net $X$ sent out of the AMM.  Thus
$y>0$ describes selling $Y$ to the AMM and receiving $X$, while $y<0$
describes withdrawing $Y$ while depositing $X$.
We assume:
\begin{enumerate}[label=(A\arabic*),leftmargin=3em]
    \item $F(0)=0$;
    \item $F$ is continuous, strictly increasing, and concave;
    \item $F$ is differentiable at $0$ with $0<F'(0)<\infty$;
    \item $F$ has the lower-endpoint behavior $\lim_{p\downarrow-\widehat y}F(p)=-\infty$.
\end{enumerate}

Define the initial prices
\[
    \sigma_0:=F'(0)
    \quad\text{($X$ per $Y$)},
    \qquad
    \rho_0:=\frac{1}{F'(0)}=\frac{1}{\sigma_0}
    \quad\text{($Y$ per $X$)}.
\]

For the sell-$Y$ direction, the mechanism uses the restriction of $F$ to
$[0,+\infty)$. 
The pool pays
out $F(y)\in[0,x_0)$ units of $X$ upon receiving $y\ge0$ units of $Y$,
and the initial marginal compensation is $\sigma_0$.
For the sell-$X$ direction, 
let
\[
    H:[0,+\infty)\longrightarrow[0,\widehat y),
    \qquad
    H(x):=-F^{-1}(-x).
\]
The pool pays out $H(x)$ units of $Y$ upon receiving $x\ge0$
units of $X$.
Its initial marginal compensation is $\rho_0 = 1/\sigma_0$.
Assumptions (A2) and (A4) ensure that $F^{-1}(-x)$ exists for every $x\ge0$.
Since $F$ is increasing and concave, $H$ is increasing and concave.
Thus the mechanism uses $F|_{[0,+\infty)}$ and $H$ as its two one-sided
compensation curves.  Both are increasing and concave and have unbounded
nonnegative input domains.

\paragraph{Constant product as a special case.}
For a constant-product AMM with initial reserves $(x_0,y_0)$ and
$k=x_0y_0$, take $\widehat y=y_0$.  Then
\[
    F(y)
    =x_0-\frac{k}{y_0+y},
    \qquad y>-y_0.
\]
It follows that
\[
    F'(y)=\frac{k}{(y_0+y)^2}>0,
    \qquad
    F''(y)=-\frac{2k}{(y_0+y)^3}<0,
\]
and
\[
    \sigma_0=\frac{x_0}{y_0},
    \qquad
    \rho_0=\frac{y_0}{x_0}.
\]

In other words, the pool pays out 
$x_0-\frac{k}{y_0+y}$ units of $X$ for receiving
$y \ge 0$ units of $Y$, 
and it pays out 
  $-F^{-1}(-x) = y_0-\frac{k}{x_0+x}$ units of $Y$ for receiving $x \ge 0$ units of $X$.
Hence constant product is a special case of our generalized concave pricing
curve formulation.

\subsection{AMM Mechanism}
An AMM mechanism receives as input a \emph{list} of orders, each of either
sell-$Y$ or sell-$X$ type, and decides an outcome $(x_i, y_i)$ for each order:
\begin{itemize}[leftmargin=6mm,itemsep=1pt]
\item 
For a sell-$Y$ order $(\reportv{i}, \reportq{i})$, it means that the user
obtains $x_i$ units in $X$ in exchange for $y_i$
units of $Y$. 
Reported-budget constraints require that $0 \leq y_i \leq \reportq{i}$.
\item 
For a sell-$X$ order $(\reportu{j},\reportr{j})$, it means that the user
obtains $y_i$ units in $Y$ in exchange for $x_i$ 
units of $X$.  
Reported-budget constraints require that $0 \leq x_i \leq \reportr{i}$.
\end{itemize}

Unless otherwise noted, we require that an AMM mechanism satisfy the following natural properties: 
\begin{itemize}[leftmargin=6mm,itemsep=1pt]
\item 
{\it Individual rationality.} 
Under truthful reporting, every user obtains nonnegative utility.

\item 
{\it Pricing curve conformity.}
Let $\Yside$ and $\Xside$ be the sets of sell-$Y$ and sell-$X$ orders, respectively. 
Define the aggregate outcome by
$$
\begin{aligned}
    S_Y&:=\sum_{i\in\Yside}y_i,&
    R_X&:=\sum_{i\in\Yside}x_i,\\
    S_X&:=\sum_{j\in\Xside}x_j,&
    R_Y&:=\sum_{j\in\Xside}y_j.
\end{aligned}
$$
Here, $S_Y$ and $R_X$ are the total $Y$ sale and $X$ receipt of the sell-$Y$ orders, 
while $S_X$ and $R_Y$ are the total $X$ sale and $Y$ receipt of the sell-$X$ orders.
Denote the net amount of $Y$ sent to the pool by $z:=S_Y-R_Y$. The corresponding net amount of $X$ paid out by the pool is
$R_X-S_X$. 
Pricing curve conformity requires
\begin{equation}
    F(z)=R_X-S_X.
    \label{eq:nc-aggregate}
\end{equation}

\item 
{\it Anonymity.}
We say an AMM mechanism is anonymous if its on-chain rules decides outcomes, 
including the allocation and payment, based solely on the valuations and budgets reported in the bids,
as well as their relative ordering within the block. Any identity-related information (e.g., public
keys) is ignored. This anonymity property is both desirable and satisfied by real-world AMMs. 
In particular, it is consistent with the core decentralization principles underlying modern blockchains.

\end{itemize}

\subsection{Strategy Space and Utility}
\label{sec:strategy}

\paragraph{Strategy space.}
A truthful user always reports its true type. 
A strategic user, however, can not only misrepresent its valuation and budget,
but also create an arbitrary finite number of identities,
and inject sybil bids. 
The declared valuation and budget associated with each sybil identity can be
arbitrary finite nonnegative reals. Moreover, the bids submitted by a
strategic user need not be in the same direction as its true type.
For example, a sell-$Y$ user may submit sell-$X$ bids, and vice versa. 

An honest builder orders bids according to a prescribed rule, for example, in the order in which it receives them. 
The builder remains free to deviate from this rule, 
and following it is an incentive requirement, not a restriction enforced by consensus. 
In general, a strategic builder can arbitrarily order the bids in the block, 
and inject sybil bids. 
If the strategic builder also has intrinsic demand, then it
can also misreport its valuation and budget. 
Note that the mechanism observes only the declared valuations and budgets,
and it is not aware which identities belong to the same real user or builder-as-user.

\paragraph{Utility.}
For the finite set $\playerids$ of identities controlled by one real user, define
aggregate token flows
\begin{equation}
\begin{aligned}
    S_Y(\playerids)&:=\sum_{i\in\playerids\cap\Yside} y_i,
    &R_X(\playerids)&:=\sum_{i\in\playerids\cap\Yside} x_i,\\
    S_X(\playerids)&:=\sum_{i\in\playerids\cap\Xside} x_i,
    &R_Y(\playerids)&:=\sum_{i\in\playerids\cap\Xside} y_i.
\end{aligned}
\label{eq:portfolio-flows}
\end{equation}
Here, $S_T$ denotes the amount of token $T \in \{X, Y\}$ supplied by the user and $R_T$
denotes the amount of token $T$ received by the user.  Its net token changes
are
\begin{equation*}
    \Delta_X(\playerids):=R_X(\playerids)-S_X(\playerids),
    \qquad
    \Delta_Y(\playerids):=R_Y(\playerids)-S_Y(\playerids).
\end{equation*}

A true sell-$Y$
user of type $(v,q)$ has utility
\begin{equation}
    \Util^Y(\playerids;v,q)
    :=
    \begin{cases}
       -\infty,
          &\text{if} \ \netYSale(\playerids)>q\ \text{or}\
           \bigl(\netYSale(\playerids)<0\ \text{and}\
                 \Delta_X(\playerids)<0\bigr) \\
       \Delta_X(\playerids)+v\Delta_Y(\playerids),
          &\text{o.w.}
    \end{cases}
    \label{eq:portfolio-utility-Y}
\end{equation}
Thus a true sell-$Y$ user receives catastrophic utility if its realized net
sale in $Y$ exceeds the true budget $q$, or if its aggregate trade is strictly
in the opposite direction of its desire, namely, if it obtains a strict net
gain in $Y$ while incurring a strict net loss in $X$.
Otherwise its utility is computed in the normal quasilinear way from
its aggregate net trade.  In particular, the utility function does not
penalize a sell-$Y$ user merely for obtaining a net gain in $Y$ when it does
not lose $X$; this includes a potential free-lunch outcome with weak net gains
in both tokens.  Nor does it penalize the user if it gross-sells more than $q$
units of $Y$ through some identities provided opposite-direction identities
buy back enough $Y$ that its final net sale does not exceed $q$.

Symmetrically, for a true sell-$X$ user of type $(u,r)$,
its utility is defined as:
\begin{equation*}
    \Util^X(\playerids;u,r)
    :=
    \begin{cases}
       -\infty,
          &\text{if} \ \netXSale(\playerids)>r\ \text{or}\
           \bigl(\netXSale(\playerids)<0\ \text{and}\
                 \Delta_Y(\playerids)<0\bigr) \\
       \Delta_Y(\playerids)+u\Delta_X(\playerids),
          &\text{o.w.}
    \end{cases}
\end{equation*}
Thus a true sell-$X$ user receives utility $-\infty$ if it net-sells more
than $r$ units of $X$, or if it obtains a strict net gain in $X$ while
incurring a strict net loss in $Y$.

\subsection{Desired Properties}

We would like the AMM mechanism to satisfy
the following desired properties.

\begin{enumerate}[label=\textnormal{(\arabic*)},leftmargin=8mm]

    \item \textbf{Incentive compatibility (a.k.a. truthfulness).}
Incentive compatibility (IC) requires that acting honestly is a dominant strategy. 
Specifically, a user cannot benefit from misreporting either its valuation or its budget, 
nor from creating additional identities and injecting Sybil bids. 
A builder is additionally incentivized to order bids in the prescribed manner. 
In particular, the builder or builder-as-user cannot increase its utility through any combination of reordering, misreporting, and injecting Sybil bids.

\item \textbf{Weak local efficiency.}
Let $\sigma_0$ be the initial spot price in units of $X$ per unit of $Y$
    and let $\rho_0=1/\sigma_0$ be the initial spot price in units of $Y$ per
    unit of $X$. 
A bid is considered {\it eligible} if its ask is no greater than the initial
spot price. 
In other words, a sell-$Y$ order $(\reportv{i}, \reportq{i})$ is eligible iff 
$\reportv{i} \leq \sigma_0$;
and a sell-$X$ order $(\reportu{j}, \reportr{j})$ 
is eligible if $\reportu{j} \leq \rho_0$. 

Let $\sigma_e$ and $\rho_e=1/\sigma_e$ be the corresponding
    final marginal prices after execution. Weak local efficiency (wLE) requires the following (economic) efficiency condition 
for eligible bids:

    For any eligible sell-$Y$ order $(\reportv{i}, \reportq{i})$, at least one of the following conditions holds:
    \begin{equation}
        y_i=\reportq{i},
        \qquad
        y_i<\reportq{i}\ \text{ and }\ \reportv{i}\ge \sigma_e.
        \label{eq:WLE-Y}
\end{equation}
    For any eligible sell-$X$ order $(\reportu{j},\reportr{j})$, at least one of the following conditions holds:
    \begin{equation}
        x_j=\reportr{j},
        \qquad
        x_j<\reportr{j}\ \text{ and }\ \reportu{j}\ge \rho_e.
        \label{eq:WLE-X}
\end{equation}
That is, no individual user can make an additional trade directly with the pool that strictly improves its utility.

\end{enumerate}

\begin{fact}[IC implies no free lunch]
\label{fact:ic-no-free-lunch}
Under the reported-budget constraints, individual rationality, and pricing curve conformity, incentive compatibility implies
\emph{no free lunch}: for every report profile,
\[
    x_i>0\ \Longrightarrow\ y_i>0\quad(i\in\Yside),
    \qquad
    y_j>0\ \Longrightarrow\ x_j>0\quad(j\in\Xside).
\]
\end{fact}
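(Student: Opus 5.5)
We argue by contradiction: fix a report profile and a sell-$Y$ order $i$ with $x_i>0$ but $y_i=0$ (the sell-$X$ case is symmetric, swapping the roles of $X,Y$ and $F,H$). The plan is to exhibit a strategic agent who profits by submitting exactly this order as a Sybil bid. Since a free lunch outcome is determined only by reported values, budgets and ordering (anonymity), any agent may replicate the profile's bid at that position. Consider a true user (say a sell-$Y$ user) of type $(v,q)$. Let $\playerids$ be its identities when it reports truthfully, and let $\playerids'=\playerids\cup\{i\}$ be its identities when it additionally injects order $i$ with the same reported pair $(\reportv{i},\reportq{i})$. Then we compare utilities.

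The cleanest instantiation is to take the whole profile minus order $i$ to be the other users' (honest) bids, and let the deviator be a user with zero budget, e.g., type $(v,0)$ with no intrinsic demand. Concretely, consider a user $k$ of sell-$Y$ type $(v,0)$. Truthfully, the user submits a bid $(v,0)$. By the reported-budget constraint, $y_k=0$. By individual rationality, $x_k\ge 0$.

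The key step is to handle the fact that adding or removing a bid changes the environment. We avoid this by building the profile so that the deviator's truthful bid is replaced, not added. To that end we choose the original profile to contain order $i$ itself as the deviator's truthful bid whenever possible. More precisely, the deviator could be a builder (or user) with no trading demand who honestly submits nothing. The honest outcome is then the profile without $i$, giving utility $0$.

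Alternatively, the deviator injects order $i$ at the same position and, if a builder, orders the block identically. Its aggregate flows are then $\Delta_X=x_i>0$ and $\Delta_Y=-y_i=0$. Hence neither catastrophic condition triggers: the net $Y$ sale $0$ does not exceed the budget, and there is no strict gain in $Y$. Its utility is therefore $x_i+v\cdot 0=x_i>0$, contradicting IC. For a builder, reproducing the ordering is within its strategy space; for a user under censorship resilience, we take the other bids fixed in order with $i$ placed where it was.

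The main obstacle is formalizing that the deviator's injection recreates exactly the original profile. The deviator must face the other bids of that profile as honest bids, possibly at the original ordering. I would handle this by using the builder-with-no-demand deviation, since the builder controls the ordering and the paper's IC explicitly covers builders injecting Sybil bids. Pricing curve conformity is only needed to ensure the outcome is well defined. Individual rationality is needed so that the honest baseline utility is $0$ (or at least finite).
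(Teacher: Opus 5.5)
\textbf{There is a genuine gap in how you fix the deviator's honest baseline.} Your whole argument rests on the claim that the deviator's honest utility is $0$, and the model does not give you this for any agent it assigns a utility to. An agent with a type must, when truthful, submit that type.

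Take your sell-$Y$ deviator of type $(v,0)$. Its truthful bid $Y(v,0)$, placed where order $i$ was, may itself collect a free $x_k\ge 0$. That is exactly the phenomenon you are trying to rule out, and nothing prevents $x_k\ge x_i$. Meanwhile the offending report pays this deviator only $x_i+v\cdot 0=x_i$, which does not depend on $v$. So IC gives no contradiction against an unknown baseline.

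Your ``builder with no demand who honestly submits nothing'' does not fix this. The formal model defines no utility for such a builder, and the formula you use for it is the one for the sell-$Y$ type above, whose honest action is to bid rather than abstain. You could instead read builder IC as the paper's informal gloss, ``no risk-free profit from injected orders.'' Under that reading your builder argument does go through, as an immediate one-line consequence. But then you are deriving the Fact from that extra axiom rather than from the stated hypotheses: individual rationality and pricing curve conformity become idle, contrary to the statement and to your own account of their role. It also gives no support to your remark that a plain user could do the same under censorship resilience.

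\textbf{The missing idea} is to pick a deviator whose gain from the offending report grows with its value, while its truthful payoff stays bounded independently of that value.
\begin{enumerate}[leftmargin=6mm,itemsep=2pt]
\item Take a true \emph{sell-$X$} user of type $(u,0)$ and let it submit the offending sell-$Y$ report in position $i$, with all other reports fixed.
\item Then $\Delta_X=x_i>0$ and $\Delta_Y=0$, so neither catastrophic condition applies and its utility is $ux_i$.
\item Truthfully, it submits $X(u,0)$ in that position and spends no $X$, so its utility equals its $Y$ receipt.
\item View the other reports as truthful reports of distinct users. Individual rationality makes all their receipts nonnegative.
\item Conformity forces $z=S_Y-R_Y>-\widehat y$. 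Hence the user's receipt is below $B:=\widehat y+\sum_{k\in\Yside\setminus\{i\}}\reportq{k}$, a bound independent of $u$.
\item Choosing $u>B/x_i$ contradicts IC.
\end{enumerate}
The sell-$X$ case is symmetric: use a true sell-$Y$ user of type $(v,0)$ with $v$ large, and the bound $x_0+\sum_{k\in\Xside\setminus\{j\}}\reportr{k}$. This is where individual rationality and conformity are genuinely needed. The argument uses only user IC, with no reordering and no utility for demand-free agents.
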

\begin{proof}
Suppose a sell-$Y$ report of identity $i$ receives $x_i>0$ units of
$X$ while spending $y_i=0$. Fix all other reports and let
\[
    B:=\widehat y+\sum_{k\in\Yside\setminus\{i\}}\reportq{k}.
\]
Consider a true sell-$X$ user of type $(u,0)$, where $u$ will be
chosen below. If it submits the offending sell-$Y$ report, its net
gains are $\Delta_X=x_i>0$ and $\Delta_Y=0$. Neither
catastrophic-utility condition applies, so this deviation gives
utility $ux_i$.

Under truthful reporting, this user spends zero $X$ by its reported-budget
constraint, and its utility equals its $Y$ receipt. View the fixed
outside reports as truthful reports of distinct users. Individual
rationality then makes all receipts nonnegative. Writing $S_Y$ and
$R_Y$ for the total supply and receipt of $Y$ in this truthful
profile, pricing curve conformity requires $z=S_Y-R_Y>-\widehat y$.
Since all $Y$ supply comes from the outside sell-$Y$ orders,
\[
    R_Y=S_Y-z<S_Y+\widehat y\le B.
\]
The user's truthful receipt is at most $R_Y$, so its truthful
utility is bounded by $B$, independently of $u$. True values are
unbounded in the model, so choose a finite $u>B/x_i$. The deviation
then gives strictly greater utility, contradicting IC.

The sell-$X$ case is symmetric: if identity $j$ receives positive
$Y$ without spending $X$, use a true sell-$Y$ user of type $(v,0)$.
Its truthful $X$ receipt is bounded by
$x_0+\sum_{k\in\Xside\setminus\{j\}}\reportr{k}$, whereas its utility
from the offending report grows without bound as $v$ increases.
\end{proof}

\begin{fact}[Anonymity and IC imply economic symmetry]
\label{fact:ic-economic-symmetry}
For a deterministic anonymous AMM mechanism satisfying builder IC, 
identical reports in a profile receive equal utilities evaluated at their
common reported type.
\end{fact}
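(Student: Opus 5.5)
The plan is to prove this by a builder-reordering argument. Fix a report profile (a list) in which two identities, say at positions $a$ and $b$, submit the identical report, for instance the sell-$Y$ report $(\reportv{},\reportq{})$; the sell-$X$ case is symmetric. Suppose, for contradiction, that their utilities evaluated at the common reported type differ. For the sell-$Y$ case these utilities are $x_a+\reportv{}\cdot(-y_a)$ and $x_b-\reportv{}y_b$. By the reported-budget constraints $0\le y\le \reportq{}$, neither catastrophic condition triggers, so these utilities are finite. Without loss of generality, assume the identity at position $a$ obtains strictly more.

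Next, I will construct a deviating builder-as-user. Consider a builder whose true type is sell-$Y$ with type exactly $(\reportv{},\reportq{})$. In the honest scenario, this builder's single truthful bid lands at position $b$ under the prescribed ordering rule, and the remaining bids are those of the other (truthful) participants. The builder can then swap the contents of positions $a$ and $b$: it places its own bid at position $a$ and moves the other user's identical bid to $b$. Because the mechanism is anonymous, it sees only reported values, budgets and relative order. The list it receives is therefore literally the same list as before. Determinism then guarantees the same outcome vector, position by position.

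After the swap, the builder receives the outcome at position $a$ rather than the one at position $b$, and so earns strictly higher true utility, since its true type equals the common reported type. This contradicts builder IC, which forbids profitable reordering. To be careful, I will realize the original profile as the honest outcome. I take each other report to be a truthful user whose type equals that report, and I choose the prescribed order so that the builder's bid sits at position $b$; for example, the builder's bid arrives in the corresponding slot when the rule is arrival order. If the prescribed rule cannot place it there, I will instead note that builder IC quantifies over all orderings the builder might choose from the honest one, so I only need the honest list to be some permutation. The builder can then permute arbitrarily and still reach both lists.

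The main obstacle is exactly this step: showing that the given profile, with the builder at the worse position, can arise from honest behavior. I will handle it by treating the builder as free to choose any order; builder IC demands no gain from \emph{any} reordering relative to the prescribed one. Consequently, the worse and better positions are both reachable, and whichever one the prescribed order yields, a swap is profitable unless the two utilities are equal.
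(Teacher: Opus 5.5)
Your proof is correct and is essentially the paper's argument: treat the worse-off copy of the identical report as the sole order of a builder-as-user whose true type equals that report, swap it with its twin, and use anonymity plus determinism to conclude that the builder now receives the better outcome, which is feasible for its true type because the two declared budgets coincide. One caution: the fallback in your last paragraph does not work as stated, since if the honest list were some other permutation, builder IC would only bound both swapped utilities by the honest utility rather than equate them. It is also unnecessary, because (like the paper) you can realize the given list itself as the honest one, e.g.\ via arrival order, and since the two reports are identical you are free to designate the occupant of the worse slot as the builder.
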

\begin{proof}
Consider two identical sell-$Y$ reports $(\reportv{},\reportq{})$,
with outcomes $(x_i,y_i)$ and $(x_j,y_j)$. Suppose
\[
    x_i-\reportv{}y_i<x_j-\reportv{}y_j.
\]
View order $i$ as the sole order of a builder-as-user, with true type $(\reportv{},\reportq{})$. 
Then the builder can swap the positions of orders $i$ and $j$.
Since their reports are identical, 
anonymity preserves the outcomes assigned to the two positions,
while the builder receives the outcome previously assigned to order $j$.
The common reported budget ensures that $0\le y_j\le\reportq{}$, so this
trade is feasible for the builder's true type. Its utility increases
from $x_i-\reportv{}y_i$ to $x_j-\reportv{}y_j$, contradicting builder-IC.

Reversing the roles of $i$ and $j$ rules out the opposite strict
inequality. Hence $x_i-\reportv{}y_i=x_j-\reportv{}y_j$.
For two identical sell-$X$ reports $(\reportu{},\reportr{})$, the same
argument gives $y_i-\reportu{}x_i=y_j-\reportu{}x_j$.
\end{proof}

\section{Impossibility}
\label{sec:no-censorship-impossibility}

In this section, we prove our main result: it is not possible to have a mechanism that satisfies all desired properties. 
Specifically, as in the model, we require individual rationality, pricing curve conformity, and anonymity. We first restate the theorem.

\begin{theorem}[IC + wLE $\Longrightarrow$ impossible]
\label{thm:no-censorship-impossibility}
No deterministic AMM mechanism can simultaneously satisfy incentive compatibility and weak local
efficiency. 
This impossibility holds as long as a strategic player is allowed to 
misreport its valuation and budget and inject a finite number of sybil bids and the builder is able to reordering. 
That is, the impossibility holds even when the consensus layer provides 
censorship resilience and block space is unlimited.
\end{theorem}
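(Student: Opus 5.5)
The proof is by contradiction. Fix a deterministic anonymous mechanism satisfying individual rationality, pricing curve conformity, incentive compatibility and wLE. I would build a short chain of report profiles in which every order is included, so censorship is never used. Consecutive profiles will differ only by one real user's Sybil bids, an over-reported budget, or a misreported value. Only one side of the market will be used, namely sell-$Y$ orders. By IC, each user's utility, evaluated at its true type, must weakly decrease along every such deviation. Combined with the outcomes wLE forces at the endpoints, the chain should produce a strict improvement for some deviator.

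\textbf{Step 1: pin down a single eligible order.} Consider one sell-$Y$ order $(v,q)$ with $v<\sigma_0$. By wLE, either it fills completely or the ending price satisfies $\sigma_e\le v$. Pricing curve conformity gives $F(y)$ units of $X$ to the whole batch. When there is a single order, IR and Fact~\ref{fact:ic-no-free-lunch} force the order to trade some $y>0$. It receives at most $F(y)$, and the ending marginal price is $F'(y)$. I would derive that the fill is at least $\min(q,y^*(v))$, where $F'(y^*(v))=v$.

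\textbf{Step 2: split into two Sybils.} A user of true type $(v,q)$ can split into two identical reports $(v,q)$ and $(v,q)$, each over-reporting the budget. Fact~\ref{fact:ic-economic-symmetry} forces equal utilities for identical reports. wLE forces the aggregate trade toward $y^*(v)$. The user's utility must still be no larger than what it gets truthfully. Its real budget constraint only binds on the net sale, so the first question is whether the aggregate exceeds $q$.

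\textbf{Step 3: exploit misreported values.} Take a low-value Sybil report $(v',Q)$ with $v'$ close to $0$ and a huge budget $Q$. It is eligible, so wLE pushes the batch to the point where $\sigma_e\le v'$, a near-full trade far down the curve. Pairing it with a truthful order $(v,q)$, I would compare two profiles:
\begin{itemize}
\item the profile where the user is honest and alone;
\item the profile where a builder-as-user adds a Sybil $(v',Q)$ alongside its true order.
\end{itemize}
The builder can reorder positions, so by anonymity plus Fact~\ref{fact:ic-economic-symmetry} it may take whichever outcome is better. If the mechanism gives the honest order $(v,q)$ a favourable average price in the joint profile, the deviator gains. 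If instead the mechanism gives the low-value Sybil a bad price, then a genuine user whose true value is $v'$ and whose budget is large would prefer to misreport its value upward or shrink its budget. That contradicts IC for that user. In the first case the Sybil gross-sells more than $q$, which is why the over-reported budget is essential; I expect to need a compensating opposite-direction Sybil, or an argument that only net flows matter, to keep the deviator's net sale within $q$.

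\textbf{Main obstacle.} The hard part is making the case analysis in Step~3 exhaustive. For every way the mechanism can split $F(z)$ among the orders, one of the three deviations must strictly profit, and quantifying this needs strict concavity-type slack from $F$. Assumptions (A2)–(A3) give only concavity. I would therefore first try to work inside a neighbourhood where $F$ is strictly concave, or handle the linear case separately using wLE at the unbounded domain endpoint (A4). Choosing $q$, $v$, $v'$ and $Q$ carefully, I expect the contradiction to reduce to a gap between average and marginal prices along the curve.
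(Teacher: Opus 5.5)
Your proposal has a genuine gap. It never reaches an inequality that a mechanism could violate, and restricting to sell-$Y$ orders throws away the tools the contradiction actually needs.

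\textbf{Steps~1 and~2 cannot produce a contradiction.} Step~1 only restates wLE. Step~2 keeps every sell-$Y$ identity at the true value $v$. Against exactly this class of deviations, the layered mechanism of Theorem~\ref{thm:tsf-public-values} is incentive compatible, even with arbitrary budget over-reports and for any fixed reports of the others. So such deviations either realize a sale above $q$ (utility $-\infty$) or yield an inequality some mechanism satisfies.

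\textbf{The opposite side is essential.} The paper makes an over-reported budget harmless through an opposite-direction buyback. A user of type $(0,q)$ submits $Y(0,s)$, $Y(0,s)$ and $X(0,2s-q)$, so it gross-sells $2s$ but net-sells exactly $q$. This only works once you know the buyback executes at the initial price. That is Lemma~\ref{lem:nc-short-side}, built on Lemmas~\ref{lem:nc-taxation} and~\ref{lem:nc-pair}: strictly eligible orders on the smaller-budget side trade their full budgets at the initial price, and the pool moves in the direction of the imbalance. Deriving this from IC and wLE is the bulk of the proof. In Step~3 you say you may need an opposite-direction Sybil, but you have no way to pin down its outcome. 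The same lemma also lets an order $X(0,R)$ create an initial-price segment of length $R$ on \emph{any} concave curve. That is why no strict-concavity or linear-case analysis is needed: the paper uses only $F(z)\le z$ and $F<K$.

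\textbf{Step~3 misapplies the tools you have.} Fact~\ref{fact:ic-economic-symmetry} concerns identical reports. Reordering therefore does not let the builder exchange the outcomes of $(v,q)$ and $(v',Q)$. A poor average price for the large low-value order also contradicts no one's IC. By Lemma~\ref{lem:nc-taxation}, each order faces a menu $g(\cdot)$ and receives an optimal point on it, and shrinking its budget is evaluated on that same menu.

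\textbf{What is missing is a quantitative loss-sharing argument.} In the paper, the buyback creates three identical orders $Y(0,s)$. Symmetry then fixes each receipt at $s-C(a)/3$ and caps a zero-value reporter's loss at $\tfrac23C(a)$ (Lemma~\ref{lem:nc-buyback}). Now take the profile $Y(\tfrac56,20K)$, $Y(\tfrac56,20K)$, $X(0,30K)$.
\begin{itemize}
\item The second user can instead report $Y(0,R-q)$, which forces its loss to be at most $\varepsilon a$.
\item With the loss bounds, this gives $C(a)\le a/2$, hence $a<2K$.
\item Each sell-$Y$ user can therefore guarantee more than $8K/3$ by reporting value zero.
\item Their total truthful utility is at most $\varepsilon(R+K)=31K/6<16K/3$, a contradiction.
\end{itemize}
So the contradiction comes from two users who can each claim most of an initial-price segment shorter than their combined budgets. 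It does not come from a gap between average and marginal prices along $F$.
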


\paragraph{Proof sketch.} 
Consider a profile including both Sell-$Y$ and Sell-$X$ orders. 
We first establish a key lemma showing that eligible orders on the minority side 
with the smaller aggregate reported budget, measured in common units at the initial exchange rate, 
must trade their full budgets at that price (\cref{lem:nc-short-side}). This fixes the outcomes of orders on the minority side. 
Although the statement of this lemma is seemingly intuitive, its proof is highly nontrivial. The weakness of the wLE assumption requires us to
build the argument incrementally through a sequence of
carefully chosen intermediate claims.

We then combine this observation with Sybil identities and budget over-reporting. 
Suppose Sell-$X$ is the minority side. A strategic Sell-$Y$ user can inflate its reported budget 
and simultaneously submit an opposite-direction order to \emph{buy back} the excess $Y$ at the initial price. 
This enables the user to present more trading demand while keeping its net sale within its true budget.

In the auxiliary construction, the user splits its inflated budget evenly between two Sell-$Y$ bids, 
each matching another user's zero-value report. This creates three identical Sell-$Y$ orders. 
Economic symmetry, established in Fact~\ref{fact:ic-economic-symmetry}, forces these orders to receive equal amounts of $X$, 
thereby bounding the strategic user's share of the total trading loss. 
Here, trading loss is the shortfall in $X$ received relative to exchanging the same amount of $Y$ at the initial price.

For the final contradiction, we construct a profile with two identical Sell-$Y$ orders and a small Sell-$X$ order. 
The loss bound, together with value and budget truthfulness, yields a large utility guarantee for either Sell-$Y$ user 
through a deviation that reports value zero. Incentive compatibility then requires the utility guarantee to hold for both Sell-$Y$ users
under truthful reporting in the same profile. Their sum, however, exceeds the total utility available 
from the opposite-side order and the pool, yielding a contradiction.

\subsection{Notations and Useful Observations}
Recall that $z:=S_Y-R_Y$ represents the net amount of $Y$ sent to the pool. 
We use $Y(\cdot,\cdot)$ and $X(\cdot,\cdot)$ to denote a sell-$Y$ order and a sell-$X$ order, respectively. 
The arguments of $Y(\cdot,\cdot)$ and $X(\cdot,\cdot)$ specify reported parameters and need not coincide with
the submitting user's true type. When invoking incentive compatibility
or individual rationality, we specify the user's true type separately.

\paragraph{Normalizing the initial price.} 
We first change units so that the initial marginal exchange rate is $1$. 
Let $\sigma_0=F'(0)>0$ be the original initial price, measured in units of $X$ per unit of $Y$. Leave $Y$-amounts unchanged and define one new unit of $X$ to represent $\sigma_0$ original units. 
Thus an original $X$-amount $x$ becomes $\widehat x=x/\sigma_0$.

Budgets and values must be expressed in the same new units. 
For a sell-$Y$ order $Y(v,q)$ and a sell-$X$ order $X(u,r)$, the normalized parameters are
$$
\widehat v=\frac{v}{\sigma_0},\qquad
\widehat q=q,\qquad
\widehat u=u\sigma_0,\qquad
\widehat r=\frac{r}{\sigma_0}.
$$
The two values $v$ and $u$ scale differently because $v$ is measured in $X$ per $Y$, whereas $u$ is measured in $Y$ per $X$. 
In particular, one new unit of $X$ contains $\sigma_0$ original units, so its value in $Y$ is $u\sigma_0$.

The pool's $Y$-input is unchanged, while its $X$-output is divided by $\sigma_0$. The normalized pricing curve is therefore
$$
\widehat F(z):=\frac{F(z)}{\sigma_0},
\qquad
\widehat F'(0)=1.
$$

This rescaling preserves utility comparisons. For a sell-$Y$ user, we have
$$
\widehat x-\widehat v y=\frac{x-vy}{\sigma_0}.
$$
Thus a sell-$Y$ user's utility is multiplied by a fixed positive
factor under rescaling, so the ranking of truthful reporting and deviations is unchanged.
For a sell-$X$ user, the normalized utility remains the same:
$$
y-\widehat u\,\widehat x=y-ux.
$$
The same calculation applies to a user's aggregate net trade when it submits multiple orders. 
Moreover, budget constraints and trade directions are also preserved. Thus the change of units preserves UIC, individual-rationality, pricing curve conformity, and the efficiency requirements.

The original upper bound $x_0$ on $X$-payouts becomes $K:=x_0/\sigma_0>0$. From now on, we omit the hats and use the original symbols for the normalized quantities. We then have
$$
F'(0)=1,\qquad F(z)<K,\qquad F(z)\le z.
$$
where the last inequality follows because $F$ is concave: $F(z)\le F(0)+F'(0)z=z$.

The reverse curve remains $H(x)=-F^{-1}(-x)$, and now satisfies $H'(0)=1$. 
Consequently, in either trading direction, an order is \textit{strictly eligible} precisely when its
reported value is below $1$.

\paragraph{Consequences of weak local efficiency.}
Since the pool's marginal price is always positive in both directions, a zero-value order with unused budget could strictly improve its utility by making a sufficiently small additional trade with the pool. 
Thus wLE requires every zero-value order $Y(0,q)$ or $X(0,r)$ to spend its full reported budget.

Moreover, concavity and $F'(0)=1$ imply that the marginal price is at least $1$ for selling $X$ when $z\ge0$, and for selling $Y$ when $z\le0$. 
Thus, wLE gives
\begin{equation}
\begin{aligned}
    z\ge0&\quad\Longrightarrow\quad
        \text{every strictly eligible sell-$X$ order spends its full budget},\\
    z\le0&\quad\Longrightarrow\quad
        \text{every strictly eligible sell-$Y$ order spends its full budget}.
\end{aligned}
\label{eq:nc-sign-wle}
\end{equation}
Otherwise, with unused budget, a strictly eligible order whose reported value is below $1$ could profit from a sufficiently small additional trade with the pool.

\begin{lemma}[Outcome equivalence to zero-value report]
\label{lem:nc-taxation}
Fix all reports other than one sell-$Y$ order. For $s\ge 0$, let $g(s)$ be the $X$
received by this order when it reports $Y(0,s)$. If reporting $Y(v,q)$
produces outcome $(x,y)$, then
\begin{equation}
    x=g(y),\qquad
    x-vy\ge g(s)-vs\quad\text{for every }0\le s\le q.
    \label{eq:nc-realized-budget}
\end{equation}
The same statement holds with $X$ and $Y$ exchanged.
\end{lemma}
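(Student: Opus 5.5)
Both parts of \eqref{eq:nc-realized-budget} come from comparing the order's actual report with zero-value reports, using incentive compatibility for users with carefully chosen true types. The report $Y(0,s)$ is a feasible deviation for a true sell-$Y$ user of type $(v,q)$ whenever $s\le q$. Under it the user gets $y'\le s\le q$, so no catastrophic condition applies, and its utility is $g(s)-vy'$, where $y'$ is the $Y$ spent. Since $Y(0,s)$ is a zero-value order, weak local efficiency forces it to spend its full budget (the consequence noted just before the lemma), so $y'=s$ and the deviation yields exactly $g(s)-vs$. Throughout, all other reports are held fixed and interpreted as truthful reports of distinct users, so that IC applies.

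\textbf{The inequality.} Suppose $Y(v,q)$ is the truthful report of a user of type $(v,q)$. IC says truthful utility $x-vy$ is at least the utility of any deviation. Applying this to the deviation $Y(0,s)$ with $0\le s\le q$ gives $x-vy\ge g(s)-vs$. This is immediate.

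\textbf{The identity $x=g(y)$.} I would run two IC comparisons, both involving the reports $Y(v,q)$ and $Y(0,y)$.
\begin{itemize}
\item \emph{$g(y)\le x$.} Take $s=y$ in the inequality above: $x-vy\ge g(y)-vy$, so $x\ge g(y)$.
\item \emph{$g(y)\ge x$.} Consider a user of true type $(0,y)$, whose truthful report $Y(0,y)$ yields $g(y)$ (spending all of $y$, by wLE). Its deviation to $Y(v,q)$ yields outcome $(x,y)$. The net sale is exactly $y$, within its true budget, and no reverse-direction catastrophe occurs. So its utility under the deviation is $x-0\cdot y=x$. IC gives $g(y)\ge x$.
\end{itemize}
Hence $x=g(y)$.

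\textbf{Main obstacle.} The delicate step is checking that the overreporting deviation in the second comparison is legal and costless. The budget $q$ may exceed the true budget $y$, but the model penalizes only realized net sale beyond the true budget, and the realized sale here is exactly $y$. One must also confirm that the other reports can indeed be regarded as truthful, so that IC, which constrains behavior against every profile of others' reports, applies.

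\textbf{Symmetric case.} The sell-$X$ version is identical with the roles of $X$ and $Y$ swapped, using the zero-value consequence of wLE for $X(0,s)$ orders.
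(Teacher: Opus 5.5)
Your proposal is correct and matches the paper's proof: wLE makes every zero-value report spend its full budget. A true $(0,y)$ user deviating to $Y(v,q)$ gives $g(y)\ge x$, and a true $(v,q)$ user deviating to $Y(0,s)$ gives both the inequality and, at $s=y$, the reverse bound $x\ge g(y)$. You should state that $s=y$ is admissible because the reported-budget constraint gives $0\le y\le q$; the paper uses this implicitly too.
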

\begin{proof}
By wLE, every zero-value report $Y(0,s)$ spends its full reported budget $s$.

A sell-$Y$ user of true type $(0,y)$ can report $Y(v,q)$ without exceeding its true budget,
since the realized sale is exactly $y$. 
Incentive compatibility therefore requires $g(y)\ge x$.

Conversely, a sell-$Y$ user of true type $(v,q)$ can report $Y(0,y)$ and spend the same amount $y$.
Incentive compatibility again gives $x \ge g(y)$.
Together, these inequalities imply $x=g(y)$.

Finally, a sell-$Y$ user of true type $(v,q)$ can also report $Y(0,s)$ for any $s\le q$ and obtain utility $g(s)-vs$. Incentive compatibility
therefore gives the second inequality.
\end{proof}

\subsection{Trading at the Initial Price}
We now show that the minority side with smaller total reported budget must trade fully at the initial
price. We begin with two opposite orders and then extend the argument to arbitrary report profiles.

\begin{lemma}[Compensation against one opposite order]
\label{lem:nc-pair}
Against a single order $Y(v,q)$ with $0\le v<1$, order $X(0,r)$ receives
\begin{equation}
    \Gamma_q^X(r)=
    \begin{cases}
        r,&0\le r\le q,\\
        q+H(r-q),&r\ge q.
    \end{cases}
    \label{eq:nc-pair-X}
\end{equation}
Similarly, against a single order $X(u,r)$ with $0\le u<1$, order $Y(0,q)$ receives
\begin{equation}
    \Gamma_r^Y(q)=
    \begin{cases}
        q,&0\le q\le r,\\
        r+F(q-r),&q\ge r.
    \end{cases}
    \label{eq:nc-pair-Y}
\end{equation}
\end{lemma}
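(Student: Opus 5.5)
The plan is to reduce to a profile of two zero-value reports, settle the small-input regime by pinning the outcome with two Sybil-deviation bounds, and extend by symmetry. I prove (\ref{eq:nc-pair-X}); (\ref{eq:nc-pair-Y}) follows by exchanging the roles of $X$ and $Y$ (and $F$ with $H$).

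\textbf{Step 1: reduce to two zero-value orders.} Fix the profile $\{Y(v,q),X(0,r)\}$ with $0\le v<1$. The remark after (\ref{eq:nc-sign-wle}) shows that $X(0,r)$ spends exactly $r$. Let $t$ denote the $Y$ it receives, so $z=y_Y-t$. Next I apply Lemma~\ref{lem:nc-taxation} to the sell-$Y$ order. It says the $Y$ order's outcome coincides with that of the zero-value report $Y(0,y_Y)$ against the same $X(0,r)$. I will check that replacing $Y(v,q)$ by $Y(0,y_Y)$ leaves the $X$ order's receipt unchanged. Pricing curve conformity gives $x_Y=r+F(y_Y-t)$, and the pool's output is determined once $y_Y$ and $x_Y$ are fixed, so $t$ is the same as well.

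I then show $y_Y=q$ whenever $r\le q$. If $z\le 0$, this is (\ref{eq:nc-sign-wle}), because $v<1$ makes the order strictly eligible. If $z>0$, the $Y$ order is still on the side where the marginal price $F'(z)$ may be below $1$. In that case I will use the taxation inequality of Lemma~\ref{lem:nc-taxation} (with $s=q$) together with the explicit form of $g$ obtained below, showing that selling the full budget is optimal for $v<1$. This reduces everything to the symmetric profile $\{Y(0,q),X(0,r)\}$. In that profile both orders sell in full, and the outcome is described by the single number $t$, with the $Y$ order receiving $r+F(q-t)$.

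\textbf{Step 2: pin down $t=r$ when $r\le q$.} I would use two Sybil deviations, each bounded against a one-sided baseline. By wLE and Lemma~\ref{lem:nc-taxation}, a lone order $Y(0,s)$ receives exactly $F(s)$ (and likewise $X(0,s)$ alone receives $H(s)$). The two bounds are as follows.

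\emph{Upper bound.} Consider a true sell-$Y$ user of type $(v',q)$ whose only counterpart is a truthful $X(0,r)$. This user can add nothing, or report $Y(0,q)$. Comparing with the outcome of the full profile gives a constraint of the form $F(q-t)\le F(q-r)$, which yields $t\ge r$.

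\emph{Lower bound.} Consider a true sell-$X$ user of type $(u,r)$ with $u$ slightly below $1$. Its truthful utility is $t-ur$. It could instead split into Sybils or report a smaller budget, and no-free-lunch (Fact~\ref{fact:ic-no-free-lunch}) caps its gain. If $t<r$, the net-$Y$ balance $z=q-t>q-r$ makes the curve's marginal price shift against the $X$ side. The sell-$X$ user could then profit by trading a sliver more at marginal price at least $1$, contradicting wLE through (\ref{eq:nc-sign-wle}), since $z\ge0$ here. The combination should force $t=r$.

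\textbf{Step 3: the case $r>q$.} Here the $Y$ side is the short side, and the argument is symmetric. The $Y$ order now spends $q$ and receives exactly $q$, by the mirror of Step 2. The residual $r-q$ units of $X$ are absorbed by the pool. Pricing curve conformity with $z=q-t<0$ then reads $F(q-t)=q-r$, that is, $t=q+H(r-q)$.

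\textbf{Main obstacle.} I expect the main obstacle to be Step 2, namely ruling out $t\neq r$. Weak local efficiency only constrains the side favored by the sign of $z$, and IC deviations must respect the $-\infty$ direction penalties in the utility definition. I would first try the direct Sybil comparison against the one-sided baseline $F$, and fall back on the wLE sign argument (\ref{eq:nc-sign-wle}) to close whichever inequality the Sybil comparison does not give.
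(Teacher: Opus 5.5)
\textbf{Gap in Step 2.} Step 2 is where all the work of this lemma lies, and it does not go through. First, both of your bounds point the same way. The inequality $F(q-t)\le F(q-r)$ is equivalent to $t\ge r$, and your lower-bound argument also only tries to rule out $t<r$. So the inequality $t\le r$ is never addressed, and Step 3 inherits the same gap through its mirror of Step 2.

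Second, neither argument is sound.
\begin{itemize}
\item Weak local efficiency restricts only how much an order \emph{spends}. Since $X(0,r)$ already spends its full budget $r$, wLE holds however little $Y$ it receives. An extra small trade is unavailable to the order, and for the user it would exceed the true budget $r$. So $t<r$ contradicts nothing.
\item Your sell-$Y$ comparison (report $Y(v',q)$ versus $Y(0,q)$, or stay out) only relates outcomes that are themselves unknown. With no anchor profile whose outcome is already determined, it cannot produce $F(q-t)\le F(q-r)$.
\end{itemize}
How the surplus is split between the two orders is exactly what wLE leaves free. It has to be pinned by IC and IR alone.

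\textbf{Missing ideas.}
\begin{enumerate}
\item[(i)] \emph{Value independence.} Suppose $X(u,s)$, against the fixed $Y(0,q)$, spends its full budget $s$ for every $u\in[0,1)$. Then IC between any two values forces its receipt to be independent of $u$. IR then gives a receipt of at least $us$ for all $u<1$, hence at least $s$. Full spending for every $u$ needs $z\ge0$. You get this because IC, for a sell-$X$ user of type $(0,s)$, caps the receipt of $X(u,s)$ by that of $X(0,s)$. That cap is combined with a cap on the latter, supplied by a contradiction hypothesis or by an earlier case.
\item[(ii)] \emph{Anchor.} Applying (i) to both orders of $Y(0,q),X(0,q)$ and using conformity pins both receipts at $q$.
\item[(iii)] \emph{Unequal budgets, $r<q$.} A sell-$X$ user of type $(0,q)$ who under-reports to $X(0,r)$ gives $t\le q$. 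This enables (i) and yields $t\ge r$. Next, a sell-$X$ user of type $(u,q)$ who under-reports to $X(u,r)$ gives $t\le r+(1-u)(q-r)$, using that $X(u,q)$ and $X(u,r)$ receive $q$ and $t$ for every $u<1$ by (i). Letting $u\uparrow1$ yields $t\le r$.
\end{enumerate}

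\textbf{False claim in Step 1.} Your claim that $Y(v,q)$ spends $q$ whenever $r\le q$ fails for $v>0$. By Lemma~\ref{lem:nc-taxation}, its sale maximizes $\Gamma_r^Y(s)-vs$ over $[0,q]$. Since $F$ is concave and bounded above, its slope eventually drops below $v$, so for large $q$ the maximizer lies strictly below $q$; wLE permits this.

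What does hold, and suffices, is $y_Y\ge\min\{q,r\}$, because $\Gamma_r^Y(s)-vs=(1-v)s$ is strictly increasing on $[0,\min\{q,r\}]$. The reduction therefore lands on $Y(0,y_Y),X(0,r)$ with $y_Y$ possibly different from $q$. You therefore need the zero-value characterization for all budget pairs, not just $(q,r)$. Conformity then gives the $X$ receipt in each case.
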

\begin{proof}
Throughout the proof, unless otherwise stated, let $(x_1,y_1)$ and $(x_2,y_2)$ denote the outcomes
of the sell-$Y$ and sell-$X$ orders, respectively, for the report profile
under consideration. Thus the sell-$Y$ order spends $y_1$ and receives
$x_1$, while the sell-$X$ order spends $x_2$ and receives $y_2$.

We first establish the result for two zero-value orders with equal
budgets, then extend it to arbitrary budgets, and finally allow
the opposite order to have any reported value below $1$.

\emph{Step 1: Zero values and equal budgets.}
Consider $Y(0,q),X(0,q)$ with $q\ge0$. If $q=0$, both orders spend
zero, and individual rationality gives $x_1,y_2\ge0$. Pricing curve
conformity gives $x_1=F(-y_2)$. Since $F$ is strictly increasing and
$F(0)=0$, both receipts must be zero. Assume henceforth that $q>0$.
By wLE, both orders spend their full budgets, so
$y_1=x_2=q$. 

Suppose $y_2<q$. For $0\le u<1$, let $b(u)$ be the receipt
of $X(u,q)$ against the fixed order $Y(0,q)$. A sell-$X$ user of true
type $(0,q)$ can instead report
$X(u,q)$, so incentive compatibility gives $b(u)\le y_2$.
Since the sell-$Y$ order still spends $q$, the net pool input is $z=q-b(u)\ge q-y_2>0$. 
Then weak local efficiency forces the sell-$X$ order to spend $q$ as well. 
Conversely, a sell-$X$ user of true type $(u,q)$ can report $X(0,q)$,
so incentive compatibility gives
$$
    b(u)-uq\ge y_2-uq.
$$
Thus $b(u)=y_2$ for every $0\le u<1$. Individual rationality
then requires $y_2\ge uq$ for every such $u$. Letting
$u\uparrow1$ gives $y_2\ge q$, a contradiction.

Symmetrically, we also have $x_1 \ge q$. If $y_2>q$, then
$z=q-y_2<0$ and the pricing curve conformity would give $x_1=q+F(z)<q$, another contradiction.
Hence $y_2=q$, and conformity gives $x_1=q$ as well.

\emph{Step 2: Zero values and unequal budgets.}
Consider $Y(0,q),X(0,r)$ with $0\le r\le q$. 
By weak local efficiency, we have $y_1=q$ and $x_2=r$.
A sell-$X$ user of true type $(0,q)$ can report $X(0,r)$,
so the equal-budget result and incentive compatibility gives $y_2\le q$.

For $0\le u<1$, let $b_r(u)$ be the receipt of $X(u,r)$ against the same order $Y(0,q)$. 
Incentive compatibility for a sell-$X$ user of true type $(0,r)$ gives $b_r(u)\le y_2\le q$.
Thus, the net pool input under the case of $Y(0,q),X(u,r)$ is nonnegative,
so wLE forces $X(u,r)$ to spend $r$. 
The reverse incentive comparison, in which a sell-$X$ user of
true type $(u,r)$ reports $X(0,r)$, gives
\[
    b_r(u)-ur\ge y_2-ur.
\]
Consequently $b_r(u)=y_2$ for every $0\le u<1$. Individual
rationality gives $y_2 \ge ur$ for every such $u$, hence $y_2 \ge r$.

Applying this argument at $r=q$, together with the
equal-budget result, shows that $X(u,q)$ spends and receives
$q$ for every $0\le u<1$. A sell-$X$ user of true type $(u,q)$ can
under-report its budget by submitting $X(u,r)$, so
$$
    q-uq\ge y_2-ur,
    \qquad\text{or equivalently}\qquad
    y_2\le r+(1-u)(q-r).
$$
Letting $u\uparrow1$ gives $y_2\le r$. Thus $y_2=r$, and
pricing curve conformity gives $x_1=r+F(q-r)$. 

Exchanging the trading directions proves the other case, $q\le r$, using the reverse curve $H$. 
This characterizes the outcomes when both reported values are zero.

\emph{Step 3: An arbitrary strictly eligible opposite order.}
It remains to allow $Y(v,q)$ against $X(0,r)$, where $0\le v<1$. 
The zero-value sell-$X$ order spends $x_2=r$.
By Lemma~\ref{lem:nc-taxation}, replacing the order $Y(v,q)$ by
$Y(0,y_1)$ gives this order the same outcome $(x_1,y_1)$.
According to the zero-value case just proved, the sell-$Y$ order's receipt is
$x_1=\Gamma_r^Y(y_1)$. 
Indeed, a sell-$Y$ user of true type $(v,q)$ can report $Y(0,t)$ for any $t\in[0,q]$,
and get an outcome of selling $t$ units of $Y$ and getting $\Gamma_r^Y(t)$ units of $X$. 
Incentive compatibility requires that the sell-$Y$ order's realized sale $y_1$ 
maximizes $\Gamma_r^Y(t)-vt$ over $t\in[0,q]$. 
For $t \le \min\{q,r\}$, $\Gamma_r^Y(t) = t$ and this utility expression equals $(1-v)t$, strictly increasing with $t$.

If $q\le r$, the order therefore spends and receives $y_1=x_1=q$.
Conformity then gives $y_2=q+H(r-q)$.
If $q\ge r$, no maximizer can lie below $r$, so $y_1\ge r$. 
Recall the zero-value sell-$X$ order spends $x_2=r$.
The sell-$Y$ receipt is therefore $x_1=r+F(y_1-r)$. 
Conformity and strict increase of $F$ further implies $y_2=r$. 
This proves \eqref{eq:nc-pair-X} for all $v<1$. 
Exchanging directions proves \eqref{eq:nc-pair-Y} for all $u<1$.
\end{proof}

\begin{lemma}[The smaller-budget side trades at the initial price]
\label{lem:nc-short-side}
Suppose every reported value is strictly below $1$. Let $B_Y$ and $B_X$
denote the total reported budgets of the sell-$Y$ and sell-$X$ orders,
respectively:
\[
    B_Y:=\sum_{i\in\Yside}\reportq{i},\qquad
    B_X:=\sum_{j\in\Xside}\reportr{j}.
\]
If $B_Y\ge B_X$, the net pool input satisfies $z\ge0$, and every sell-$X$
order has outcome $x_j=y_j=\reportr{j}$. If $B_Y\le B_X$, then $z\le0$
and every sell-$Y$ order has outcome $x_i=y_i=\reportq{i}$.
In particular, if $B_Y=B_X$, all orders trade their full budgets at the
initial price, and the pool is unchanged.
\end{lemma}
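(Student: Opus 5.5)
The plan is to prove only the case $B_Y\ge B_X$; the case $B_Y\le B_X$ then follows by exchanging the roles of $X$ and $Y$ and of $F$ and $H$, as in Lemma~\ref{lem:nc-pair}. The case $B_Y=B_X$ is the conjunction of the two, and then $z=0$ leaves the pool unchanged. I would argue in three stages: first the sign of $z$, then the spend of each sell-$X$ order, then its receipt. Throughout, I would reduce to the two-order situation of Lemma~\ref{lem:nc-pair} by merging or splitting orders through Sybil deviations, and use Lemma~\ref{lem:nc-taxation} to replace any strictly eligible order by the zero-value order $Y(0,y_i)$ or $X(0,x_j)$ with the same outcome.

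\emph{Stage 1: $z\ge0$.} Suppose $z<0$. By \eqref{eq:nc-sign-wle}, every sell-$Y$ order then spends its full budget, so $S_Y=B_Y\ge B_X\ge S_X$, while $R_Y>S_Y$. Pricing curve conformity gives $R_X=S_X+F(z)<S_X$, so the sell-$Y$ side receives strictly less $X$ than the sell-$X$ side supplies, even though it supplies at least as much $Y$ as the sell-$X$ side could have absorbed at price $1$.

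To obtain a contradiction, I would fix the sell-$X$ orders and consider a single user of true type $(0,B_Y)$ who controls all the sell-$Y$ identities, after each has been replaced by its zero-value equivalent via Lemma~\ref{lem:nc-taxation}. This user's total receipt $R_X<S_X\le B_X$ should be compared with what a one-order report $Y(0,B_Y)$ obtains. I would show the one-order outcome is at least $B_X$ by peeling off the sell-$X$ orders one at a time, bottoming out in $\Gamma^Y_r$ from Lemma~\ref{lem:nc-pair}. The Sybil direction of IC gives the comparison needed, since a user can split a report into several identities but the merged report must be at least as good for the honest user. I expect this to be organized as an induction on the number of orders, with Lemma~\ref{lem:nc-pair} as the base case.

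\emph{Stage 2: full spend on the sell-$X$ side.} With $z\ge0$, \eqref{eq:nc-sign-wle} gives $x_j=\reportr{j}$ for every sell-$X$ order.

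\emph{Stage 3: $y_j=\reportr{j}$.} For the lower bound I would copy Step~2 of Lemma~\ref{lem:nc-pair}. Let $b_j(u)$ denote the receipt of $X(u,\reportr{j})$ with all other reports fixed. The comparisons between true types $(0,\reportr{j})$ and $(u,\reportr{j})$ give $b_j(u)\le y_j$; this keeps $z\ge0$, so the order still spends fully, and therefore $b_j(u)=y_j$. Individual rationality then gives $y_j\ge u\,\reportr{j}$, and letting $u\uparrow1$ yields $y_j\ge\reportr{j}$.

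For the upper bound, $y_j\le\reportr{j}$, I would first show the aggregate statement $R_Y=B_X$. If $R_Y>B_X=S_X$, then concavity gives $R_X=S_X+F(z)\le S_X+z=S_Y-(R_Y-S_X)<S_Y$. Hence some sell-$Y$ order has $x_i<y_i$, and I would contradict this by a user of true type $(v',y_i)$ with $x_i/y_i<v'<1$. By Lemma~\ref{lem:nc-taxation} such a user can obtain $(x_i,y_i)$, and individual rationality of its truthful report, together with the monotone structure of $g$, should force $g(s)\ge v's$ on the relevant range. Combining $R_Y=B_X$ with the lower bound $y_j\ge\reportr{j}$ then gives $y_j=\reportr{j}$ for every $j$, and conformity gives $x_j=y_j=\reportr{j}$.

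I expect Stage~1 and this last individual-rationality step to be the main obstacles. The difficulty is that the lemma concerns arbitrarily many orders while individual rationality and IC act on one real user at a time. The Sybil merging and splitting argument, and the precise induction it supports, is where I would spend most of the effort.
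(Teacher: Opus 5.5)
Stage~2 is fine, and your lower bound $y_j\ge\reportr{j}$ in Stage~3 is essentially the paper's argument (value-independence of the receipt plus individual rationality with $u\uparrow1$; the comparison actually gives $b_j(u)\le b_j(0)$ rather than $\le y_j$, which is harmless). The two steps you flag as obstacles, however, are genuine gaps.

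\textbf{Stage 1.} Here incentive compatibility is applied in the wrong direction. Suppose a user of true type $(0,B_Y)$ owns all the sell-$Y$ identities. IC then says its honest report $Y(0,B_Y)$ does at least as well as the split profile. That is an \emph{upper} bound on $R_X$, so showing that $Y(0,B_Y)$ receives at least $B_X$ is consistent with $R_X<S_X\le B_X$ and yields no contradiction. (Replacing all sell-$Y$ orders simultaneously by zero-value equivalents is also not what Lemma~\ref{lem:nc-taxation} gives, since it fixes only the replaced order's outcome; you do not need it anyway.)

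What is missing is a \emph{lower} bound on each sell-$Y$ receipt. The paper gets it by merging the other way. Fix one sell-$Y$ order $i$ and treat all \emph{other} orders, on both sides, as Sybil bids of a zero-value sell-$X$ user. Their net $X$ sale is $c_i=x_i-F(z)>0$ and their net $Y$ receipt is $\reportq{i}-z$. That user's honest report $X(0,c_i)$ faces only order $i$, so Lemma~\ref{lem:nc-pair} gives its receipt $\Gamma^X_{\reportq{i}}(c_i)$, and IC yields $\reportq{i}-z\le\Gamma^X_{\reportq{i}}(c_i)$. Using $-z=H(-F(z))$, this unwinds to $x_i\ge\reportq{i}$. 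Summing gives $R_X\ge B_Y\ge B_X\ge S_X$, contradicting $F(z)<0$. No induction on the number of orders is needed.

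\textbf{Stage 3, upper bound.} Your route to $y_j\le\reportr{j}$ via $R_Y=B_X$ cannot work. The situation you try to rule out, a sell-$Y$ order with $x_i<y_i$, follows from the lemma's own conclusion whenever $B_Y>B_X$ and $F$ is strictly concave (e.g.\ constant product). Then $z>0$: if $z=0$, \eqref{eq:nc-sign-wle} would force $S_Y=B_Y$ and hence $z=B_Y-B_X>0$. So $R_X=B_X+F(z)<B_X+z=S_Y$. Deriving it from $R_Y>B_X$ therefore cannot separate $R_Y>B_X$ from $R_Y=B_X$.

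Concretely, the individual-rationality step is unjustified. A truthful user of value $v'\in(x_i/y_i,1)$ need not spend $y_i$, because wLE forces full spending only when $v'$ lies below the ending marginal price, which is below $1$ once the pool input is positive. IR then says nothing about $g(y_i)$.

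The missing ingredient is the balanced profile used as a benchmark. First prove $B_Y=B_X$ directly, not as a corollary. There $z=0$ and every order spends fully. The Sybil comparison above gives $x_i\ge\reportq{i}$, and $R_X=S_X=B_X=B_Y$ forces equality; symmetrically $y_j=\reportr{j}$.

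Then, for a sell-$X$ order $j$ with $B_Y\ge B_X$, set $r^*:=B_Y-B_{X,-j}\ge\reportr{j}$. For $s\le r^*$, the receipt $\beta(s)$ of $X(u,s)$ is independent of $u$. Budget $r^*$ makes the profile balanced, so $\beta(r^*)=r^*$. A true type $(u,r^*)$ under-reporting to $X(u,s)$ gives $(1-u)r^*\ge\beta(s)-us$, and letting $u\uparrow1$ yields $\beta(s)\le s$.
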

\begin{proof}
We first determine the sign of the net pool input, 
then treat balanced profiles, and finally characterize the outcome of 
each order on the smaller-budget side.

\emph{Step 1: The sign of the pool trade agrees with the budget imbalance.}
Suppose $B_Y\ge B_X$ but $z<0$. By wLE, every sell-$Y$
order spends its full budget. Fix any such order $i$, 
with report $Y(\reportv{i}, \reportq{i})$ and outcome $(x_i,\reportq{i})$.

Consider all other orders as one user's Sybil bids. Their net
$X$ sale and net $Y$ receipt are, respectively,
\[
    c_i := x_i-F(z)>0,\qquad \reportq{i}-z>\reportq{i}.
\]
A true zero-value sell-$X$ user with budget $c_i$, facing only order
$i$, can submit these orders and obtain that net outcome. 
Incentive compatibility requires its truthful receipt to be at least $\reportq{i}-z$. 
By Lemma~\ref{lem:nc-pair}, that truthful receipt is $\Gamma_{\reportq{i}}^X(c_i)$, so
\[
    \reportq{i}-z\le\Gamma_{\reportq{i}}^X(c_i).
\]
If $c_i\le\reportq{i}$, this gives $\reportq{i}-z\le c_i\le\reportq{i}$, contradicting $z<0$.
Thus $c_i>\reportq{i}$, and the inequality becomes $H(c_i-\reportq{i})\ge-z = H(-F(z))$.
Since $H$ is strictly increasing, $c_i-\reportq{i}\ge-F(z)$.
Substituting $c_i=x_i-F(z)$ gives $x_i\ge\reportq{i}$.
Summing over all sell-$Y$ orders gives
$R_X\ge B_Y\ge B_X\ge S_X$, contradicting $R_X-S_X=F(z)<0$.
Hence $B_Y\ge B_X$ implies $z\ge0$. Exchanging the trading directions proves that
$B_Y\le B_X$ implies $z\le0$.

\emph{Step 2: Consider a balanced profile.}
Suppose $B_Y=B_X$. The preceding result gives $z=0$. Weak local efficiency
forces every order to spend its full reported budget. 

Fix a sell-$Y$ order $i$ with spending $\reportq{i}$ and receipt $x_i$. 
All other orders together now have net $X$ sale $x_i$ and net $Y$ receipt $\reportq{i}$.
The same Sybil comparison gives $\reportq{i}\le\Gamma_{\reportq{i}}^X(x_i)$,
which implies $x_i\ge\reportq{i}$. 

On the other hand, pricing curve conformity and full spending give
$R_X=S_X=B_X=B_Y$. The receipts $x_i$ therefore sum to the total reported budgets, 
so every inequality $x_i\ge\reportq{i}$ must be an equality. 
The argument for sell-$X$ orders gives $y_j=\reportr{j}$. Thus every order trades its
full budget at the initial price.

\emph{Step 3: Consider one order on the smaller-budget side.}
Suppose $B_Y\ge B_X$. Fix a sell-$X$ order $j$ and all other reports.
Let $B_{X,-j}$ be the other sell-$X$ orders' total reported budget and
set $r^*=B_Y-B_{X,-j}$. Then the order $j$'s reported budget is at most $r^*$.
For every report $X(u,s)$ with $0\le u<1$ and $0\le s\le r^*$, the sign
result and wLE force this order to spend $s$.
For a fixed $s$, let $b_s(u)$ denote its receipt. Incentive
compatibility between any two values $u,u'\in[0,1)$ gives
\[
    b_s(u)-us\ge b_s(u')-us,
    \qquad
    b_s(u')-u's\ge b_s(u)-u's.
\]
Thus $b_s(u)=b_s(u')$: the receipt is independent of the
reported value. Denote this common receipt by $\beta(s)$.
Individual rationality gives $\beta(s)\ge us$ for every
$0\le u<1$. Letting $u\uparrow1$ yields $\beta(s)\ge s$.

At budget $r^*$ the profile is balanced, giving $\beta(r^*)=r^*$.
A true sell-$X$ user of type $(u,r^*)$ can submit $X(u,s)$, so
\[
    (1-u)r^*\ge\beta(s)-us,
    \qquad\text{hence}\qquad
    \beta(s)\le s+(1-u)(r^*-s).
\]
As this holds for every $u<1$, letting $u\uparrow1$ gives $\beta(s)\le s$.
Therefore $\beta(s)=s$ for every $0\le s\le r^*$. Taking
$s=\reportr{j}$ proves $x_j=y_j=\reportr{j}$. Exchanging the trading
directions proves the sell-$Y$ statement when $B_Y\le B_X$.
\end{proof}

\begin{remark}
The recent work of~\cite[Lemma~17 and Remark~19]{otter} has a similar result: under a deterministic \emph{natural} mechanism satisfying IC and wLE, every strictly eligible order on the minority side spends its full reported budget and is compensated at the initial spot price. The two results, however, fundamentally do not subsume each other.

As discussed in the introduction, the model of~\cite{otter} allows part of the pool's output to be redistributed outside the batch, thereby admitting a broader class of mechanisms than ours. Their proof, however, relies on additional assumptions that ours does not require.

In particular, in \cite{otter}
\begin{itemize}
    \item the first assumption is called \emph{no price reversal}
    \textsf{(NPR)}, which implies
    $$
        B_Y\ge B_X \implies z\ge0,
        \qquad
        B_Y\le B_X \implies z\le0.
    $$
    We instead \emph{directly prove} this sign property for strictly eligible profiles, when tokens output by the pool to be completely distributed to the users. Again as aforementioned, this proof is highly nontrivial, and we have to build the argument incrementally through a sequence of
    carefully chosen intermediate claims.
    \item the second assumption is called \textsf{(Drop)}, which assigns zero allocation and
    transfer to ineligible orders. Their proof uses this condition
    in incentive comparisons between eligible and ineligible
    reports. Our lemma, by contrast, concerns profiles consisting
    entirely of strictly eligible orders, and our proof never
    introduces ineligible reports.
\end{itemize}
Fairly speaking, the most significant contribution of~\cite{otter} is the
construction of a new mechanism enabled by relaxing exact
settlement. Their Lemma~17 serves primarily to justify the
formulation of social welfare maximization in a two-sided trading
setting. In our paper, by contrast, the characterization is itself a
central technical contribution: its proof makes essential use of IC in several distinct yet interdependent ways, linking strategic deviations across profiles to derive the structural constraints that drive our impossibility result.

\end{remark}

\subsection{The Sybil Deviation and the Contradiction}

For $a\ge0$, define
\[
    C(a):=a-F(a)\ge0.
\]
This is the loss from sending $a$ units of $Y$ through the pool rather
than exchanging them at the initial price. 
Similarly, the individual loss of a sell-$Y$ order spending $y_i$ and receiving $x_i$ 
is defined as $y_i-x_i$.
The following lemma bounds how a total loss is distributed between two orders.

\begin{lemma}[Loss bounds from a Sybil buyback]
\label{lem:nc-buyback}
Consider the profile $Y(0,q),Y(0,s),X(0,R)$, where
\[
    s\le q\le2s,\qquad a=q+s-R\ge0.
\]
Let $c_1=q-x_1$ and $c_2=s-x_2$ be the losses of the two sell-$Y$
orders. Then
\begin{equation}
    c_1\le\frac23 C(a),\qquad c_2\ge\frac13 C(a).
    \label{eq:nc-buyback-cost}
\end{equation}
\end{lemma}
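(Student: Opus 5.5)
The plan is to first compute the total loss from the structural lemmas, and then bound the first order's share by a single Sybil-buyback deviation. Since $c_1+c_2$ is pinned down, the bound on $c_2$ follows from the bound on $c_1$.

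\emph{Total loss.} All reported values are $0<1$, and $B_Y=q+s\ge R=B_X$ because $a\ge0$. So Lemma~\ref{lem:nc-short-side} applies: the order $X(0,R)$ spends and receives $R$, and $z\ge0$. By wLE, both zero-value sell-$Y$ orders spend their full budgets, so $S_Y=q+s$, $R_Y=R$, and $z=a$. Pricing curve conformity gives $x_1+x_2-R=F(a)$. Hence
\[
c_1+c_2=q+s-x_1-x_2=a-F(a)=C(a).
\]
Therefore $c_2\ge\frac13C(a)$ is equivalent to $c_1\le\frac23C(a)$, and it suffices to prove the latter.

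\emph{The Sybil deviation.} Let the order $Y(0,q)$ belong to a user of true type $(0,q)$. Its truthful utility is $x_1$. As a deviation, it instead submits two copies of $Y(0,s)$ together with a buyback order $X(0,2s-q)$; this order is legitimate because $q\le2s$. The resulting profile is
\[
Y(0,s),\,Y(0,s),\,Y(0,s),\,X(0,R),\,X(0,2s-q).
\]
All values are $0$, and $B_Y-B_X=3s-R-2s+q=a\ge0$.
\begin{itemize}
\item By Lemma~\ref{lem:nc-short-side}, both sell-$X$ orders trade at par: $X(0,2s-q)$ spends $2s-q$ units of $X$ and receives $2s-q$ units of $Y$.
\item By wLE, the three sell-$Y$ orders each spend $s$, so again $z=a$.
\item Conformity then gives total sell-$Y$ receipt $R+2s-q+F(a)$.
\item The three sell-$Y$ reports are identical with value $0$, so Fact~\ref{fact:ic-economic-symmetry} forces equal $X$ receipts. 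Each receives $\frac13\bigl(R+2s-q+F(a)\bigr)$.
\end{itemize}
Because the profile is symmetric among the identical orders, the positions the builder assigns do not affect this computation.

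\emph{Checking feasibility and applying IC.} The deviator's net $Y$ sale is $2s-(2s-q)=q$, which is within its true budget. Its net $X$ change is
\[
\tfrac23\bigl(R+2s-q+F(a)\bigr)-(2s-q).
\]
This quantity is nonnegative: it equals $q-\frac23C(a)\ge q-\frac23(q+s-R)\ge0$, using $C(a)\le a$ (from $F\ge0$ on $[0,\infty)$), $R\ge0$ and $s\le q$. So neither catastrophic condition in \eqref{eq:portfolio-utility-Y} is triggered, and the deviation's utility is exactly this net $X$ gain. Incentive compatibility then gives
\[
x_1\ge\tfrac13\bigl(2R-2s+q+2F(a)\bigr).
\]
Rearranging, $c_1=q-x_1\le\frac23\bigl(q+s-R-F(a)\bigr)=\frac23C(a)$, and $c_2\ge\frac13C(a)$ follows from the total-loss identity.

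\emph{Main obstacle.} The delicate point is confirming that every earlier result applies in the deviated profile. Lemma~\ref{lem:nc-short-side} needs all values strictly below $1$ and the correct sign of $B_Y-B_X$. Economic symmetry needs identical reports, which is why the deviator splits its inflated budget into copies of exactly $Y(0,s)$. Finally, the utility must be finite, which requires the net-sale and direction checks above.
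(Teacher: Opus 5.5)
Your proof is correct and follows the paper's argument step for step: the total loss $C(a)$ from Lemma~\ref{lem:nc-short-side}, wLE and conformity; the deviation to two copies of $Y(0,s)$ plus the buyback $X(0,2s-q)$; economic symmetry among the three identical orders; and then IC.

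One side remark is wrong and should simply be deleted. You claim the deviator's net $X$ change $q-\frac23C(a)$ is nonnegative, but it need not be. Your bound $q-\frac23(q+s-R)=\frac13(q-2s+2R)$ is already negative when $q=s>2R$. Moreover, since $C(a)>a-K$, the net change itself becomes negative once $q=s$ is large relative to $R$ and $K$.

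The check is unnecessary in any case. The second catastrophic condition requires a strict net gain in $Y$, whereas the deviator's net $Y$ sale is exactly $q\ge0$. So its utility is finite and equal to $\Delta_X$ regardless of sign, and the IC comparison goes through unchanged.
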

\begin{proof}
By wLE, both sell-$Y$ orders spend their full budgets. 
Since $q+s\ge R$, Lemma~\ref{lem:nc-short-side} implies that 
the sell-$X$ order spends and receives $R$. 
The pool input is therefore $a$, and conformity gives $c_1+c_2=q+s-\bigl(R+F(a)\bigr)
    =C(a)$.

Fix the outside orders $Y(0,s)$ and $X(0,R)$. 
A sell-$Y$ user of true type $(0,q)$ can replace its truthful order 
$Y(0,q)$ with two orders $Y(0,s)$ and one buyback order $X(0,2s-q)$. 
The buyback budget is nonnegative because $q\le2s$. 
The resulting profile is
$$
    \underbrace{Y(0,s), \quad Y(0,s), \quad X(0,2s-q)}
        _{\text{deviating user's orders}},
    \quad
    \underbrace{Y(0,s), \quad X(0,R)}
        _{\text{fixed orders}}.
$$
There are now three identical sell-$Y$ orders, with total budget $3s$,
and two sell-$X$ orders, with total budget $R+2s-q$.
Their budget difference remains $a$. 
By wLE, all sell-$Y$ orders spend $s$. 
By Lemma~\ref{lem:nc-short-side}, both sell-$X$ orders spend their full budgets 
and trade at the initial price. 
Thus the pool's net input is still $a$, and pricing curve conformity fixes the three sell-$Y$
orders' total receipt at
\[
    R+2s-q+F(a)=3s-C(a).
\]

The economic symmetry by Fact~\ref{fact:ic-economic-symmetry} gives equal utilities
to the three identical sell-$Y$ orders. Their reported value is zero,
so their receipts are equal. Each therefore receives $s-C(a)/3$.
The deviating user's two sell-$Y$ identities thus have combined receipt $2s-2C(a)/3$.
Its sell-$X$ identity spends and receives $2s-q$, giving
\begin{align*}
    \text{net $Y$ sale}&=2s-(2s-q)=q,\\
    \text{net $X$ receipt}&=2s-2C(a)/3-(2s-q)=q-2C(a)/3.
\end{align*}
The deviation respects the true net budget exactly. Incentive
compatibility therefore requires $x_1=q-c_1\ge q-2C(a)/3$, proving the first
bound. The second follows from $c_1+c_2=C(a)$.
\end{proof}

We are now ready to prove Theorem~\ref{thm:no-censorship-impossibility}.
\begin{proof}[Proof of Theorem~\ref{thm:no-censorship-impossibility}]
Consider the truthful profile $Y(v,q),Y(v,q),X(0,R)$, where
\[
    v:=\frac56,
    \qquad
    q:=20K,
    \qquad
    R:=30K,
    \qquad
    \varepsilon:=1-v=\frac16.
\]
We show that each sell-$Y$ user can obtain utility greater
than $8K/3$ by reporting value zero while the other users
remain truthful. We then show that their total truthful
utility cannot meet both guarantees.

\paragraph{Utility from reporting value zero.}
Suppose user $1$ reports $Y(0,q)$ while user $2$ reports
$Y(v,q)$ and the sell-$X$ user reports $X(0,R)$.
By wLE, user $1$ spends $q$. Since $2q\ge R$, 
Lemma~\ref{lem:nc-short-side} implies that the sell-$X$
order spends and receives $R$, and that the pool input
is nonnegative. If $y$ is user $2$'s spending, then
\[
    10K=R-q\le y\le q=20K.
\]
Let $a=q+y-R$ be the pool input, and define the individual losses
\[
    c_1=q-x_1,\qquad c_2=y-x_2.
\]

We can apply the bounds in Lemma~\ref{lem:nc-buyback} to these losses. To see why,
replace user $2$'s report by $Y(0,y)$.
Lemma~\ref{lem:nc-taxation} says its receipt $x_2$ remains the same.
In the new profile, user $1$ still spends $q$, user $2$ spends $y$,
and the sell-$X$ order still spends and receives $R$.
The pool input is therefore still $a$, so conformity fixes the sum
of the two sell-$Y$ receipts at $R+F(a)$. Since $x_2$ is unchanged,
$x_1$ is unchanged as well. The losses $c_1,c_2$ are consequently
the same as in the all-zero profile $Y(0,q),Y(0,y),X(0,R)$.
Because $y\le q\le2y$, Lemma~\ref{lem:nc-buyback} gives
\begin{equation}
    c_1\le\frac23 C(a),\qquad c_2\ge\frac13 C(a).
    \label{eq:nc-two-cost-bounds}
\end{equation}

User $2$ has another simple deviation: report $Y(0,R-q)$.
Together with $Y(0,q)$ and $X(0,R)$, this forms a balanced
profile. It therefore spends and receives $R-q$, obtaining true
utility $\varepsilon(R-q)$. Its current utility is
$x_2-vy=\varepsilon y-c_2$. Truthfulness requires
\[
    \varepsilon y-c_2\ge\varepsilon (R-q),
    \qquad\text{hence}\qquad c_2\le\varepsilon a.
\]
Combining this inequality with \eqref{eq:nc-two-cost-bounds} yields
\[
    C(a)\le3\varepsilon a=\frac a2.
\]
Thus $F(a)=a-C(a)\ge a/2$. Since the pool pays out less than $K$,
we obtain $a<2K$ and
\[
    c_1\le\frac23 C(a)\le\frac a3<\frac{2K}{3}.
\]

Evaluated at user $1$'s true value $v$, its utility from
reporting zero is
\begin{equation}
    x_1-vq=\varepsilon q-c_1>
    \frac{20K}{6}-\frac{2K}{3}=\frac{8K}{3}.
    \label{eq:nc-deviation-utility}
\end{equation}
Reporting value zero is within its true budget, because it spends
exactly $q$.

\paragraph{The two guarantees cannot both be met.}
Return to the truthful profile $Y(v,q),Y(v,q),X(0,R)$, 
and let $U_1,U_2$ be the two sell-$Y$ users' utilities.
Either sell-$Y$ user can report value zero while the other stays
truthful. Applying \eqref{eq:nc-deviation-utility} to each user and
using incentive compatibility gives
\begin{equation}
    U_1+U_2>\frac{16K}{3}=\frac{32K}{6}.
    \label{eq:nc-lower-bound}
\end{equation}

On the other hand, Lemma~\ref{lem:nc-short-side} implies
that the sell-$X$ order spends and receives $R$, and that
the pool's net $Y$ input $z$ is nonnegative. The two sell-$Y$ users
therefore spend $R+z$ and receive $R+F(z)$ in total. Their combined utility is
\[
    U_1+U_2=R+F(z)-v(R+z)=\varepsilon R+F(z)-vz.
\]
By concavity, $F(z)\le z$. Since $v\ge0$ and $F(z)<K$,
\[
    F(z)-vz\le(1-v)F(z)<\varepsilon K,
\]
and therefore
\begin{equation}
    U_1+U_2\le\varepsilon(R+K)=\frac{31K}{6}.
    \label{eq:nc-upper-bound}
\end{equation}
This contradicts \eqref{eq:nc-lower-bound} because $K>0$.
\end{proof}

\section{Feasibility Results}
\label{sec:tsf-feasibility}
The impossibility proof combines three strategic deviations: creating Sybil identities, 
over-reporting budgets, and misreporting values. 
In this section, we show that removing any one of the three capabilities restores feasibility. 
We consider three restricting settings -- no Sybil identities, no aggregate budget over-reporting, 
and public values -- and construct a mechanism for each setting that satisfies 
IC, wLE, and other natural properties including individual rationality, pricing curve conformity, and anonymity.

All three mechanisms share a common two-sided matching construction. 
For a profile containing both Sell-$Y$ and Sell-$X$ orders, 
we discard ineligible orders and compare the total reported budgets of eligible orders 
on the two sides, expressed in common units at the initial price. 
Eligible orders on the smaller-budget side spend their
full reported budgets at that price. The remaining problem is one-sided, on an augmented 
compensation curve that accounts for these opposite-direction trades.

We first describe this shared construction, which determines the outcomes of 
all orders on at least one side. We then present a one-sided mechanism for the remaining orders 
under each restriction. Each subsection specifies the admissible strategies 
and analyzes the resulting two-sided mechanism.

Throughout this section, $F,H,\sigma_0$, and $\rho_0$ retain their meanings
and units from the model. In particular, we do not require the initial
price to be normalized to one. We write $Y(\reportv{},\reportq{})$ and 
$X(\reportu{},\reportr{})$ for orders with the indicated reported parameters, 
and specify true types separately in incentive comparisons.

\subsection{A Shared Two-Sided Matching Construction}
\label{sec:tsf-common}

The mechanism first assigns outcome $(0,0)$ to every initially
ineligible order whose ask exceeds the initial price in its trading direction. 
Let $\Yside_0$ and $\Xside_0$ denote the sets of eligible sell-$Y$
orders and eligible sell-$X$ orders, respectively:
\[
    \Yside_0:=\{i\in\Yside:\reportv{i}\le\sigma_0\},
    \qquad
    \Xside_0:=\{j\in\Xside:\reportu{j}\le\rho_0\}.
\]
Define their total reported budgets by
\[
    B_Y:=\sum_{i\in\Yside_0}\reportq{i},
    \qquad
    B_X:=\sum_{j\in\Xside_0}\reportr{j}.
\]
At the initial price, $B_X$ units of $X$ exchange for
$\rho_0 B_X$ units of $Y$. We call a profile \emph{Sell-$Y$ dominated}
if $B_Y\ge\rho_0 B_X$, and \emph{Sell-$X$ dominated} otherwise.
Balanced profiles are classified as Sell-$Y$ dominated.

\paragraph{The augmented compensation curves.}
For a Sell-$Y$-dominated profile, set $M:=\rho_0B_X$. 
First, assign each eligible sell-$X$ order its full-budget outcome at the initial price: 
\[
    x_j=\reportr{j},\qquad y_j=\rho_0\reportr{j}.
\]
Together, these orders supply $B_X=\sigma_0M$ units of $X$ and require $M$ units
of $Y$. The eligible sell-$Y$ orders therefore face the augmented curve
\begin{equation*}
    G_M^Y(t):=
    \begin{cases}
        \sigma_0t,&0\le t\le M,\\
        \sigma_0M+F(t-M),&t\ge M.
    \end{cases}
\end{equation*}
The first $M$ units of $Y$ are exchanged at the initial price 
with the $X$ tokens supplied by the sell-$X$ orders, 
and any additional $Y$ is sent to the pool. We determine the outcomes of 
the eligible Sell-$Y$ orders in $\Yside_0$ by applying a one-sided mechanism to $G_M^Y$.

Symmetrically, for a Sell-$X$-dominated profile, 
set $M:=\sigma_0B_Y$. Assign each eligible sell-$Y$ order the outcome $y_i=\reportq{i}$ and $x_i=\sigma_0\reportq{i}$. 
We then apply the corresponding one-sided mechanism to 
the orders in $\Xside_0$ using the augmented compensation curve
\begin{equation*}
    G_M^X(t):=
    \begin{cases}
        \rho_0t,&0\le t\le M,\\
        \rho_0M+H(t-M),&t\ge M.
    \end{cases}
\end{equation*}

The two pieces of each curve meet with the same marginal slope. Thus
both augmented curves are continuous, strictly increasing, and concave,
with finite positive initial marginal compensation and bounded output.

Here, $t$ denotes the cumulative input to the one-sided mechanism, 
measured in $Y$ for $G_M^Y$ and in $X$ for $G_M^X$, 
including the amount exchanged with opposite-side orders. It is different from
the net $Y$ input $z$ to the pool. For example, in the Sell-$Y$-dominated
case, the one-sided mechanisms introduced below ensure that 
the sell-$Y$ orders spend a total of $P\ge M$, so the 
assigned sell-$X$ outcomes can be settled. The remaining
$Y$ is sent to the pool, giving $z=P-M$.

\paragraph{Requirements on the one-sided mechanism.}
The matching construction reduces the remaining problem to
one side, but it does not specify how orders on that side share the
available compensation. The three mechanisms below make this choice
differently. We first state the properties that each one-sided rule
must satisfy to work with the matching construction.

For the sell-$Y$ direction, let $G:[0,\infty)\to[0,\infty)$ be an
effective compensation curve with $G(0)=0$, and write
$\gamma(t):=G'_+(t)$ and $\gamma_0:=G'_+(0)$. We require $G$ to be
continuous, strictly increasing, and concave, with $0<\gamma_0<\infty$.
Here $G(t)$ is the total amount of $X$ available to compensate the
orders when they collectively spend $t$ units of $Y$.
The marginal compensation $\gamma(t)$ is the rate available for an
additional unit of $Y$. Concavity means that this rate cannot increase
as more $Y$ is traded. For the matching construction, $G$ is the
augmented curve $G_M^Y$.

The input orders have asks at most $\gamma_0$. Writing $P:=\sum_i y_i$,
the mechanisms below satisfy
\begin{equation}
\begin{gathered}
    0\le y_i\le\reportq{i},\qquad
    \reportv{i}y_i\le x_i\le\gamma_0y_i,
    \qquad \sum_i x_i=G(P),\\
    y_i<\reportq{i}\quad\Longrightarrow\quad
        \reportv{i}\ge G'_+(P).
\end{gathered}
\label{eq:tsf-kernel-requirements}
\end{equation}
These conditions have direct economic meanings. Each order spends no
more than its declared budget and receives compensation worth at least
its reported ask. Its compensation rate is also bounded above by the
initial rate, which in particular rules out receiving $X$ without
spending $Y$. The total compensation must equal $G(P)$, so the
allocation distributes exactly the output provided by the effective
curve. Finally, an order may have unspent budget only if its ask is
at least the ending marginal compensation: no further trade with
the pool would give it strictly positive surplus at its reported
value. This is the one-sided weak-local-efficiency requirement.
We also require anonymity. The mechanisms constructed below indeed use only
reported values and budgets and treat the input bids as a multiset,
so reordering has no effect on the execution outcome.

One further requirement ensures that the initial-price trades promised
to the opposite side can actually be settled. In a Sell-$Y$-dominated
profile, the eligible sell-$X$ orders have been assigned full-budget
trades and together must receive $M=\rho_0B_X$ units of $Y$.
These tokens must come from the sell-$Y$ orders. The one-sided mechanism
thus must ensure that their total spending satisfies $P\ge M$.

On the augmented curve $G_M^Y$, these first $M$ units of $Y$ are
exchanged at the constant initial rate $\sigma_0$. We state the
requirement in a more general form, which will be mainly used in~\autoref{sec:tsf-public-values} 
for the residual curves in the public-value mechanism. Let $L\ge0$ be any
input amount up to which the curve $G$ offers the constant initial
rate $\gamma_0$, meaning that $G(t)=\gamma_0t$ for all $0\le t\le L$.
The mechanism must execute at least $L$ units of input if enough
reported budget is available, and otherwise exhaust all reported
budgets:
\begin{equation}
    G(t)=\gamma_0t\ \text{for }0\le t\le L
    \quad\Longrightarrow\quad
    P\ge\min\left\{L,\sum_i\reportq{i}\right\}.
    \label{eq:tsf-linear-completion}
\end{equation}
We call this condition \emph{linear completion}. It does not require
the mechanism to stop at $L$; further trade may take place beyond it.
If the total reported budget is at most $L$, all orders spend their
full budgets and receive $x_i=\gamma_0\reportq{i}$.\footnote{The insufficient-budget case does not arise within this
Sell-$Y$-dominated branch. We include it because later incentive
proofs compare reports on a fixed augmented curve even when a
deviation changes which side is dominant. A same-direction budget
misreport leaves the sell-$X$ reports, and hence $M$, unchanged,
but may reduce $B_Y$ below $M$. The actual two-sided construction
then treats sell-$Y$ orders as the minority and gives them
full-budget trades at the initial price. Linear completion ensures
that an auxiliary execution on the unchanged curve $G_M^Y$
reproduces exactly these sell-$Y$ outcomes.}
For a Sell-$Y$-dominated profile, taking $G=G_M^Y$ and $L=M$
gives $P\ge M$, since $B_Y\ge M$.

Note that an order whose ask equals the initial rate earns zero surplus from
these trades, so leaving it unfilled need not violate weak local
efficiency. Linear completion requires the mechanism to execute such
zero-surplus trades as well, so that the promised opposite-side trades
can be settled.

The sell-$X$ requirements exchange the input and output tokens.
Each later subsection verifies these properties for its one-sided
rule; incentive compatibility requires a separate argument.

\paragraph{Ending prices at kinks.}
The ending prices $(\sigma_e,\rho_e)$ refer to the pool's final state
after settlement. When the pricing curve is differentiable there,
they are the marginal exchange rates in the two directions. At a
kink, however, the actual directional marginal compensations need
not be reciprocals. To specify the pair used in the
model's weak-local-efficiency conditions, we use the actual marginal
compensation in the dominant trading direction and define the other
price by reciprocity.

In the Sell-$Y$-dominated case, the pool receives a net amount
$S_Y-R_Y\ge0$ of $Y$. Using the sell-$Y$ compensation curve $F$, set
\[
    \sigma_e:=F'_+(S_Y-R_Y),
    \qquad \rho_e:=\frac{1}{\sigma_e}.
\]
In the Sell-$X$-dominated case, the pool receives a net amount
$S_X-R_X\ge0$ of $X$. Using the sell-$X$ compensation curve $H$, set
\[
    \rho_e:=H'_+(S_X-R_X),
    \qquad \sigma_e:=\frac{1}{\rho_e}.
\]
The right derivative in each case describes the compensation for an
additional trade in the dominant direction. These definitions give
a reciprocal pair of supporting prices at the ending pool state.
At a kink, the reciprocal price need not equal the actual marginal
compensation in the minority direction. Every eligible minority
order spends its full reported budget, so its weak-local-efficiency
condition imposes no further restriction on that price. When the
pool remains at its initial state, either definition gives
$(\sigma_e,\rho_e)=(\sigma_0,\rho_0)$ by (A3). 

\begin{lemma}[Settlement under the augmented curve]
\label{lem:tsf-settlement}
Suppose the one-sided mechanisms satisfy
\eqref{eq:tsf-kernel-requirements} and \eqref{eq:tsf-linear-completion},
and are anonymous. The two-sided matching
construction satisfies the reported-budget constraints, individual rationality, pricing curve conformity, 
anonymity, and weak local efficiency. Moreover, every finite collection $\playerids$ of identities
satisfies the portfolio price bound
\begin{equation}
    \Delta_X(\playerids)+\sigma_0\Delta_Y(\playerids)\le0.
    \label{eq:tsf-portfolio-price-bound}
\end{equation}
\end{lemma}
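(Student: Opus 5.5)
By the symmetry between the two trading directions, it suffices to treat a Sell-$Y$-dominated profile, $B_Y\ge\rho_0B_X$ with $M=\rho_0B_X$. Throughout, the one-sided rule runs on $G=G_M^Y$ with $\gamma_0=\sigma_0$. Ineligible orders receive $(0,0)$, so they satisfy every constraint trivially, contribute nothing to the portfolio bound, and are unconstrained by wLE. I would verify the properties in the following order: budget constraints and individual rationality, then conformity, then anonymity, then wLE, and finally the portfolio bound.

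\textbf{Budgets, IR, and conformity.} Eligible sell-$X$ orders spend exactly $\reportr{j}$ and receive $\rho_0\reportr{j}$. Since $\reportu{j}\le\rho_0$, their truthful utility is $(\rho_0-u_j)r_j\ge0$. For sell-$Y$ orders, \eqref{eq:tsf-kernel-requirements} gives $0\le y_i\le\reportq{i}$ and $x_i\ge\reportv{i}y_i$, so truthful utility is nonnegative. For conformity, linear completion with $L=M$ and $B_Y\ge M$ gives $P\ge M$, so $G_M^Y(P)=\sigma_0M+F(P-M)$. Then $S_Y=P$, $R_Y=M$, $z=P-M\ge0$, $R_X=G_M^Y(P)$, and $S_X=B_X=\sigma_0M$. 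Hence $R_X-S_X=F(z)$, which is exactly \eqref{eq:nc-aggregate}.

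\textbf{Anonymity and wLE.} The eligibility test, $B_Y$, $B_X$, $M$, and the per-order outcomes depend only on reports, and the one-sided rule is anonymous by hypothesis, so the construction is anonymous. For wLE, every eligible sell-$X$ order fills completely, so \eqref{eq:WLE-X} holds. For an eligible sell-$Y$ order with $y_i<\reportq{i}$, the kernel requirement gives $\reportv{i}\ge G'_+(P)$, and I must show this equals $\sigma_e=F'_+(z)$. If $P>M$, then $G'_+(P)=F'_+(P-M)$ directly. If $P=M$, the right derivative of $G_M^Y$ at $M$ is $F'_+(0)=\sigma_0$ by (A3), which again equals $F'_+(z)$ since $z=0$. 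So the two agree in both cases. The kink convention for $(\sigma_e,\rho_e)$ was designed precisely for this step, and its boundary case $P=M$ is the only delicate point in the wLE argument.

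\textbf{Portfolio bound.} I would prove \eqref{eq:tsf-portfolio-price-bound} order by order and then sum over $\playerids$. An eligible sell-$X$ order contributes $\Delta_X=-\reportr{j}$ and $\Delta_Y=\rho_0\reportr{j}$, so $\Delta_X+\sigma_0\Delta_Y=0$. A sell-$Y$ order contributes $x_i-\sigma_0y_i\le0$ by the upper bound $x_i\le\gamma_0y_i$ in \eqref{eq:tsf-kernel-requirements}. Ineligible orders contribute zero.

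\textbf{Sell-$X$-dominated case.} This case is symmetric, with $H$ in place of $F$. The bound contributions there read $y_j-\rho_0x_j\le0$, and multiplying by $\sigma_0$ converts them to the stated form. The minority sell-$Y$ orders contribute $\sigma_0\reportq{i}-\sigma_0\reportq{i}=0$.

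I expect no step to be hard. The main care is needed at the matching between the right derivative of the augmented curve at $P=M$ and the convention for $\sigma_e$.
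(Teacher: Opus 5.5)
Your proof is correct and follows the paper's argument essentially step for step: linear completion with $L=M$ gives $P\ge M$ and hence conformity; the kernel requirements and full minority fills give the budget constraints, individual rationality, anonymity, and wLE via the kink convention; and the per-order bounds $x_i\le\sigma_0y_i$ and $y_j\le\rho_0x_j$ sum to the portfolio bound. Your separate check at $P=M$ is fine but not actually delicate, since $G_M^Y(t)=\sigma_0M+F(t-M)$ for all $t\ge M$, so $G'_+(P)=F'_+(P-M)=\sigma_e$ for every $P\ge M$ without a case split.
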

\begin{proof}
Suppose $B_Y\ge\rho_0B_X$. Since $B_Y\ge M$, linear completion gives
$P\ge M$. The majority orders supply enough $Y$ to pay the minority
orders. The remaining token flows satisfy
\[
    R_Y=M,\qquad S_X=B_X,\qquad S_Y=P,\qquad
    R_X=G_M^Y(P)=B_X+F(P-M).
\]
Consequently $R_X-S_X=F(S_Y-R_Y)$, as required. The other direction is
symmetric, using $H(t)=-F^{-1}(-t)$.

The reported-budget constraints and individual rationality follow
from \eqref{eq:tsf-kernel-requirements} and the minority outcomes.
For weak local efficiency, all eligible minority orders are fully spent.
An unfilled majority order has reported value at least the ending
marginal compensation of the pool in its trading direction, which is
the selected supporting price under the convention above.
The matching step uses only reported values and budgets, and the
one-sided mechanisms are anonymous. Their composition is therefore
anonymous.

Finally, every executed sell-$Y$ order satisfies $x_i\le\sigma_0y_i$,
and every executed sell-$X$ order satisfies $y_j\le\rho_0x_j$.
Summing these inequalities over any collection gives
\eqref{eq:tsf-portfolio-price-bound}.
\end{proof}

Lemma~\ref{lem:tsf-settlement} separates the analysis of each
feasibility result into two parts. First, we specify a one-sided
mechanism and verify the conditions stated above, 
including~\eqref{eq:tsf-kernel-requirements},~\eqref{eq:tsf-linear-completion}, and anonymity.
The lemma then guarantees that the resulting two-sided mechanism 
satisfies all desired properties except for IC, including the reported-budget constraints,
individual rationality, pricing curve conformity, weak local efficiency,
and anonymity. Second, we prove incentive compatibility of the
two-sided mechanism under the deviations allowed in the corresponding
setting. In particular, reordering has no effect on these mechanisms, so the same incentive
proofs apply to a builder-as-user under the same reporting restrictions. 
The following subsections carry out these two steps for each
of the three settings.

\subsection{No Sybil Identities: The Hand-in-Hand Mechanism}
\label{sec:tsf-no-sybil}

This subsection considers the setting in which each user has a
single fixed identity and may submit at most one order. It cannot
split its order or create additional identities to submit Sybil orders.
As before, the user may report an arbitrary value and budget, including
a budget above its true budget, and may choose either trading direction.

\paragraph{The one-sided mechanism.} 
We describe the one-sided mechanism for a Sell-$Y$-dominated profile. 
For a Sell-$X$-dominated profile, exchange the roles of $X$ and $Y$ 
and use the augmented curve $G_M^X$ instead. 

The eligible sell-$Y$ orders face the augmented curve 
$G:=G_M^Y$, with marginal compensation $\gamma(t):=G'_+(t)$. 
Each order starts with $x_i=y_i=0$ and remaining
budget $b_i=\reportq{i}$. For a ask $v\ge0$, define its
\emph{value cutoff} as the cumulative input beyond which the marginal
compensation falls below $v$:
\[
    \theta(v):=\sup\{t\ge0:\gamma(t)\ge v\},
\]
where the supremum of the empty set is zero. Here $t$ measures the
total $Y$ spent by all orders, not the spending of an individual order.
An order is willing to trade up to this cutoff, including any interval 
on which the marginal compensation equals the ask. 
At cumulative input $t$, an order is \emph{active} if it has remaining
budget and has not reached its value cutoff. Define the active set
\[
    A(t):=\{i:b_i(t)>0\text{ and }t<\theta(\reportv{i})\}.
\]
If $A(t)$ is empty, stop. Otherwise, all active orders trade each
marginal transaction in equal shares: for each $i\in A(t)$, set
\begin{equation}
    \mathrm dy_i=\frac{\mathrm dt}{|A(t)|},\qquad
    \mathrm dx_i=\frac{\mathrm dG(t)}{|A(t)|},\qquad
    \mathrm db_i=-\frac{\mathrm dt}{|A(t)|},
    \label{eq:tsf-equal-dynamics}
\end{equation}
while inactive orders keep their outcomes. Equivalently, with the active set
fixed, advance until some remaining budget is exhausted or some value
cutoff is reached, then remove the corresponding orders from the active set. This gives
finitely many stages.

\begin{theorem}[Feasibility without Sybil identities]
\label{thm:tsf-no-sybil}
The two-sided mechanism obtained by combining the initial-price matching construction
with the one-sided hand-in-hand mechanism satisfies incentive compatibility
when every user controls at most one identity. It also satisfies the properties in
Lemma~\ref{lem:tsf-settlement}.
\end{theorem}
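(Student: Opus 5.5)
The proof splits into two parts. First we check that the one-sided hand-in-hand rule satisfies \eqref{eq:tsf-kernel-requirements}, \eqref{eq:tsf-linear-completion}, and anonymity, so that Lemma~\ref{lem:tsf-settlement} yields every property except IC. Then we prove IC directly for single-identity users.

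\emph{The kernel requirements.} Each active order's budget falls at rate $\mathrm dt/|A(t)|$, and it leaves $A$ once $b_i=0$, so $y_i\le\reportq{i}$. It receives $\mathrm dG=\gamma(t)\,\mathrm dt$ only while $t<\theta(\reportv{i})$, that is, while $\gamma(t)\ge\reportv{i}$. Integrating gives $\reportv{i}y_i\le x_i$, and concavity gives $\gamma(t)\le\gamma_0$, hence $x_i\le\gamma_0 y_i$.

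The receipts sum to $\int_0^P\gamma=G(P)$. Suppose an order stops with $b_i>0$. It must then have reached $P\ge\theta(\reportv{i})$, which means $\gamma(P)<\reportv{i}$, or $\gamma(P)\le\reportv{i}$ at a boundary point. This is the one-sided wLE clause.

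For linear completion, note that every input order has ask at most $\gamma_0$, and $\gamma=\gamma_0$ on $[0,L)$. So $\theta(\reportv{i})\ge L$ for all $i$, and no order leaves for value reasons before $t=L$. The process therefore runs until $t\ge L$ or until all budgets are exhausted. Anonymity holds because the rule depends only on the multiset of reports.

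\emph{IC for a single identity.} Since reordering has no effect, builders reduce to users. Fix the other reports and take a user with true type $Y(v,q)$; the sell-$X$ case is symmetric. The deviations are: the same direction with a misreported $(\tilde v,\tilde q)$, an ineligible report, or an opposite-direction report.

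An ineligible report yields $(0,0)$. An opposite-direction report executes at a rate no better than $\rho_0$ (by the portfolio price bound \eqref{eq:tsf-portfolio-price-bound}), so a sell-$Y$ user gains $Y$ at price at least $\sigma_0$ in $X$. If $v\le\sigma_0$ this gives nonpositive utility, and otherwise the user incurs the catastrophic penalty. Truthful reporting gives nonnegative utility by individual rationality. A truthful report with $v>\sigma_0$ is dropped, and any same-direction execution then pays rate at most $\sigma_0<v$, so zero is optimal.

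The main case is a same-direction misreport with $v\le\sigma_0$. I would treat it by a monotone-path argument. Whether the report is truthful or not, the user's share at cumulative input $t$ equals $\gamma(t)/|A(t)|$ per $\mathrm dt/|A(t)|$ of $Y$, so each marginal unit of its $Y$ trades at rate $\gamma(t)$. Under the equal-share dynamics, the trajectory $(t,b_j(t))$ of the other orders does not depend on when the user exits, except through $|A|$, and the user's presence only slows the others' depletion. The key claim I would establish is this: parametrize by the user's own cumulative spending $y$; then the marginal rate it receives, $\gamma(t(y))$, is nonincreasing in $y$ and is the same function of $y$ under any report, up to the point where that report makes the user exit. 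Given this, the user's utility is $\int_0^{y}(\gamma(t(s))-v)\,\mathrm ds$. The truthful report stops exactly when either the integrand turns negative (the value cutoff) or $y=q$. This maximizes the integral over all feasible stopping points, and overspending beyond $q$ is catastrophic.

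The claim must be verified through the two-sided matching, and this is the step I expect to be the main obstacle. A budget change by the user can change $B_Y$ and flip which side is dominant. If the flip makes sell-$Y$ the minority, the user trades its full budget at $\sigma_0$, which by linear completion coincides with running the unchanged curve $G_M^Y$. If instead the user is on the minority side and overreports its budget to become the majority, it faces $G^Y_M$ from below, and the same stopping argument applies on that curve. I would handle both cases by showing that every outcome the user can reach lies on the single function $y\mapsto\int_0^y\gamma(t(s))\,\mathrm ds$ computed on the fixed curve $G^Y_M$, with $M$ determined by the others' reports. The value cutoff then makes truthful reporting optimal.
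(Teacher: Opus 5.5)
Your proposal is correct and follows essentially the paper's proof. You verify the kernel conditions and linear completion so that Lemma~\ref{lem:tsf-settlement} applies, and you use linear completion to compute every same-direction outcome, on either the majority or minority side, on the fixed curve $G_M^Y$. You then use the fact that equal sharing gives the user identical trades under any report until its first exit, so stopping at the true value cutoff or the true budget is optimal; parametrizing by the user's own spending simply repackages the paper's ``who exits first'' comparison.

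One small correction concerns the opposite-direction case. A positive $Y$ receipt forces a positive $X$ sale, so it is catastrophic for every $v$, not only for $v>\sigma_0$. A zero $Y$ receipt gives nonpositive utility. Your conclusion stands.
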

\begin{proof}
Let $P$ be the terminal cumulative input. 
Summing \eqref{eq:tsf-equal-dynamics} gives $\sum_i y_i=P$ and
$\sum_i x_i=G(P)$. Each order trades only at marginal compensation
between its ask and $\gamma_0$, so
\eqref{eq:tsf-kernel-requirements} holds. An unfilled order has reached
its value cutoff, where $\gamma(P)\le\reportv{i}$; later trades can
only reduce the marginal compensation. The active sets and equal-share
updates depend only on reported asks and budgets, establishing anonymity.
On an initial linear segment, every
eligible order with remaining budget is willing to trade, so the
mechanism cannot stop before spending the smaller of the segment's
length and the total reported budget. This proves
\eqref{eq:tsf-linear-completion}. Lemma~\ref{lem:tsf-settlement}
therefore applies.

Fix a true sell-$Y$ user of type $(v,q)$ and all other reports.
First consider an arbitrary report $Y(\reportv{},\reportq{})$ in its
true direction. Let $M:=\rho_0B_X$, where $B_X$ is the total reported
budget of eligible sell-$X$ orders. These orders are unchanged by the
deviation, so $M$ remains fixed.
If sell-$Y$ remains dominated, the mechanism applies the one-sided rule to
$G_M^Y$. Otherwise, every eligible sell-$Y$ order receives a full-budget
trade at the initial price. Running the one-sided rule on $G_M^Y$
would give the same sell-$Y$ outcomes: the total available budget is
below $M$, so linear completion exhausts all reported budgets within
the curve's constant-price portion. Thus both the truthful and
deviating outcomes can be computed on the fixed curve $G_M^Y$, regardless
of which side dominates.

Compare the execution under reports $Y(v,q)$ and
$Y(\reportv{},\reportq{})$, at the same cumulative input $t$.
Under equal sharing, the user's reported value and budget determine only
when its order stops trading, not its share while active.
Until the user's order exits in one execution, the active sets evolve
in the same way: every other order has the same report and receives
the same allocations in both executions. Hence the user receives
exactly the same trades up to that first exit.

It remains to show that neither stopping earlier nor continuing
longer can help.
\emph{If the deviating order exits first,} the truthful execution
only adds trades that stay within the true budget and have nonnegative
true surplus.
\emph{If the truthful order exits first,} it has either exhausted
$q$ or reached its value cutoff $\theta(v)$. In the former case,
any additional deviating sale exceeds the true budget and gives
utility $-\infty$. In the latter case, any additional trade occurs
at marginal compensation at most $v$, so it cannot increase true
utility. Simultaneous exits give the same outcome, and an order
inactive from the start is treated as exiting at $t=0$.
Thus the deviation cannot increase utility.

It remains to consider a report in the opposite direction. A positive
receipt of $Y$ entails a positive sale of $X$, so the outcome is a
strict reversal of the user's desired trade and has utility $-\infty$.
An opposite report with no $Y$ receipt has nonpositive utility. Neither
outcome can improve on truthful reporting, which gives nonnegative
utility by individual rationality. Exchanging the two tokens proves
the claim for a true sell-$X$ user.
\end{proof}

\subsection{No Budget Over-Reporting: Proportional Hand-in-Hand}
\label{sec:tsf-no-overreport}

This subsection considers the setting in which each user may
create Sybil identities and report arbitrary values, but cannot
over-report its aggregate budget in its true trading direction.
Specifically, for a true sell-$Y$ user of type $(v,q)$, any finite collection $\playerids$ of identities must satisfy
\begin{equation}
    \sum_{i\in\playerids\cap\Yside}\reportq{i}\le q.
    \label{eq:tsf-budget-restriction}
\end{equation}
The user may under-report its budget or split it among multiple
sell-$Y$ identities with arbitrary reported values. It may also submit sell-$X$ identities with arbitrary
asks and declared budgets. The symmetric restriction applies
to a true sell-$X$ user. As in the original model, the true-budget
constraint on realized net sales remains unchanged.

\paragraph{The one-sided mechanism.}
For a Sell-$Y$-dominated profile, the eligible sell-$Y$ orders face the
augmented curve $G:=G_M^Y$. For a Sell-$X$-dominated profile, exchange
$X$ and $Y$ and use $G_M^X$.

Let $t$ denote the cumulative $Y$ input, including trades with
opposite-side orders. Start at $t=0$, with $x_i(0)=y_i(0)=0$ and
remaining budget $b_i(0)=\reportq{i}$. Determine the active set $A(t)$
using the same budget and value-cutoff conditions as in the hand-in-hand
mechanism. If $A(t)$ is empty, stop. Otherwise, let
$S(t):=\sum_{j\in A(t)}b_j(t)$ and divide each marginal transaction
in proportion to remaining budgets: for each $i\in A(t)$, set
\begin{equation*}
    \mathrm dy_i=\frac{b_i(t)}{S(t)}\,\mathrm dt,\qquad
    \mathrm dx_i=\frac{b_i(t)}{S(t)}\,\mathrm dG(t),\qquad
    \mathrm db_i=-\frac{b_i(t)}{S(t)}\,\mathrm dt.
\end{equation*}
Inactive orders retain their allocations and remaining budgets.
Update the active set as orders exhaust their budgets or reach their
value cutoffs, and continue until it is empty. At the terminal cumulative
input $P$, return the outcomes $(x_i(P),y_i(P))$. We call this the
\emph{proportional hand-in-hand mechanism}.

For any two active orders $i,j\in A(t)$, the budget update rule gives
\[
    \frac{\mathrm db_i}{b_i(t)}
        =\frac{\mathrm db_j}{b_j(t)}
        =-\frac{\mathrm dt}{S(t)}.
\]
Their remaining budgets therefore retain their initial ratio:
$b_i(t)/b_j(t)=\reportq{i}/\reportq{j}$.
Hence there is a common remaining-budget fraction $h(t)$ such that
\[
    b_i(t)=\reportq{i}h(t)\qquad(i\in A(t)),
\]
with $h(0)=1$.

To obtain an explicit expression for $h(t)$, let
\[
    W(t):=\sum_{i:\,t<\theta(\reportv{i})}\reportq{i}
\]
be the total initial declared budget of orders whose value cutoffs
have not been reached. Before the process stops, their total remaining
budget is $S(t)=h(t)W(t)$. Substituting this and
$b_i(t)=\reportq{i}h(t)$ into the budget update rule gives
$h'(t)=-1/W(t)$. Thus
\[
    h(t)=1-\int_0^t\frac{\mathrm ds}{W(s)},\qquad
    \frac{b_i(t)}{S(t)}=\frac{\reportq{i}}{W(t)}
    \quad(i\in A(t)).
\]
Thus, each marginal transaction is divided among active orders in
proportion to their initial declared budgets. An order exits at its value
cutoff, or all remaining active budgets are exhausted together when $h$
reaches zero. The terminal cumulative input satisfies
$P=\sum_i y_i\le\sum_i\reportq{i}$, since each order spends at most
its declared budget.

\begin{lemma}[Same-direction incentive compatibility]
\label{lem:tsf-proportional-incentives}
On a fixed curve $G$, proportional hand-in-hand is user incentive
compatible against same-direction deviations satisfying
\eqref{eq:tsf-budget-restriction}.
\end{lemma}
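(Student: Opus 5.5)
The plan is to rewrite each execution as an integral over cumulative input $t$ and compare the truthful and deviating executions pointwise in $t$. Fix the curve $G$, all other reports, and a true sell-$Y$ user of type $(v,q)$. By the closed form derived above, while the process runs, each active order $i$ receives the fraction $\reportq{i}/W(t)$ of every marginal transaction. Suppose the user's identities are $\playerids$, and define
\[
    Q(t):=\sum_{i\in\playerids,\ t<\theta(\reportv{i})}\reportq{i}.
\]
Then $W(t)=Q(t)+W_{-}(t)$, where $W_{-}$ is determined by the other orders' reports alone and is therefore the same in both executions. If the process stops at $P$, the user spends $\int_0^P Q/W\,\mathrm dt$ and receives $\int_0^P (Q/W)\,\gamma\,\mathrm dt$. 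Its true utility is therefore
\[
    \int_0^P \frac{Q(t)}{Q(t)+W_-(t)}\bigl(\gamma(t)-v\bigr)\,\mathrm dt .
\]
There are two stopping conditions: $W$ vanishes, or $h(t)=1-\int_0^t\mathrm ds/W(s)$ reaches zero.

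Under truthful reporting, $Q_{\mathrm{tr}}(t)=q$ for $t<\theta(v)$ and $Q_{\mathrm{tr}}(t)=0$ afterwards. Under any admissible deviation, restriction \eqref{eq:tsf-budget-restriction} gives $Q_{\mathrm{dev}}(t)\le q$ for all $t$. The proof then has three steps.

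\emph{(i) Stopping times on $[0,\theta(v))$.} On this interval $W_{\mathrm{tr}}\ge W_{\mathrm{dev}}$, so $h_{\mathrm{tr}}(t)\ge h_{\mathrm{dev}}(t)$, because $h$ decreases at rate $1/W$. Also $W_{\mathrm{tr}}>0$ there whenever $q>0$. Hence if the truthful execution stops at some $P_{\mathrm{tr}}<\theta(v)$, it does so because $h_{\mathrm{tr}}$ hits zero. Then $h_{\mathrm{dev}}$ has already hit zero, and $P_{\mathrm{dev}}\le P_{\mathrm{tr}}$.

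\emph{(ii) Pointwise domination below the cutoff.} For $t<\min\{P_{\mathrm{tr}},\theta(v)\}$ we have $\gamma(t)\ge v$. The share $Q/(Q+W_-)$ is increasing in $Q$, so the truthful integrand dominates the deviating one. On the part of this interval beyond $P_{\mathrm{dev}}$, the deviation contributes nothing, while the truthful integrand is nonnegative.

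\emph{(iii) Beyond the cutoff.} For $t\ge\theta(v)$, the truthful integrand is zero. There $\gamma(t)\le v$ except possibly on an endpoint set of measure zero, so any deviating trade contributes nonpositive true surplus.

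Combining (i)–(iii) shows that truthful utility is at least the deviation's utility. The truthful outcome respects the true budget, since it spends at most $q$, so truthful utility is finite and nonnegative. If the deviation's outcome is catastrophic, it only makes the comparison easier. The case of a true sell-$X$ user is symmetric.

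The main obstacle is step (i). The deviation can change the stopping time through the shared fraction $h$, so the two integrals run over different horizons. The key is the monotonicity argument: the truthful report maximizes $W$ on $[0,\theta(v))$, so it delays exhaustion the longest, and any extra trading the deviation achieves happens only at $t\ge\theta(v)$, where it is unprofitable. A secondary point is the boundary case $\gamma(t)=v$ on an interval; I would handle it by noting that the integrand vanishes there, so the convention in the definition of $\theta$ does not affect the comparison.
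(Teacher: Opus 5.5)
Your proposal is correct and follows essentially the same argument as the paper. Both proofs compare the user's share $Q(t)/W(t)$ pointwise using $Q(t)\le q$, use the closed form $h(t)=1-\int_0^t \mathrm{d}s/W(s)$ to show that the truthful execution lasts at least as long on the region of nonnegative true surplus, and note that any remaining deviating trades (past $\theta(v)$) have nonpositive true surplus; your explicit integral formula for utility and your case split at $\theta(v)$ only spell out the paper's compressed reasoning.
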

\begin{proof}
Fix $G$, a true sell-$Y$ user of type $(v,q)$, and all other reports.
Let $\playerids$ denote the identities the user submits in a
same-direction deviation. The case $q=0$ is immediate, so assume
$q>0$ and omit zero-budget orders. In the deviating execution,
let $W_{-\playerids}(t)$ and $W_{\playerids}(t)$ denote the contributions
to $W(t)$ from the other users' orders and from the deviating user's
identities, respectively. Thus
\[
    W(t)=W_{-\playerids}(t)+W_{\playerids}(t).
\]
At each position reached before the deviating execution stops that has
positive true marginal surplus, the truthful report contributes $q$
to $W(t)$, whereas $W_{\playerids}(t)\le q$. Consequently, the user's
share under the deviation is at most its truthful share:
\[
    \frac{W_{\playerids}(t)}{W_{-\playerids}(t)+W_{\playerids}(t)}
       \le\frac{q}{W_{-\playerids}(t)+q}.
\]
It remains to show that the truthful execution reaches every such
position. By the formula for $h(t)$, remaining active budgets are
exhausted when $\int_0^t W(s)^{-1}\,\mathrm ds$ reaches one. Before
the deviating execution stops, at any position $t$ with positive
true marginal surplus,
\[
    \int_0^t\frac{\mathrm ds}{W_{-\playerids}(s)+q}
       \le\int_0^t\frac{\mathrm ds}
           {W_{-\playerids}(s)+W_{\playerids}(s)}<1.
\]
The truthful order therefore still has positive remaining budget,
and its value cutoff has not been reached. Hence the truthful
execution reaches $t$. Integrating the nonnegative true marginal
surplus proves that truthful utility is at least the utility from the
deviation's profitable trades. Any remaining deviating trades have
nonpositive true surplus. The sell-$X$ case is symmetric.
\end{proof}

The preceding lemma treats same-direction deviations on a fixed curve.
Opposite-direction identities can change the initial-price matching
capacity $M$. The next lemma provides the utility bounds needed to
handle such deviations.

\begin{lemma}[Utility bounds for initial-price matching]
\label{lem:tsf-proportional-utility-bounds}
Fix a true sell-$Y$ user of type $(v,q)$ with $q>0$ and
$0\le v<\sigma_0$, and let $\playerids$ denote the identities it
uses in a deviation. Fix all other reports, and let $B_{-\playerids}$
be the total declared budget of the other eligible sell-$Y$ orders.
Let $V(M;v,q)$ be the utility of the truthful report $Y(v,q)$ on
$G_M^Y$. For every $d\ge0$,
\begin{align}
    V(M;v,q)&\ge(\sigma_0-v)
        \min\{q,(M-B_{-\playerids})_+\},
        \label{eq:tsf-proportional-prefix-lower}\\
    V(M+d;v,q)&\le V(M;v,q)+(\sigma_0-v)d.
        \label{eq:tsf-proportional-prefix-bound}
\end{align}
Here $(\cdot)_+$ denotes the maximum of its argument and zero.
\end{lemma}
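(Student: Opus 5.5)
Both bounds follow from the explicit form of proportional hand-in-hand on the curve $G_M^Y$. I will write the truthful user's utility as an integral of true marginal surplus along the cumulative input $t$,
\[
    V(M;v,q)=\int_0^{P}\frac{q}{W(t)}\,\mathbf 1[t<\theta(v),\,h(t)>0]\,\bigl(\gamma_M(t)-v\bigr)\,\mathrm dt,
\]
where $\gamma_M:=(G_M^Y)'_+$. Here the user's share of each marginal transaction is $q/W(t)$ while active, and the integrand is nonnegative because the user trades only before its cutoff $\theta(v)$.

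\emph{Lower bound \eqref{eq:tsf-proportional-prefix-lower}.} On $[0,M)$ we have $\gamma_M=\sigma_0>v$, so the truthful order's cutoff is at least $M$. Every other eligible order has ask at most $\sigma_0$, so its cutoff is also at least $M$. Hence on $[0,M)$ no one exits for value reasons, and $W(t)=B_{-\playerids}+q$ there. By the formula for $h$, the process runs at least to $\min\{M,B_{-\playerids}+q\}$. Over that interval the user receives the fraction $q/(B_{-\playerids}+q)$ of the input at surplus rate $\sigma_0-v$. The user's utility is therefore at least $(\sigma_0-v)\,q\min\{M,B_{-\playerids}+q\}/(B_{-\playerids}+q)$. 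I then check that this is at least $(\sigma_0-v)\min\{q,(M-B_{-\playerids})_+\}$. If $M\ge B_{-\playerids}+q$, the expression equals $(\sigma_0-v)q$. Otherwise it is $qM/(B_{-\playerids}+q)$, and $qM/(B_{-\playerids}+q)\ge M-B_{-\playerids}$ is equivalent to $B_{-\playerids}(B_{-\playerids}+q-M)\ge0$, which holds.

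\emph{Upper bound \eqref{eq:tsf-proportional-prefix-bound}.} I compare the executions on $G_M^Y$ and $G_{M+d}^Y$. The key observation is that $G_{M+d}^Y(t)$ coincides with $G_M^Y(t-d)$ shifted by a linear prefix. Precisely, $\gamma_{M+d}(t)=\sigma_0$ for $t<M+d$ and $\gamma_{M+d}(t)=\gamma_M(t-d)$ for $t\ge M+d$. Moreover $\gamma_{M+d}(t)\ge\gamma_M(t)$ pointwise, by concavity. Thus the larger curve is the smaller one with an extra linear segment of length $d$ inserted at position $M$. I would then split the truthful user's utility on $G_{M+d}^Y$ into two parts.

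The first part is the input interval $[M,M+d)$, which lies on the inserted linear segment. There the user's share is at most $1$ and its surplus rate is $\sigma_0-v$, so this part contributes at most $(\sigma_0-v)d$.

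The second part is the remaining input $[0,M)\cup[M+d,\infty)$. I want to show this part contributes at most $V(M;v,q)$. After removing the inserted segment, the marginal prices along this remaining input coincide with those of $G_M^Y$, reparametrized by $t\mapsto t-d$ beyond $M+d$. The cutoffs of all orders transform the same way. The difference from the $G_M^Y$ execution is that budgets have been partly consumed during the inserted segment. Consuming budget can only make each order, including the user, exit earlier, and the user's proportional share is $q/W$ in both executions. Because proportional shares are preserved under uniform shrinking of remaining budgets, the common fraction $h$ is simply smaller after the insertion. So the user's trades on the remaining input form a truncation of its $G_M^Y$ trades. Since all of those trades have nonnegative true surplus, the truncation contributes at most $V(M;v,q)$. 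Adding the two parts gives the bound.

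\emph{Main obstacle.} The hard step is the truncation claim in the upper bound. During the inserted segment, other orders' value exits cannot happen, since all cutoffs are at least $M+d$ when the ask is below $\sigma_0$ and orders with ask exactly $\sigma_0$ remain active through the segment. Hence $W$ differs only by the $h$-clock. I need to check carefully that after the segment, $W(t)$ equals the $G_M^Y$ value $W(t-d)$. This holds because each order's cutoff is shifted by exactly $d$, which gives $h_{M+d}(t)=h_M(t-d)-d/(B_{-\playerids}+q)\le h_M(t-d)$. If this shifting argument runs into trouble at orders whose ask equals $\sigma_0$, I would handle them by a limiting argument or treat them separately: their cutoff lies at the end of the linear segment, so they exit at $M$ on $G_M^Y$ and at $M+d$ on $G_{M+d}^Y$, which is consistent with the same shift by $d$.
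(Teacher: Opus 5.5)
Your proposal is correct and follows essentially the paper's proof. The lower bound is the same computation on the initial-price segment. For the upper bound you use the same key fact: the extra $d$ units of initial-price input drain a uniform fraction $\delta=d/(B_{-\playerids}+q)$ of every eligible budget. After that, the run on $G_{M+d}^Y$ replicates the run on $G_M^Y$ shifted by $d$ in input, with the budget clock advanced by $\delta$, so the user's later trades are a truncation of its trades on $G_M^Y$. Your fallback for asks equal to $\sigma_0$ is unnecessary, since $\theta_{M+d}(u)=\theta_M(u)+d$ holds for every ask $u\le\sigma_0$, including $u=\sigma_0$. The differences from the paper are only bookkeeping. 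You work in cumulative-input coordinates, place the extra segment at position $M$, and bound its contribution using share at most one. The paper instead parameterizes a virtual execution by the spent-budget fraction $\lambda$, places the segment at the front, and treats $\delta\ge1$ as a separate case, which your version absorbs automatically.
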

\begin{proof}
For the lower bound, all eligible orders remain willing throughout
the initial-price segment. The user's share is therefore
$q/(B_{-\playerids}+q)$, and its input on this segment is
$q\min\{M/(B_{-\playerids}+q),1\}$. Each such unit earns surplus
$\sigma_0-v$, and later truthful trades have nonnegative surplus. Thus
\[
    V(M;v,q)\ge(\sigma_0-v)q\min\{M/(B_{-\playerids}+q),1\}
      \ge(\sigma_0-v)\min\{q,(M-B_{-\playerids})_+\}.
\]

For the upper bound, consider a virtual execution in which the user
remains willing until its budget is exhausted. Parameterize this
execution by the fraction $\lambda=1-h(t)\in[0,1]$ of its budget
spent, and let $\pi_M(\lambda)$ be the marginal compensation on
$G_M^Y$ at that point. Since the user's remaining budget is $qh(t)$,
each increment $\mathrm d\lambda$ allocates $q\,\mathrm d\lambda$
units of input to it. The truthful and virtual executions agree up
to the user's value cutoff or budget exhaustion. Marginal compensation
is nonincreasing, so its truthful utility is
\begin{equation}
    V(M;v,q)=q\int_0^1(\pi_M(\lambda)-v)_+
        \,\mathrm d\lambda.
    \label{eq:tsf-proportional-clock}
\end{equation}
The positive part discards the virtual trades below the user's true value.

Write $\delta:=d/(B_{-\playerids}+q)$. If $\delta<1$, the extra
$d$ units of initial-price input on $G_{M+d}^Y$ consume the fraction
$\delta$ of each eligible order's budget. Thereafter, positions
$d+t$ on $G_{M+d}^Y$ and $t$ on $G_M^Y$ have the same marginal
compensation and the same other orders whose value cutoffs have not
been reached. Thus their totals $W$ agree at corresponding positions,
and $\mathrm d\lambda=\mathrm dt/W(t)$ gives the same subsequent
increments of spent-budget fraction. Consequently,
\[
    \pi_{M+d}(\lambda)=
    \begin{cases}
        \sigma_0,&0\le\lambda<\delta,\\
        \pi_M(\lambda-\delta),&\delta\le\lambda<1.
    \end{cases}
\]
Equation~\eqref{eq:tsf-proportional-clock} now gives
\begin{align*}
    V(M+d;v,q)
       &=q\delta(\sigma_0-v)
         +q\int_0^{1-\delta}(\pi_M(\lambda)-v)_+
             \,\mathrm d\lambda\\
       &\le V(M;v,q)+q\delta(\sigma_0-v)
        \le V(M;v,q)+d(\sigma_0-v).
\end{align*}
If $\delta\ge1$, the last inequality follows directly from
$V(M+d;v,q)\le q(\sigma_0-v)$ and $d\ge q$.
\end{proof}

\begin{theorem}[Feasibility without aggregate budget over-reporting]
\label{thm:tsf-no-overreport}
The two-sided mechanism obtained by combining the initial-price matching construction
with the one-sided proportional hand-in-hand mechanism satisfies
incentive compatibility under \eqref{eq:tsf-budget-restriction},
including against finite collections of identities in both directions.
It also satisfies the properties in Lemma~\ref{lem:tsf-settlement}.
\end{theorem}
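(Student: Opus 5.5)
The proof has two parts. The first is routine: verify the one-sided requirements so that Lemma~\ref{lem:tsf-settlement} applies. The second, which carries the content, is incentive compatibility against deviations that mix same-direction Sybils with opposite-direction identities. By symmetry it suffices to treat a true sell-$Y$ user of type $(v,q)$ with $q>0$.

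\textbf{Settlement properties.} Each order trades only while active. Its marginal compensation then lies between its ask and $\gamma_0$, which gives $\reportv{i}y_i\le x_i\le\gamma_0 y_i$. Summing the dynamics gives $\sum_i x_i=G(P)$. An order that stops with unspent budget has reached its value cutoff, so its ask is at least $G'_+(P)$. On an initial linear segment every eligible order is willing to trade, so the process cannot stop early, and linear completion holds. Anonymity is immediate.

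\textbf{Trivial and ineligible cases.} If $v\ge\sigma_0$, the portfolio price bound \eqref{eq:tsf-portfolio-price-bound} gives every deviation utility $\Delta_X+v\Delta_Y\le(\sigma_0-v)\Delta_Y$. This is positive only if $\Delta_Y>0$. But $\Delta_Y>0$ forces $\Delta_X<0$ (by the bound) and hence utility $-\infty$, unless $\Delta_X=0$, in which case the bound forces $\Delta_Y\le0$. So no deviation beats truthful nonnegative utility. Assume henceforth $v<\sigma_0$.

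\textbf{Decomposing a general deviation.} Split the deviating identities into sell-$Y$ identities $\playerids_Y$ and sell-$X$ identities $\playerids_X$. Suppose the profile is Sell-$X$ dominated. Then all eligible sell-$Y$ identities trade at the initial price, and the $X$-side identities trade at a rate no better than $\rho_0$. The price bound then gives utility at most $(\sigma_0-v)\cdot(\text{net }Y\text{ sold})\le(\sigma_0-v)\min\{q,\dots\}$. I will compare this with the lower bound \eqref{eq:tsf-proportional-prefix-lower}, noting that the truthful profile then has $M$ reduced by the removed $X$ budget $d_X$. Suppose instead the profile is Sell-$Y$ dominated. Let $d:=\rho_0\sum_{j\in\playerids_X\cap\Xside_0}\reportr{j}$ be the extra matching capacity created by the user's own eligible sell-$X$ identities, and let $M_0$ be the capacity from others. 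The sell-$X$ identities sell $d\sigma_0$ units of $X$ and receive $d$ units of $Y$. The sell-$Y$ identities run on $G^Y_{M_0+d}$. By Lemma~\ref{lem:tsf-proportional-incentives}, applied on this fixed curve with true budget $q+d$ and $v$, their surplus is at most $V(M_0+d;v,q+d)$. Here the net-budget constraint forces the gross $Y$ sale to be at most $q+d$. The point is that the budget restriction \eqref{eq:tsf-budget-restriction} caps declared sell-$Y$ budget at $q$, so the lemma should be applied with declared budget $q$, giving surplus at most $V(M_0+d;v,q)$ measured against the sell-$Y$ trades. The buyback leg then contributes exactly $-d\sigma_0+vd=-(\sigma_0-v)d$ in true utility. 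By \eqref{eq:tsf-proportional-prefix-bound}, the total is at most $V(M_0+d;v,q)-(\sigma_0-v)d\le V(M_0;v,q)$, which is the truthful utility. Ineligible opposite identities receive $(0,0)$ and are irrelevant.

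\textbf{Main obstacle.} The hard step is making the same-direction lemma rigorous when the sell-$Y$ identities' total gross sale may exceed $q$ (offset by buyback), and when the deviation flips which side dominates. For the domination flip, I would first try the device from Theorem~\ref{thm:tsf-no-sybil}: use linear completion to evaluate every outcome on the fixed curve $G^Y_{M_0+d}$, so that both branches are handled by the same accounting. I would also check that the truthful profile's dominance matches $M_0$ rather than requiring a separate case.
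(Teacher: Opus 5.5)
Your overall structure matches the paper's proof: settlement via Lemma~\ref{lem:tsf-settlement}, the portfolio price bound for trivial types, and a case split on which side dominates after the deviation. Your Sell-$Y$-dominated branch is exactly the paper's argument: Lemma~\ref{lem:tsf-proportional-incentives} on $G^Y_{M_0+d}$ with budget $q$, buyback cost $(\sigma_0-v)d$, then \eqref{eq:tsf-proportional-prefix-bound}.

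The gap is the Sell-$X$-dominated branch, where ``$\min\{q,\dots\}$'' hides the only nontrivial step. To invoke \eqref{eq:tsf-proportional-prefix-lower}, you must show the deviation's utility is at most $(\sigma_0-v)\min\{q,(M_0-B_{-\playerids})_+\}$, with $B_{-\playerids}$ the others' eligible sell-$Y$ budget. The portfolio bound alone gives only $(\sigma_0-v)q$, which can exceed $V(M_0;v,q)$ when $M_0\le B_{-\playerids}$. The missing idea is linear completion on the \emph{reverse} side. After the flip, the sell-$X$ rule must complete its crossing prefix of length $\sigma_0(B_{-\playerids}+C)$, where $C\le q$ is your eligible sell-$Y$ declared budget. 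The other users can supply at most $\sigma_0M_0$ units of $X$. Hence your own sell-$X$ identities sell $\sigma_0e$ with $e\ge(B_{-\playerids}+C-M_0)_+$ and receive at most $e$ units of $Y$. This gives $\Util^Y(\playerids;v,q)\le(\sigma_0-v)(C-e)\le(\sigma_0-v)\min\{q,(M_0-B_{-\playerids})_+\}\le V(M_0;v,q)$. In your net-$Y$ phrasing: the sell-$X$ side receives at least $B_{-\playerids}+C$ in total and the others receive at most $M_0$, so your net $Y$ sale is at most $M_0-B_{-\playerids}$.

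Your ``main obstacle'' is partly a non-issue and partly this same gap. Under \eqref{eq:tsf-budget-restriction}, your declared sell-$Y$ budgets sum to at most $q$, and realized sales never exceed declared budgets. So gross sell-$Y$ sale cannot exceed $q$, and Lemma~\ref{lem:tsf-proportional-incentives} applies verbatim. The truthful profile needs no separate case: if $B_{-\playerids}+q<M_0$, linear completion makes the run on $G^Y_{M_0}$ reproduce the full-budget initial-price outcome, so truthful utility is $V(M_0;v,q)$ regardless.

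The fixed-curve device does not absorb the flip with $d$ as declared. In the Sell-$X$-dominated branch your sell-$X$ identities need not be fully spent at $\rho_0$, so their cost need not be $(\sigma_0-v)d$. It does work with the realized $e:=\rho_0S_X(\playerids)$ in place of $d$. Your sell-$X$ cost is then at least $(\sigma_0-v)e$, and $G^Y_{M_0+e}$ reproduces your sell-$Y$ outcomes, giving $\Util^Y(\playerids;v,q)\le V(M_0+e;v,q)-(\sigma_0-v)e\le V(M_0;v,q)$. But that reproduction needs $B_{-\playerids}+C\le M_0+e$, which is precisely the reverse-side linear completion above.

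Two small slips. First, the portfolio bound gives $\Delta_X+v\Delta_Y\le(v-\sigma_0)\Delta_Y$, not $(\sigma_0-v)\Delta_Y$. Second, $q=0$ is not covered ``by symmetry'' but by the same bound, since finite utility forces net $Y$ sale $n\in[0,q]$ and utility at most $(\sigma_0-v)n$.
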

\begin{proof}
The settlement requirements follow as for equal-share hand-in-hand:
the shares sum to one, trading occurs only at acceptable marginal
compensation, and no initial linear segment can be left incomplete
while budget remains. The proportional updates depend only on reported
asks and remaining budgets, establishing anonymity.

Fix a true sell-$Y$ user of type $(v,q)$ and all other reports.
Before analyzing its deviations, we use the portfolio price bound
from Lemma~\ref{lem:tsf-settlement} to bound its utility.
For its collection $\playerids$ of identities, put
$n:=S_Y(\playerids)-R_Y(\playerids)$.
If $n<0$, \eqref{eq:tsf-portfolio-price-bound} gives $\Delta_X<0$,
so the trade has catastrophic utility. Every deviation of finite
utility therefore satisfies
\begin{equation}
    0\le n\le q,\qquad
    \Util^Y(\playerids;v,q)\le(\sigma_0-v)n.
    \label{eq:tsf-finite-utility-bound}
\end{equation}
This implication uses only the portfolio price bound and the model's
utility definition, not the particular one-sided allocation rule.
Since truthful utility is nonnegative, types with $q=0$ or
$v\ge\sigma_0$ cannot profitably deviate. We may therefore assume
$q>0$, $v<\sigma_0$, and a deviation of finite utility.

Let $B_{-\playerids}$ be the other users' total eligible sell-$Y$ budget and let
$M$ be $\rho_0$ times their total eligible sell-$X$ budget.
The truthful utility is $V(M;v,q)$ from
Lemma~\ref{lem:tsf-proportional-utility-bounds}, including when the truthful
order is on the minority side: if $B_{-\playerids}+q<M$, linear completion makes
the auxiliary execution on $G_M^Y$ spend every eligible sell-$Y$
budget fully at the initial price, exactly as in the two-sided
construction. Write $C\le q$ for the user's total eligible sell-$Y$
declared budget, and write $d$ for $\rho_0$ times its total eligible
sell-$X$ declared budget.

Suppose first that sell-$Y$ is the majority side after the deviation,
so $B_{-\playerids}+C\ge M+d$. The user's sell-$X$ identities are fully spent at
the initial price: together they receive $d$ units of $Y$ and sell
$\sigma_0d$ units of $X$. Its sell-$Y$ identities face $G_{M+d}^Y$.
Lemma~\ref{lem:tsf-proportional-incentives} and
\eqref{eq:tsf-proportional-prefix-bound} give
\[
    \Util^Y(\playerids;v,q)
       \le V(M+d;v,q)-(\sigma_0-v)d
       \le V(M;v,q).
\]

Suppose instead that sell-$X$ is the majority side. The user's
sell-$Y$ identities spend $C$ and receive $\sigma_0C$. Put
$e:=\rho_0 S_X(\playerids)$. Their opposite-direction receipts obey
$R_Y(\playerids)\le e$, and hence
\[
    \Util^Y(\playerids;v,q)
      =\sigma_0C-\sigma_0e-v(C-R_Y(\playerids))
      \le(\sigma_0-v)(C-e).
\]
The reverse mechanism must complete its initial linear prefix of
length $\sigma_0(B_{-\playerids}+C)$. Since the other users supply at most
$\sigma_0M$ units of $X$, this requires
$e\ge(B_{-\playerids}+C-M)_+$. It follows that
\[
    \Util^Y(\playerids;v,q)
       \le(\sigma_0-v)\bigl(C-(B_{-\playerids}+C-M)_+\bigr)
       \le(\sigma_0-v)\min\{q,(M-B_{-\playerids})_+\}
       \le V(M;v,q).
\]
The final step uses \eqref{eq:tsf-proportional-prefix-lower}.
Exchanging the tokens proves the result for true sell-$X$ users.

The mechanism's outcomes are independent of bid ordering, so the
preceding argument also applies to builders-as-users under the same
reporting restrictions. For a builder without intrinsic trading
demand, the portfolio price bound $\Delta_X+\sigma_0\Delta_Y\le0$
rules out a strict net gain in either token without a net loss in
the other, and hence rules out risk-free profit from injected orders.
\end{proof}

\subsection{Public Values: A Layered Mechanism}
\label{sec:tsf-public-values}

This subsection considers the setting in which each user's value
in its true trading direction is public, while its budget remains
private. Users may create Sybil identities and over-report budgets,
but every order in their true trading direction must report their
public value.

Specifically, a true sell-$Y$ user of type $(v,q)$ may submit any
finite collection of sell-$Y$ identities, all with ask $v$ and
arbitrary declared budgets, with no restriction on their aggregate
declared budget. It may also submit sell-$X$ identities with
arbitrary asks and declared budgets; these asks are not constrained
by its public value. The symmetric restriction applies to a true
sell-$X$ user. These are restrictions on admissible strategies;
the mechanism does not observe which identities belong to the
same user. As in the original model, the true-budget constraint
on realized net sales remains unchanged.

We first give a common-value allocation rule for a general concave
curve, including the augmented curve's linear prefix. We then assign
each value layer a baseline determined by the reported budgets at
strictly lower asks. This baseline remains fixed in the incentive
comparison even when opposite-direction identities change the
crossing amount.

\paragraph{The common-value mechanism.}
Suppose the orders in a layer $K$ share value $v$ and face a curve $g$
with $g(0)=0$ that is continuous, strictly increasing, and concave.
Write $s:=g'_+(0)\in(0,\infty)$ and define the average compensation
\[
    a(t):=\begin{cases}g(t)/t,&t>0,\\s,&t=0.\end{cases}
\]
For the budget reports of this layer, put
\[
    B:=\sum_{i\in K}\reportq{i},\qquad
    D_i:=\sum_{j\in K\setminus\{i\}}\reportq{j},\qquad
    \phi_i(r):=r\bigl(a(D_i+r)-v\bigr).
\]
Choose the largest maximizer
\begin{equation*}
    s_i:=\max\arg\max_{0\le r\le\reportq{i}}\phi_i(r),
    \qquad \tau_i:=\phi_i(s_i),\qquad
    Q:=\sum_{i\in K}s_i,\qquad U:=\sum_{i\in K}\tau_i.
\end{equation*}
Here $s_i$ is a critical input quantity and $\tau_i$ is the order's
target utility; neither denotes a received token amount. Let $P$ be
the largest solution in $[Q,B]$ of
\begin{equation}
    g(P)-vP=U.
    \label{eq:tsf-public-clearing}
\end{equation}
If $B>Q$, assign
\begin{equation*}
    y_i=s_i+\frac{P-Q}{B-Q}(\reportq{i}-s_i),
    \qquad x_i=\tau_i+vy_i.
\end{equation*}
If $B=Q$, instead set $y_i=\reportq{i}$ and use the same formula for
$x_i$. In particular, an empty or zero-budget layer receives zero.

\begin{lemma}[Properties of the common-value mechanism]
\label{lem:tsf-public-kernel}
The common-value mechanism is well defined and satisfies the
one-sided settlement requirements, including linear completion.
Holding $g$ and the other reports fixed, a true user with value $v$
and budget $q$ cannot gain from budget misreports or same-value Sybil
identities whenever its aggregate realized input is at most $q$.
\end{lemma}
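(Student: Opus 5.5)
We organize the argument around the function $\Psi(t):=g(t)-vt$. It is concave, vanishes at $0$, and has right derivative $s-v$ at $0$. Since the layer's orders are eligible, $v\le s$. The quantity $\phi_i(r)$ is exactly the surplus order $i$ would obtain if it took an \emph{average-price} share of $g$ when the other orders spend their full budgets $D_i$.

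\textbf{Well-definedness.} The map $r\mapsto \phi_i(r)$ is continuous on $[0,\reportq{i}]$, so a largest maximizer $s_i$ exists and $\tau_i\ge\phi_i(0)=0$. We then need a solution of \eqref{eq:tsf-public-clearing} in $[Q,B]$, and by continuity it suffices to show $\Psi(Q)\ge U$ and $\Psi(B)\le U$.

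For the second inequality, $\Psi(B)=\sum_i \reportq{i}(a(B)-v)=\sum_i\phi_i(\reportq{i})\le U$. For the first, we use that $a$ is nonincreasing (by concavity) and $s_i\le \reportq{i}$, so $D_i+s_i\ge Q$. This gives
\[
\tau_i=s_i\bigl(a(D_i+s_i)-v\bigr)\le s_i\bigl(a(Q)-v\bigr),
\]
and summing yields $U\le \Psi(Q)$.

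The largest solution then lies in $[Q,B]$. Because $\Psi$ is concave and $\Psi(B)\le U\le\Psi(Q)$, $\Psi$ is nonincreasing past $P$ whenever $P<B$. The degenerate case $B=Q$ is handled separately.

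\textbf{Settlement requirements \eqref{eq:tsf-kernel-requirements}.} We check the conditions in order.
\begin{itemize}
\item Since $y_i$ interpolates between $s_i$ and $\reportq{i}$, we have $0\le y_i\le\reportq{i}$ and $\sum y_i=P$.
\item Summing $x_i=\tau_i+vy_i$ gives $\sum x_i=U+vP=g(P)$.
\item $x_i\ge vy_i$ because $\tau_i\ge0$.
\item For $x_i\le s y_i$, we use $\tau_i\le s_i(s-v)\le y_i(s-v)$, since $a\le s$.
\item For wLE, if $y_i<\reportq{i}$ then $P<B$. Maximality of $P$ then forces $\Psi'_+(P)\le0$, i.e.\ $g'_+(P)\le v$.
\end{itemize}

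\textbf{Linear completion.} Suppose $g(t)=st$ on $[0,L]$. If $v<s$, then $\phi_i$ is increasing as long as $D_i+r\le L$. We would show $P\ge\min\{L,B\}$ by noting that $\Psi$ is strictly increasing on $[0,L]$ while $\Psi(\min\{L,B\})\ge U$. This inequality should follow from the same average-price bound applied at points where $a=s$. If $v=s$, all $\tau_i=0$, and taking the largest maximizer gives $s_i=\reportq{i}$ as long as $D_i+\reportq{i}\le L$; the largest root $P$ of $\Psi=0$ then extends over the linear prefix. Anonymity is clear, since the rule depends only on the multiset of reports.

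\textbf{Incentive claim.} Fix $g$ and the other reports. Let the user have true budget $q$ and submit same-value identities $\playerids$ with total declared budget $C$ and total realized input $Y_{\playerids}\le q$. Its utility is $\sum_{i\in\playerids}\tau_i$.

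We want to bound this by the truthful $\tau$, namely $\max_{r\le q}\phi(r)$ with $D=D_{-}$ equal to the others' budget. The key inequality uses that $a$ is nonincreasing. Each identity $i\in\playerids$ has $D_i\ge D_-+(\text{other own budgets})$, so that
\[
\sum_{i\in\playerids}\tau_i\le\Bigl(\sum_i s_i\Bigr)\Bigl(a\bigl(D_-+\textstyle\sum_i s_i\bigr)-v\Bigr)=\phi\Bigl(\textstyle\sum_i s_i\Bigr),
\]
and $\sum_i s_i\le Y_{\playerids}\le q$. The right side is at most the truthful target $\tau$, which is the truthful utility.

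\textbf{Main obstacle.} The main obstacle is making this aggregation rigorous. It requires $D_i+s_i\ge D_-+\sum_{i\in\playerids}s_i$, which holds because the other own identities contribute full declared budgets to $D_i$ and these are at least their $s_j$. The delicate point is confirming that $\sum s_i\le$ realized input, which follows from $y_i\ge s_i$.
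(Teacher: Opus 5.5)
Your argument follows the paper's proof essentially step by step. Your $\Psi$ is the paper's $W$, and you establish $\Psi(B)\le U\le\Psi(Q)$ the same way. You derive weak local efficiency from the maximality of $P$, and handle $v=s$ in linear completion through $U=0$ and the largest root. The incentive claim uses the same aggregation $D_i+s_i\ge D+\sum_{j\in\playerids}s_j$ together with $y_i\ge s_i$.

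One intermediate claim in your linear-completion step for $v<s$ is false as stated: $\Psi(\min\{L,B\})\ge U$ need not hold. Take $v=0$. Every $\phi_i$ is then strictly increasing, so $s_i=\reportq{i}$, $Q=B$ and $U=g(B)$. If $B>L$, this gives $\Psi(L)=g(L)<g(B)=U$.

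The conclusion is still correct, and your own efficiency step repairs it at once; this is exactly how the paper argues. If $P<\min\{L,B\}$, then $P<B$, so $g'_+(P)\le v$. But $g'_+(P)=s>v$ on the linear prefix, a contradiction. Alternatively, split into two cases. If $Q\ge\min\{L,B\}$, then $P\ge Q$ settles it. If $Q<\min\{L,B\}$, then $U\le\Psi(Q)\le\Psi(\min\{L,B\})$ because $\Psi$ is increasing on $[0,L]$.

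Also, $v\le s$ does not follow from eligibility. On a residual curve, $s=G'_+(C_v)$ can lie below $\sigma_0$. The inequality comes from the activity condition \eqref{eq:tsf-public-activity}, or from the premise of \eqref{eq:tsf-kernel-requirements} that asks are at most $\gamma_0$. The paper also handles $v>s$ directly: there all $s_i=0$ and $P=0$.
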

\begin{proof}
Concavity makes $a$ nonincreasing and continuous, including at zero.
The critical maxima exist on compact intervals and are nonnegative.
Write $W(t):=g(t)-vt$. Evaluating each critical objective at the full
reported budget gives
\[
    W(B)=\sum_i\reportq{i}\bigl(a(B)-v\bigr)
         =\sum_i\phi_i(\reportq{i})\le U.
\]
Also
$D_i+s_i\ge Q$, so monotonicity of $a$ gives
\[
    U\le\sum_i s_i(a(Q)-v)=W(Q).
\]
When $Q=0$, both sides of this last inequality are zero.
Continuity now gives a solution of \eqref{eq:tsf-public-clearing}
in $[Q,B]$, and the solution set is compact. Thus its largest element
exists. The allocations satisfy $s_i\le y_i\le\reportq{i}$,
$\sum_i y_i=P$, and $\sum_i x_i=U+vP=g(P)$.

Every order has utility $\tau_i\ge0$. If $v\le s$, then
$\tau_i\le(s-v)s_i\le(s-v)y_i$, so $x_i\le sy_i$.
If $v>s$, every critical quantity is zero and the only clearing
solution is zero, giving the same bound. If $P<B$ and
$g'_+(P)>v$, the function $W$ is strictly above $U$ just to the right
of $P$. Since $W(B)\le U$, there must be a later clearing solution,
a contradiction. An individually unfilled order implies $P<B$,
so this proves weak local efficiency. The allocation formulas depend
only on the common value and reported budgets, establishing anonymity.

If $g(t)=st$ for $0\le t\le L$ and $v\le s$, a terminal point below
$\ell:=\min\{B,L\}$ is impossible. For $v<s$ it contradicts the
preceding efficiency argument. For $v=s$, concavity gives $W(t)\le0$
for all $t$, so $0\le U\le W(Q)$ implies $U=W(Q)=0$. If $Q\ge\ell$,
then $P\ge Q\ge\ell$. If $Q<\ell$, then $\ell\in[Q,B]$ and
$W(\ell)=0=U$, so $\ell$ is an admissible clearing solution and the
largest-solution convention gives $P\ge\ell$. This argument does not
require $L$ to be the maximal length of an initial linear segment.
If $B\le L$, the largest-maximizer convention also gives
$s_i=\reportq{i}$, hence full-budget initial-price outcomes.
When $v=0$, the function $r\mapsto r g(D_i+r)/(D_i+r)$ is
strictly increasing, with value zero at $r=0$; hence every critical
maximizer is its full reported cap, and every order is fully spent.

For incentives, let $D$ be the total declared budget of the other
orders in this layer. The truthful user's utility is
\[
    \max_{0\le r\le q}r\bigl(a(D+r)-v\bigr).
\]
For a deviating collection $I\subseteq K$ of same-value identities,
put $Q_I:=\sum_{i\in I}s_i$. Their realized input is at least $Q_I$,
so a deviation with realized input at most $q$ has $Q_I\le q$.
Moreover, $D_i+s_i\ge D+Q_I$ for every $i\in I$. Therefore
\[
    \sum_{i\in I}(x_i-vy_i)=\sum_{i\in I}\tau_i
       \le Q_I\bigl(a(D+Q_I)-v\bigr)
       \le\max_{0\le r\le q}r\bigl(a(D+r)-v\bigr).
\]
This proves the incentive claim without any restriction on the sum
of the deviating declared budgets.
\end{proof}

\paragraph{Layer baselines determined by reports.}
Group equal public asks into layers, with lower asks having
priority. 
On a one-sided effective curve $G$, define
\begin{equation*}
    K_v:=\{i:\reportv{i}=v\},\qquad
    B_v:=\sum_{i\in K_v}\reportq{i},\qquad
    C_v:=\sum_{i:\,\reportv{i}<v}\reportq{i}.
\end{equation*}
Only initially eligible orders are included. The baseline $C_v$ is
the sum of \emph{reported budgets} at strictly lower asks, not the
realized input of earlier layers. True budgets remain private, and
declared budgets may be changed or over-reported. In an incentive
comparison for a user with public ask $v$, this baseline stays unchanged
because none of its same-value identities enters the sum defining $C_v$.
The baseline can change when other users change their reports.
Declare the value-$v$ layer active if
\begin{equation}
    v\le G'_+(C_v).
    \label{eq:tsf-public-activity}
\end{equation}
An inactive layer receives zero. For an active layer, run the
common-value mechanism on $K_v$ using the residual curve
\begin{equation*}
    g_v(t):=G(C_v+t)-G(C_v).
\end{equation*}
Let $P_v$ denote its total realized input, and let $P:=\sum_v P_v$.
These definitions specify each layer independently from the report
profile. Equality in \eqref{eq:tsf-public-activity} is retained so that
orders at the initial price can fill a linear crossing prefix.

The next lemma explains why reported-budget baselines are consistent
with actual settlement: all active layers except possibly the last
spend their full reported budgets.

\begin{lemma}[Structure and settlement of the value layers]
\label{lem:tsf-public-structure}
Active asks form an initial segment of the increasing order of
public asks. Every active layer strictly below the largest active
ask spends its full reported budget. The resulting one-sided rule
satisfies \eqref{eq:tsf-kernel-requirements},
\eqref{eq:tsf-linear-completion}, and anonymity.
\end{lemma}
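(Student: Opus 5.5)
We argue directly from the activity condition \eqref{eq:tsf-public-activity}, the definition of $C_v$, and Lemma~\ref{lem:tsf-public-kernel} applied layer by layer. Throughout, $G$ is continuous, strictly increasing and concave, and $\gamma(t)=G'_+(t)$ is nonincreasing.

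\emph{Initial segment.} Suppose $v'<v$ are two eligible asks and the layer at $v$ is active. Then $C_{v'}\le C_v$, because every order counted in $C_{v'}$ has ask below $v'<v$. Monotonicity of $\gamma$ gives $v'<v\le\gamma(C_v)\le\gamma(C_{v'})$, so the layer at $v'$ is active as well. Hence the active asks are exactly those up to some largest active ask $v^\ast$.

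\emph{Full spending below the top.} Let $v<v^\ast$ be active, and let $v^+$ be the next ask present in the profile. Then $v^+$ is also active and $C_{v^+}=C_v+B_v$. Thus
\[
v<v^+\le\gamma(C_v+B_v),
\]
so the residual curve $g_v$ has $(g_v)'_+(t)=\gamma(C_v+t)>v$ for every $t\in[0,B_v]$. We then use the clearing step of Lemma~\ref{lem:tsf-public-kernel}. If $P_v<B_v$, the function $W(t)=g_v(t)-vt$ is strictly increasing on $[P_v,B_v]$, so $W(B_v)>W(P_v)=U$. This contradicts $W(B_v)\le U$, which was shown in that proof. Hence $P_v=B_v$, and by $s_i\le y_i\le\reportq{i}$ together with $\sum_i y_i=P_v$, every order in the layer is fully spent. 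This is the step we expect to need the most care. The issue is that $\gamma(C_v+B_v)\ge v^+>v$ holds only at the right endpoint, so we need the strict inequality $\gamma>v$ on the whole closed interval. This follows because $\gamma$ is nonincreasing, so $\gamma(t)\ge\gamma(C_v+B_v)>v$ for all earlier $t$.

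\emph{Consistency with actual settlement.} From the previous step, the realized inputs of the layers below $v^\ast$ add up exactly to $C_{v^\ast}$, and more generally to $C_v$ for each active $v$. So the residual curves $g_v(t)=G(C_v+t)-G(C_v)$ chain together without gaps. The total input is $P=C_{v^\ast}+P_{v^\ast}$, and telescoping gives
\[
\sum_i x_i=\sum_v\bigl(G(C_v+P_v)-G(C_v)\bigr)=G(P).
\]
Each layer's rule satisfies $v\,y_i\le x_i\le g_v{}'_+(0)\,y_i\le\gamma_0 y_i$ by Lemma~\ref{lem:tsf-public-kernel}, and $0\le y_i\le\reportq{i}$ holds as well.

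\emph{Weak local efficiency.} Unfilled orders can occur only in the top layer $v^\ast$ or in inactive layers. In the top layer, the kernel lemma gives $v^\ast\ge(g_{v^\ast})'_+(P_{v^\ast})=\gamma(P)$. An inactive layer $v>v^\ast$ satisfies $v>\gamma(C_v)$, and since $C_v\ge C_{v^\ast}+B_{v^\ast}\ge P$, concavity gives $\gamma(C_v)\le\gamma(P)$. Hence $v>\gamma(P)$ in that case too.

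\emph{Linear completion.} Suppose $G(t)=\gamma_0t$ on $[0,L]$. If $P<\min\{L,\sum_i\reportq{i}\}$, two cases arise. If some layer is inactive, it must satisfy $C_v\ge P$, and then $v>\gamma(C_v)\ge\gamma_0$ whenever $C_v<L$; this contradicts eligibility, so every layer with $C_v<L$ is active. Otherwise the top layer's residual curve is linear with slope $\gamma_0$ up to $L-C_{v^\ast}$, and the kernel's linear completion gives $P_{v^\ast}\ge\min\{L-C_{v^\ast},B_{v^\ast}\}$. In either case we contradict $P<\min\{L,\sum_i\reportq{i}\}$.

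\emph{Anonymity.} Layers, baselines and allocations depend only on the multiset of reported asks and budgets, so anonymity is immediate.
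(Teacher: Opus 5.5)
Your initial-segment argument, the full spending of active layers below $v^\ast$, and the telescoping to $G(P)$ are correct. For full spending you go through the next ask $v^+$ and re-run the kernel's clearing contradiction, where the paper cites the kernel's local efficiency $u\ge G'_+(C_u+P_u)$ directly; the two are equivalent. The genuine gap is in your weak-local-efficiency paragraph for \emph{inactive} layers.

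There you combine $v>\gamma(C_v)$ (inactivity) with $\gamma(C_v)\le\gamma(P)$ (from $C_v\ge P$) and conclude $v>\gamma(P)$. That inference is invalid, because the monotonicity runs the wrong way. Inactivity bounds $v$ below only by the marginal compensation at $C_v$, which lies further out on the curve and is therefore smaller than $\gamma(P)$. Nothing then prevents $v<\gamma(P)$. Nor is $C_v>P$ a corner case. Baselines count \emph{reported} budgets, so $C_v>P$ whenever the top active layer leaves budget unspent, or whenever some other positive-budget inactive layer lies below $v$. This is exactly the subtlety the paper flags when it says its argument ``does not assume that an inactive layer's reported baseline equals the terminal input.''

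The fix is to compare with the \emph{first} inactive layer, not the layer at hand. Let $w$ be the smallest inactive ask, which is the ask right after $v^\ast$; every inactive $v$ has $v\ge w$.
\begin{itemize}
\item If the top active layer leaves an order unfilled, the kernel gives $v^\ast\ge\gamma(P)$, hence $v\ge w>v^\ast\ge\gamma(P)$.
\item Otherwise every active layer is fully spent and inactive layers receive nothing. Then $P=C_{v^\ast}+B_{v^\ast}=C_w$ exactly, and inactivity gives $w>\gamma(C_w)=\gamma(P)$.
\end{itemize}
The paper packages the same idea by taking $u$ to be the smallest ask with an unfilled order. If $u$ is active it must be the top layer, so local efficiency applies. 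If $u$ is inactive, all lower layers are fully spent and all higher layers are inactive, so $P=C_u$ and $u>G'_+(P)$. Every unfilled ask is at least $u$.

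Separately, your linear-completion paragraph has the right ingredients, but the ``two cases'' are not really a case split, and the contradiction is not spelled out. Combine them explicitly. From $P=C_{v^\ast}+P_{v^\ast}\ge\min\{L,\,C_{v^\ast}+B_{v^\ast}\}$ and $P<L$, the top layer must be fully spent. Since $P<\sum_v B_v$, a next ask exists, and its baseline is $P<L$. By your observation that every layer with baseline below $L$ is active, that next layer is active, contradicting the maximality of $v^\ast$.
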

\begin{proof}
If $v$ is active and $u<v$, then $C_u\le C_v$ and concavity gives
\[
    G'_+(C_u)\ge G'_+(C_v)\ge v>u.
\]
Thus $u$ is active. Next suppose $u<v$ are both active and the
$u$-layer leaves an order unfilled. The common-value mechanism's
local efficiency gives
\[
    u\ge G'_+(C_u+P_u).
\]
But $P_u\le B_u$ and $C_u+B_u\le C_v$, so
\[
    G'_+(C_u+P_u)\ge G'_+(C_v)\ge v,
\]
a contradiction. Hence every active layer before the largest active
ask is fully spent.

If there is an active layer, let $v_*$ be the largest active ask.
All smaller-ask layers are fully spent and all larger-ask layers
receive zero. Consequently
\[
    P=C_{v_*}+P_{v_*}.
\]
The output increments of the fully spent layers telescope to
$G(C_{v_*})$; the last layer adds
$G(C_{v_*}+P_{v_*})-G(C_{v_*})$. Thus total output is $G(P)$.
If there is no active layer, both totals are zero.
The reported-budget constraints and individual rationality carry over
immediately from each layer. The output rate of each layer is at most
$G'_+(C_v)\le G'_+(0)$. Layers and their baselines depend only on
reported asks and budgets, so anonymity follows from that of the
common-value mechanism.

For global weak local efficiency, suppose some order is unfilled,
and let $u$ be the smallest ask with this property. All preceding
layers are fully spent. If the $u$-layer is active, it is the last
active layer, so $P=C_u+P_u$ and its local efficiency gives
$u\ge G'_+(P)$. If it is inactive, all later layers are inactive;
since every preceding layer is fully spent, $P=C_u$. Inactivity
then gives $u>G'_+(P)$. In either case every
unfilled order has ask at least $G'_+(P)$. This argument does not
assume that an inactive layer's reported baseline equals the
terminal input.

Finally, suppose $G(t)=\gamma_0t$ on $[0,L]$, and all input asks
are at most $\gamma_0$. If $P<\min\{L,\sum_v B_v\}$, let $u$ again
be the first unfilled layer. Its baseline satisfies $C_u\le P<L$,
so it is active. Its residual curve is linear up to $L-C_u$.
The common-value mechanism therefore gives
$P_u\ge\min\{B_u,L-C_u\}$, contradicting both $P_u<B_u$ and
$C_u+P_u=P<L$. This proves linear completion.
\end{proof}

\paragraph{A user's truthful utility.}
Fix the other eligible sell-$Y$ reports and a public ask $v$.
Let $C$ be their total budget at strictly lower asks and $D$ their
total budget at ask $v$. Neither quantity changes when the user
changes its own budget, splits into same-value identities, or
submits opposite-direction identities. On the augmented curve, put
\[
    g_{M,C}(t):=G_M^Y(C+t)-G_M^Y(C).
\]
The truthful utility of $Y(v,q)$ is
\begin{equation}
    T(M;v,q)
       :=\max_{0\le r\le q}
           r\left(\frac{g_{M,C}(D+r)}{D+r}-v\right).
    \label{eq:tsf-public-truthful-utility}
\end{equation}
At $D+r=0$, the average compensation is interpreted as
$g'_{M,C,+}(0)$. If the layer is active, this is its critical utility.
If the layer is inactive, its initial marginal compensation is
below $v$, so the displayed maximum is zero and still equals its
actual utility.

For same-direction deviations the entire residual curve is fixed.
The common-value incentive argument therefore applies, including to
arbitrary aggregate budget over-reports whose realized sale respects
the true budget. Opposite-direction identities also change $M$.
We next bound the benefit from this change.

\begin{lemma}[The value of additional crossing under public values]
\label{lem:tsf-public-crossing}
Fix any finite profile of other eligible sell-$Y$ reports, and write
$A$ for their total budget. For all $M,d,q\ge0$ and
$0\le v\le\sigma_0$, the truthful utility in
\eqref{eq:tsf-public-truthful-utility} satisfies
\begin{align}
    T(M+d;v,q+d)
       &\le T(M;v,q)+(\sigma_0-v)d,
       \label{eq:tsf-public-crossing-bound}\\
    T(M;v,q)
       &\ge(\sigma_0-v)\min\{q,(M-A)_+\}.
       \label{eq:tsf-public-prefix-lower}
\end{align}
\end{lemma}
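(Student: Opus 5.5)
The plan is to prove the lower bound \eqref{eq:tsf-public-prefix-lower} directly from the linear prefix of $G_M^Y$. The crossing bound \eqref{eq:tsf-public-crossing-bound} will come from a shift identity relating $G_{M+d}^Y$ to $G_M^Y$. Throughout, $C$ and $D$ are the other users' budgets at asks below $v$ and equal to $v$, so $C+D\le A$.

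\emph{Lower bound.} Take $r:=\min\{q,(M-A)_+\}\le q$. Then $C+D+r\le A+(M-A)_+$. If $r>0$ this is at most $M$, so on $[C,C+D+r]$ the augmented curve has the constant slope $\sigma_0$. Hence $g_{M,C}(D+r)=\sigma_0(D+r)$ and the average compensation is $\sigma_0$. The objective in \eqref{eq:tsf-public-truthful-utility} then equals $r(\sigma_0-v)$, and the maximum $T(M;v,q)$ is at least this. The case $r=0$ is trivial because $T\ge0$.

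\emph{Shift identity.} Directly from the definitions,
\[
G_{M+d}^Y(t)=\sigma_0 t\quad(0\le t\le d),\qquad G_{M+d}^Y(t)=\sigma_0 d+G_M^Y(t-d)\quad(t\ge d),
\]
since both sides are linear up to $M+d$ and follow $F$ shifted by $M+d$ afterwards. Write $\Phi_N(r):=r\,g_{N,C}(D+r)/(D+r)$ for the compensation of an order of size $r$ in the layer. The target is then the pointwise bound
\[
\Phi_{M+d}(r)-vr\le \max_{0\le r'\le q}\bigl(\Phi_M(r')-vr'\bigr)+(\sigma_0-v)d
\qquad\text{for all } 0\le r\le q+d .
\]
For $r\le d$ this is immediate: average compensation never exceeds $\sigma_0$, so the left side is at most $(\sigma_0-v)r\le(\sigma_0-v)d$, and $T(M;v,q)\ge0$. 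For $r>d$ I would set $r':=r-d\le q$ and aim to prove
\[
\Phi_{M+d}(r)\le\Phi_M(r')+\sigma_0 d .
\]
Subtracting $vr=vr'+vd$ from both sides then yields the claim.

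\emph{Main obstacle.} The hard step is the inequality $\Phi_{M+d}(r)\le\Phi_M(r')+\sigma_0 d$. The difficulty is that the layer shares output in proportion $r/(D+r)$, so the extra $d$ units of initial-price crossing are diluted among the other $D$ units of budget. The shift identity does not translate literally: it moves the whole curve by $d$, not just the user's portion. My first attempt is the following decomposition. Write $g_{M+d,C}(D+r)=g_{M,C}(D+r)+\delta$, where
\[
\delta:=\bigl[G_{M+d}^Y-G_M^Y\bigr](C+D+r)-\bigl[G_{M+d}^Y-G_M^Y\bigr](C).
\]
The function $G_{M+d}^Y-G_M^Y$ is nondecreasing and bounded by $\sigma_0 d$, so $\delta\in[0,\sigma_0 d]$ and the user's share of $\delta$ is at most $\sigma_0 d$. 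It then remains to show $\Phi_M(r)\le\Phi_M(r-d)+\bigl[\text{unused part of }\sigma_0 d\bigr]$. Here I would use that average compensation is nonincreasing and that $\delta$ is large exactly where $\Phi_M$ is increasing steeply.

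If this bookkeeping fails, the fallback is a clock argument like the one in Lemma~\ref{lem:tsf-proportional-utility-bounds}. Parameterize by the fraction of each order's budget spent: the extra crossing consumes an initial fraction at rate $\sigma_0$ and otherwise shifts the marginal-compensation profile, which yields the $(\sigma_0-v)d$ bound.
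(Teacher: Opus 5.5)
Your lower bound and the case $r\le d$ of the crossing bound are correct and match the paper. Your reduction of the case $r>d$ to $\Phi_{M+d}(r)\le\Phi_M(r-d)+\sigma_0 d$ is also exactly the paper's key step. But that inequality is the entire content of the crossing bound, and you leave it unproved.

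\textbf{Why your two routes do not close.} The decomposition you sketch keeps only $\delta\in[0,\sigma_0 d]$. In the worst case this would require $\Phi_M(r)-\Phi_M(r-d)\le\frac{D}{D+r}\sigma_0 d$. That fails whenever $g_{M,C}$ is linear on $[0,D+r]$, since the left side is then exactly $\sigma_0 d$. In that case the true $\delta$ is $0$, so you need its exact value, not a bound. Your heuristic also runs the wrong way. By your own shift identity, $[G_{M+d}^Y-G_M^Y](s)=\sigma_0 d-[G_M^Y(s)-G_M^Y(s-d)]$. So $\delta$ is large where $G_M^Y$ is flat, which is where $\Phi_M$ increases slowly, not steeply. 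The clock fallback does not transfer either. The layer's utility is a static maximum over $r$, not a dynamic execution. Moreover, when $C\ge M+d$ the inserted crossing lies entirely below the baseline $C$. The residual curve then gains no initial rate-$\sigma_0$ segment: its first $d$ units yield only $\sigma_0 d-\kappa$, with $\kappa$ as defined below.

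\textbf{The missing idea.} The shift identity \emph{does} translate to the residual curves once $C$ and $D$ are held fixed, because raising the user's report by $d$ raises the layer's total input by exactly $d$. Set $\kappa:=G_{M+d}^Y(C)-G_M^Y(C)\ge0$. Your identity then gives $g_{M+d,C}(t+d)=g_{M,C}(t)+\sigma_0 d-\kappa$. For $r>d$, put $r'=r-d$ and $t=D+r'>0$. Direct algebra gives
\[
\Phi_M(r')+\sigma_0 d-\Phi_{M+d}(r)=\frac{dD\bigl(\sigma_0 t-g_{M,C}(t)\bigr)}{t(t+d)}+\frac{r\kappa}{t+d}\ge 0,
\]
where nonnegativity uses $g_{M,C}(t)\le\sigma_0 t$.

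The first term is precisely the dilution you were worried about. The user's share rises from $r'/t$ to $r/(t+d)$, which gains at most $\sigma_0 dD/(t+d)$ on the base output. That is exactly the portion of the extra $\sigma_0 d$ that goes to the other $D$ units. With this identity in place, your outline becomes the paper's proof.
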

\begin{proof}
Keep the reported-budget baseline $C$ and same-value outside budget
$D$ fixed. Extending the initial-price prefix gives, for $t\ge0$,
\[
    G_{M+d}^Y(t+d)=\sigma_0d+G_M^Y(t),
    \qquad G_{M+d}^Y(t)\ge G_M^Y(t).
\]
The second inequality follows from concavity: replacing a portion
of the compensation curve by initial-price trades cannot reduce
output. Set
\[
    \kappa:=G_{M+d}^Y(C)-G_M^Y(C)\ge0.
\]
The two residual curves then satisfy the exact identity
\begin{equation}
    g_{M+d,C}(t+d)=g_{M,C}(t)+\sigma_0d-\kappa.
    \label{eq:tsf-public-residual-shift}
\end{equation}

Let $\Phi_M(r)$ denote the objective in
\eqref{eq:tsf-public-truthful-utility}. For any candidate
$r'\in[0,q+d]$, if $r'<d$, the initial-price bound gives
\[
    \Phi_{M+d}(r')\le(\sigma_0-v)r'
       \le(\sigma_0-v)d
       \le T(M;v,q)+(\sigma_0-v)d.
\]
Otherwise write $r'=r+d$, where $0\le r\le q$, and set $t:=D+r$.
For $t>0$, using \eqref{eq:tsf-public-residual-shift} gives
\begin{align*}
    &\Phi_M(r)+(\sigma_0-v)d-\Phi_{M+d}(r+d)\\
    &\qquad=
       \frac{dD\bigl(\sigma_0t-g_{M,C}(t)\bigr)}{t(t+d)}
       +\frac{(r+d)\kappa}{t+d}\ge0.
\end{align*}
Here $g_{M,C}(t)\le\sigma_0t$. If $t=0$, then $D=r=0$ and the
same conclusion follows directly from the initial-price bound at
$r'=d$, including $d=0$. Maximizing over $r'$ proves
\eqref{eq:tsf-public-crossing-bound}. In particular, the bound also
covers a layer that is inactive before the prefix is extended.

For the lower bound, $C+D\le A$. If $M>A$, the critical quantity
\[
    r_0:=\min\{q,M-A\}
\]
satisfies $C+D+r_0\le M$. The relevant residual segment lies entirely
inside the initial-price prefix, so
$\Phi_M(r_0)=(\sigma_0-v)r_0$. If $M\le A$, nonnegative truthful
utility gives the required zero lower bound.
\end{proof}

\begin{theorem}[Feasibility with public values]
\label{thm:tsf-public-values}
The two-sided mechanism obtained by combining the initial-price matching construction
with the one-sided public-value layered mechanism satisfies incentive
compatibility under the public-value strategy restriction stated above. This includes
arbitrary finite collections of identities in both directions and
arbitrary aggregate budget over-reporting, with no restriction on
the asks of opposite-direction identities. The mechanism also
satisfies the properties in Lemma~\ref{lem:tsf-settlement}.
\end{theorem}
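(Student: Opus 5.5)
The proof follows the template of Theorem~\ref{thm:tsf-no-overreport}. First, Lemma~\ref{lem:tsf-public-structure} supplies \eqref{eq:tsf-kernel-requirements}, \eqref{eq:tsf-linear-completion}, and anonymity for the layered rule on either augmented curve, so Lemma~\ref{lem:tsf-settlement} delivers every property except incentive compatibility. Fix a true sell-$Y$ user of type $(v,q)$ and all other reports. Let $A$ be the other users' total eligible sell-$Y$ budget and $M$ be $\rho_0$ times their eligible sell-$X$ budget. The portfolio price bound \eqref{eq:tsf-portfolio-price-bound} gives \eqref{eq:tsf-finite-utility-bound} exactly as before: every finite-utility deviation has net sale $0\le n\le q$ and utility at most $(\sigma_0-v)n$. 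So we may assume $q>0$ and $v<\sigma_0$. The truthful utility is $T(M;v,q)$ even when sell-$Y$ is the minority side, because linear completion reproduces the full-budget initial-price outcomes on the fixed curve $G_M^Y$ (the footnote argument).

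Next, analyze a deviation with same-direction identities of total declared budget $C'$, all with ask $v$ (unrestricted in size), together with opposite-direction identities; let $d$ be $\rho_0$ times their eligible sell-$X$ budget. \emph{Case 1: sell-$Y$ remains dominant after the deviation.} The sell-$X$ identities are fully filled at the initial price, contributing net $d$ units of $Y$ received and $\sigma_0 d$ units of $X$ sold. The sell-$Y$ identities face $G_{M+d}^Y$ with unchanged baseline $C$ and same-layer outside budget $D$. Their realized input is $y_{\mathrm{sell}}=n+d\le q+d$. Apply the incentive part of Lemma~\ref{lem:tsf-public-kernel} on the residual curve $g_{M+d,C}$, viewing the user as having budget $q+d$; this bounds the $Y$-side utility by $T(M+d;v,q+d)$ (also when the layer is inactive, since the utility is then zero). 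Hence the total utility is at most $T(M+d;v,q+d)-(\sigma_0-v)d\le T(M;v,q)$ by \eqref{eq:tsf-public-crossing-bound}. \emph{Case 2: sell-$X$ becomes dominant.} The sell-$Y$ identities trade their full declared budget $C'$ at $\sigma_0$. Put $e:=\rho_0 S_X(\playerids)$, and use linear completion of the reverse mechanism to get $e\ge(A+C'-M)_+$. As before, utility is at most $(\sigma_0-v)(C'-e)$, and with $n\le q$ this is at most $(\sigma_0-v)\min\{q,(M-A)_+\}\le T(M;v,q)$ by \eqref{eq:tsf-public-prefix-lower}.

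Finally, ordering is irrelevant, so builders-as-users are covered. A builder without demand gains nothing by the portfolio price bound. Swapping tokens handles sell-$X$ users.

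The main obstacle is Case 1: justifying that a user who over-reports and then buys back $Y$ through opposite identities is correctly captured by the ``virtual budget'' $q+d$. The realized sell-$Y$ input exceeds $q$ by the buyback amount, and Lemma~\ref{lem:tsf-public-kernel} requires realized input at most the budget. The plan is to check that $y_{\mathrm{sell}}\le q+d$ follows from $n\le q$, that the bound $Q_I\le y_{\mathrm{sell}}$ from the lemma's proof still applies, and that an inactive layer on $G_{M+d}^Y$ poses no issue. If the opposite identities are only partially filled in some edge case, that case would be handled by the Case 2 bound instead.
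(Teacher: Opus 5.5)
Your proposal is correct and follows the paper's proof essentially step for step. It uses the same reduction via the portfolio price bound to $q>0$, $v<\sigma_0$ and the same two cases. In the first case you apply the common-value incentive argument at the virtual cap $q+d$, justified as in the paper by the bound $n+d\le q+d$ on realized rather than declared input, and then \eqref{eq:tsf-public-crossing-bound}. In the second you combine the linear-completion bound $e\ge(A+C'-M)_+$ with $n\le q$ and \eqref{eq:tsf-public-prefix-lower}.

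Your closing hedge about partially filled opposite identities is unnecessary. In the Sell-$Y$-dominated branch, every eligible sell-$X$ order is assigned its full-budget initial-price trade by construction.
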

\begin{proof}
Lemma~\ref{lem:tsf-public-structure} verifies the one-sided
requirements, so Lemma~\ref{lem:tsf-settlement} gives the settlement
properties and the portfolio price bound.

Fix a true sell-$Y$ user of type $(v,q)$ and all other reports.
The portfolio-bound argument giving \eqref{eq:tsf-finite-utility-bound}
applies here as well: every deviation of finite utility has net $Y$
sale $n\in[0,q]$ and utility at most $(\sigma_0-v)n$.
Together with truthful individual rationality, this rules out a
profitable deviation when $q=0$ or $v\ge\sigma_0$. It therefore
suffices to consider $q>0$, $v<\sigma_0$, and a deviation of finite
utility.

Let $A$ be the other users' total eligible sell-$Y$ budget and let
$M$ be $\rho_0$ times their total eligible sell-$X$ budget.
Truthful utility is $T(M;v,q)$ even when the truthful order is on
the minority side. Indeed, if $A+q<M$, linear completion makes
the one-sided execution on $G_M^Y$ give every eligible sell-$Y$
order its full-budget initial-price outcome, exactly as in the
two-sided construction.

Suppose first that sell-$Y$ is the majority side after the
deviation. Write
\[
    d:=\rho_0\sum_{j\in\playerids\cap\Xside_0}\reportr{j}.
\]
The user's eligible sell-$X$ identities receive $d$ units of $Y$
and spend $\sigma_0d$ units of $X$. Its sell-$Y$ identities face
$G_{M+d}^Y$, and their aggregate realized input is at most $q+d$.
They all have ask $v$, so the lower-ask reported baseline $C$
and same-value outside budget $D$ are unchanged. If the layer is
active, the common-value incentive argument with cap $q+d$ gives
\[
    \sum_{i\in\playerids\cap\Yside}(x_i-vy_i)
       \le T(M+d;v,q+d).
\]
This uses the bound on aggregate \emph{realized} input, not a bound
on the sum of declared budgets. If the layer is inactive, its
identities receive zero, and $T(M+d;v,q+d)=0$ because activity
depends on $C$ and $v$, not on the user's same-value budget reports.
The opposite-direction trades
contribute $-(\sigma_0-v)d$ to utility, and
\eqref{eq:tsf-public-crossing-bound} yields
\[
    \Util^Y(\playerids;v,q)
       \le T(M+d;v,q+d)-(\sigma_0-v)d
       \le T(M;v,q).
\]

Suppose instead that sell-$X$ is the majority side. Let $b$ be the
user's aggregate eligible sell-$Y$ declared budget, which is now
fully spent at the initial price. Put
\[
    e:=\rho_0 S_X(\playerids),\qquad
    r:=R_Y(\playerids)\le e.
\]
The initial-price bound gives the inequality $r\le e$. Its utility
therefore satisfies
\[
    \Util^Y(\playerids;v,q)
       =\sigma_0 b-\sigma_0 e-v(b-r)
       \le(\sigma_0-v)(b-e).
\]
The reverse mechanism must complete a crossing prefix of length
$\sigma_0(A+b)$. Since the other users can supply at most
$\sigma_0M$ units of $X$, this requires
$e\ge(A+b-M)_+$. Hence
\[
    b-e\le b-(A+b-M)_+
         =\min\{b,M-A\}\le M-A.
\]
Consequently
\[
    \Util^Y(\playerids;v,q)
       \le(\sigma_0-v)(M-A).
\]
Independently, \eqref{eq:tsf-finite-utility-bound} gives utility
at most $(\sigma_0-v)q$. These two bounds imply
\[
    \Util^Y(\playerids;v,q)
       \le(\sigma_0-v)\min\{q,(M-A)_+\}
       \le T(M;v,q),
\]
where the final inequality is
\eqref{eq:tsf-public-prefix-lower}. This case does not assume
$b\le q$: the public-value setting permits arbitrary declared
budgets, and only net realized expenditure is constrained.

Exchanging $X$ and $Y$, and using $H$ and $\rho_0$ in place of
$F$ and $\sigma_0$, proves the result for true sell-$X$ users.
\end{proof}

\section*{AI Acknowledgements}
This work began in December, 2025. 
The authors developed the shared two-sided matching construction and the first two mechanisms: 
one for the setting without Sybil identities and one for the setting without aggregate budget over-reporting.
The authors also initially proved an impossibility result under a stronger efficiency requirement than weak local efficiency (wLE). 
In July 2026, ChatGPT 5.6 developed the mechanism for the public-value setting, but failed to strengthen the impossibility result. 
ChatGPT-6-Astra subsequently produced a proof establishing impossibility under wLE. 
The authors carefully reviewed and substantially revised all AI-generated text.

\bibliographystyle{alpha}
\bibliography{refs}

\end{document}